\documentclass{vutinfth}
\usepackage[T1]{fontenc}
\usepackage{amsmath}
\usepackage{amsthm}
\usepackage{amssymb}
\usepackage{latexsym}
\usepackage{bussproofs}
\usepackage{upgreek}
\usepackage{enumerate}
\usepackage{rotating}
\usepackage{graphicx}
%\usepackage[libertine,cmintegrals,cmbraces,vvarbb]{newtxmath}

%\newsubfloat{figure}

\usepackage{libertine}
\usepackage{inconsolata}
\usepackage[style=alphabetic,backend=bibtex]{biblatex}
\addbibresource{bibliografia.bib}

\usepackage{hyperref}
\usepackage{cleveref}

%% Nuovi comandi Matteo
\newcommand{\haemuno}{$\mathsf{HA+EM}_1$ }
\newcommand{\haemeno}{$\mathsf{HA+EM}_1^{-}$ }

\newcommand{\emuno}{$\textsc{em}_1$ }
\newcommand{\emeno}{$\textsc{em}_1^{-}$ }
\newcommand{\emme}{$\textsc{em}$ }
\newcommand{\emmeno}{$\textsc{em}^{-}$ }

\newcommand{\mrk}{$\textsc{mrk}$}

%% Nuovi comandi Federico (originali)
%\newcommand{\comment}[1]{}
%\newcommand{\nit}{{\mathscr{I}\hspace{-0.65ex}\mathscr{I}}}
%\newcommand{\nif}{\mathsf{if}\hspace{-1ex}\mathsf{if}}
\newcommand{\EM}                       { {\mathsf{EM}} }
\newcommand{\PA}                       { {\mathsf{PA}} }
\newcommand{\IL}                       { {\mathsf{IL}} }
\newcommand{\HA}                       { {\mathsf{HA}} }
\newcommand{\Language}                 {\mathcal{L}}
\newcommand{\E}[3]                   {{ #2 \parallel_{#1} #3}}

\newcommand{\inj}                   {{{\upiota}}}
\newcommand{\emp}[1]    {{\mathsf{P}_{#1}}}
\newcommand{\hyp}[3]                  {{\mathtt{H}_{#1}^{\forall {#2} \mathsf{#3}}}}
\newcommand{\Hyp}[2]                   {{\mathtt{H}^{\forall {#2} \mathsf{#1}}}}
\newcommand{\wit}[3]                 {{\mathtt{W}_{#1}^{\exists {#2} \mathsf{#3}^\bot}}}

\newcommand{\real}              {\, \Vdash\, }
\newcommand{\econt}[1]           {\mathcal{EM}[#1]}
\newcommand{\cruno}           {{\textbf{(CR1)}}}
\newcommand{\crdue}           {{\textbf{(CR2)}}}
\newcommand{\crtre}           {{\textbf{(CR3)}}}
\newcommand{\crquattro}           {{\textbf{(CR4)}}}
\newcommand{\crcinque}           {{\textbf{(CR5)}}}

\newcommand{\Nat}                      { {\texttt N} }

\newcommand{\sn}{\mathsf{SN}}
\newcommand{\rec}                          {{\mathsf{R}}}
\newcommand{\True}                     { {\texttt{True}} }
\newcommand{\False}                    { {\texttt{False}} }
\newcommand{\suc}{\mathsf{S}}

\newcommand{\pair}[2]{\langle #1,#2\rangle}

\newcommand{\nf}{\mathsf{NF}}
\newcommand{\postnf}{\mathsf{PNF}}

\newcommand{\evaluates}       { {\;\equiv\;} }

%% IL+EM

\theoremstyle{definition}
\newtheorem{definition}{Definition}
\newtheorem{lemma}{Lemma}

\theoremstyle{plain}
\newtheorem{thm}{Theorem}
\newtheorem{proposition}{Proposition}
\newtheorem*{corollary}{Corollary}

%%%%% SETUP FOR VUTINFTH

% Define convenience functions to use the author name and the thesis title in the PDF document properties.
\newcommand{\authorname}{Matteo Manighetti} % The author name without titles.
\newcommand{\thesistitle}{Computational interpretations of Markov's principle} % The title of the thesis. The English version should be used, if it exists.
\newcommand{\thesistitlede}{Berechnungsinterpretationen des Markov-Prinzips}

% Set PDF document properties
\hypersetup{
    pdfpagelayout   = TwoPageRight,           % How the document is shown in PDF viewers (optional).
%    linkbordercolor = {Melon},                % The color of the borders of boxes around crosslinks (optional).
    pdfauthor       = {\authorname},          % The author's name in the document properties (optional).
    pdftitle        = {\thesistitle},         % The document's title in the document properties (optional).
    pdfsubject      = {Computational interpretation of proofs},              % The document's subject in the document properties (optional).
    pdfkeywords     = {proof theory, logic, computer science, constructive mathematics} % The document's keywords in the document properties (optional).
}

\setsecnumdepth{subsection} % Enumerate subsections.

\nonzeroparskip             % Create space between paragraphs (optional).

\makeindex      % Use an optional index.
%\makeglossaries % Use an optional glossary.
%\glstocfalse   % Remove the glossaries from the table of contents.

% Set persons with 4 arguments:
%  {title before name}{name}{title after name}{gender}
%  where both titles are optional (i.e. can be given as empty brackets {}).
\setauthor{}{\authorname}{}{male}
\setadvisor{Privatdoz. Dipl.-Ing Dr. techn.}{Stefan Hetzl}{}{male}
\setfirstassistant{Dr.}{Federico Aschieri}{}{male}

% For bachelor and master theses:
%\setsecondassistant{Pretitle}{Forename Surname}{Posttitle}{male}
%\setthirdassistant{Pretitle}{Forename Surname}{Posttitle}{male}

% For dissertations:
\setfirstreviewer{Pretitle}{Forename Surname}{Posttitle}{male}
\setsecondreviewer{Pretitle}{Forename Surname}{Posttitle}{male}

% For dissertations at the PhD School:
\setsecondadvisor{Pretitle}{Forename Surname}{Posttitle}{male}

% Required data.
\setaddress{Address}
\setregnumber{1528764}
\setdate{30}{09}{2016} % Set date with 3 arguments: {day}{month}{year}.
\settitle{\thesistitle}{\thesistitlede} % Sets English and German version of the title (both can be English or German).
\setsubtitle{}{} % Sets English and German version of the subtitle (both can be English or German).

% Select the thesis type: bachelor / master / doctor / phd-school.
% Bachelor:
%\setthesis{bachelor}
%
% Master:
\setthesis{master}
\setmasterdegree{master} % dipl. / rer.nat. / rer.soc.oec. / master
%
% Doctor:
%\setthesis{doctor}
%\setdoctordegree{rer.soc.oec.}% rer.nat. / techn. / rer.soc.oec.
%
% Doctor at the PhD School
%\setthesis{phd-school} % Deactivate non-English title pages (see below)

% For bachelor and master:
\setcurriculum{Computational Logic}{Computational Logic} % Sets the English and German name of the curriculum.

% For dissertations at the PhD School:
%\setfirstreviewerdata{Affiliation, Country}
%\setsecondreviewerdata{Affiliation, Country}

% Define convenience macros.
 % Comment for the final version, to raise errors.

\begin{document}

\frontmatter % Switches to roman numbering.
% The structure of the thesis has to conform to
%  http://www.informatik.tuwien.ac.at/dekanat

%\addtitlepage{naustrian} % German title page (not for dissertations at the PhD School).
\addtitlepage{english} % English title page.
\addstatementpage

\begin{abstract}
In this thesis we are concerned with Markov's principle, a statement that originated in the Russian school of Constructive Mathematics and stated originally that ``if it is impossible that an algorithm does not terminate, then it will terminate''. This principle has been adapted to many different contexts, and in particular we are interested in its most common version for arithmetic, which can be stated as ``given a total recursive function $f$, if it is impossible that a there is no $n$ for which $f(n)=0$, then there exists an $n$ such that $f(n)=0$''. This is in general not accepted in constructivism, where stating an existential statement requires one to be able to show at request a witness for the statement: here there is no clear way to choose such an $n$.

We introduce more in detail the context of constructive mathematics from different points of view, and we show how they are related to Markov's principle. In particular, several \emph{realizability} semantics are presented, which provide interpretations of logical systems by means of different computational concepts (mainly, recursive functions and lambda calculi). This field of research gave origin to the well known paradigm often called \emph{Curry-Howard isomorphism}, or also \emph{propositions as types}, that states a correspondence between proofs in logic and programs in computer science. Thanks to this the field of proof theory, that is the metamathematical investigations of proofs as mathematical objects, became of interest for computer science and in particular for the study of programming languages.

By using modern research on the Curry-Howard isomorphism, we will obtain a more refined interpretation of Markov's principle. We will then use this results to investigate the logical properties of systems related to the principle.
\end{abstract}

% Select the language of the thesis, e.g., english or naustrian.
\selectlanguage{english}

\tableofcontents

\mainmatter
\chapter{Introduction}
\label{cha:introduction}
%  As suggested in \cite{Herbelin10}, adding Markov's principle preserves existential and disjunction properties of an intuitionistic logic. We use this idea to develop a weaker version of the \haemuno system presented in \cite{Aschieri13} that we call \haemeno. We show that the rule \emeno we add is equivalent to Markov's principle and that the new system gains existential and disjunction properties (as noted in the conclusions of \cite{Aschieri13}).

%% TODO spiegare che la realizzabilità aiuta a decidere consistency, constructivity di estensioni
Defining a proper notion of constructive mathematics and building a constructive foundation of mathematics were two central concerns of mathematical logic in the past century. Starting with Hilbert's program, and continuing through other traditions such as Brouwer's intuitionism, numerous attempts have been made, often ending up in harsh contrasts - the most famous being the one between the two just mentioned schools.

The Russian constructive school pursued the program of Constructive Recursive Mathematics, led by the intuitions of A. A. Markov who was the first in trying to put the notion of \emph{algorithm} at the heart of a foundation of mathematics.

Although ultimately less successful in the field of constructivism, some of the ideas of Markov proved later to be fundamental in understanding proofs in fields of mathematics such as analysis.

In order to present a more modern explanation of Markov's standpoint, we will first need to present a more general overview of the context of constructive mathematics.

\section{The formalist approach}
The birth itself of the modern conception of proof theory is often associated with Hilbert's famous program. As it was stated in the \emph{Grundlagen der Geometrie}
the program posed four major problems that should be addressed in order to develop a reliable foundation for a mathematical theory:
\begin{itemize}
\item The formalization of the theory, including a choice of its basic objects, relations, and axioms.
\item The proof of the consistency of the axioms.
\item The question of the mutual independence and completeness of the axioms.
\item The decision problem: is there an automatic method for deciding truth of statements in the theory?
\end{itemize}

In this thesis, we are mainly concerned with the first two points. These underline the two main characteristics of Hilbert's thought: \emph{formalism} and \emph{finitism}. The Hilbertian formalism requires the elements of the theory to be expressed as certain statements in a formal language; the mathematical practice thus could be viewed as a manipulation of these statements, in accordance to some rules. This attracted criticism from other philosophical schools, first and foremost the intuitionist school, in that it seemed like it was removing the concept of \emph{mathematical truth}, in favour of giving rise to a mere mechanical game of symbols. However, the second main aspect of the Hilbertian standpoint further clarifies the approach also in relation to this criticism: the main feature of the axiomatic system that was to be sought was its \emph{consistency}, i.e. the inability of deriving a contradiction from the axioms; in the original plan, this crucial feature had to be proved by \emph{finitistic} means. Hilbert meant with this word that they should rely on inspectable\footnote{In German \emph{anschaulich}} evidence. Such a consistency proof was seen as something that nobody could doubt of.

Hilbert's program is tightly linked to G\"odel's famous incompleteness results. We will not enter in the debate of what incompleteness meant for the development of the program; however, it is interesting to mention that G\"odel clearly specified his views with respect to the Hilbertian finitism in his Yale lectures \cite{Goedel41}. There, he states that he regards a system as finitist if it satisfies the following points:

\begin{itemize}
\item All functions and relations that are primitive in the system are respectively computable and decidable.
\item The existential quantifier is not primitive in the system. That is, existential quantifications are only an abbreviation for an explicit construction of a witness.
\item Universal quantifications can be negated only in the sense that there exists a counterexample in the sense here defined, that is an explicit construction of a counterexample.
\end{itemize}

In particular, we will draw inspiration from the second point for our notion of constructive system:

\begin{definition}[Constructive system]
\label{def:constructive}
  We call a logical system \emph{constructive} if it satisfies the following two properties:
  \begin{description}
  \item[Disjunctive property] Whenever $A \lor B$ is provable in the system, then either $A$ is provable or $B$ is provable.
  \item[Existential property] Whenever $\exists x A(x)$ is provable in the system, then there exists a term $t$ such that $A(t)$ is provable.
  \end{description}
\end{definition}

\section{Intuitionitic logic and realizability semantics}
The intuitionistic school of L.E.J. Brouwer was probably the main opponent of the formalist approach. We mentioned before that the intuitionists accused Hilbert of reducing the mathematical practice to a game of symbol manipulation without a real meaning. Indeed, the intuitionists appealed to a much more sophisticated notion of mathematics, conceiving essentially mathematical objects as free creations of the mind of the mathematicians. The mathematical practice is then a matter of human communication. Therefore, an object exists only in the moment a mathematician can mentally construct it: how can one accept an indirect argument as a mental construction? Clearly, if we can prove the impossibility of the non existence of an object, we have no way to obtain a construction we can communicate.

\subsection*{The BHK explanation of intuitionistic truth}
The refusal of formalism made by Brouwer also prevented him from really accepting any formalization of an ``intuitionistic logic''. An explanation of the usual logical connectives from the intuitionistic point of view, and the beginning of the development of an intuitionistic logical system are due to Brouwer's student Arend Heyting; this is usually known as the Brouwer-Heyting-Kolmogorov interpretation, and provides an informal notion of an intuitionistic truth:

\begin{itemize}
\item There is no construction of $\bot$.
\item A construction of $A\land B$ consists of a construction of $A$ and a construction of $B$
\item A construction of $A \lor B$ consists of a construction of $A$ or a construction of $B$
\item A construction of $A\to B$ is a construction which transforms any construction of $A$ into a construction of $B$
\item A construction of $\exists x A(x)$ consists of an element $d$ of the domain and a construction of $A(d)$
\item A construction of $\forall x A(x)$ is a method which transforms every element $d$ of the domain into a construction of $A(d)$.
\end{itemize}

Negation is then interpreted as $\neg A := A \to \bot$. We can already see from this that the principle of excluded middle $A \lor \neg A$ is not justified under this interpretation: it expands to $A \lor (A \to \bot)$, and asks for either a proof of $A$, or a method to transform proofs of $A$ into the absurdity; but clearly we have no way to do this in general. The underivability of the excluded middle as a rule proved to be the common feature of different systems of constructive logic, and thus intuitionistic logic quickly became interesting per se, regardless of the intuitionistic standpoint in mathematics.

\subsection*{Realizability semantics}
\label{sec:realizability}
The BHK semantics we have defined in the previous paragraph allows us to draw some conclusions and obtain some initial results about systems of intuitionistic logic, such as the simple argument we have used to show that the excluded middle is not justifiable. However, one immediately notices how this semantics is voluntarily informal: the notions of construction and method that are mentioned, are left unspecified. Realizability semantics are a family of semantics that can be thought of as concrete versions of the BHK semantics, whenever we consider a specific intuitionistic theory. Historically, the first such example was the original \emph{number realizability} of Kleene for the intuitionistic system of arithmetic $\HA$ \cite{Kleene45} that used objects of recursion theory in order to give concrete meaning to the concepts of \emph{construction} and \emph{algorithm} we used previously. More formally, it states when a number $e$  realizes a formula $E$ by induction on the shape of the formula:

\begin{itemize}
\item $e$ realizes $(r = t)$, if $(r = t)$ is true.
\item $e$ realizes $(A \land B)$, if $e$ codes a pair $(f,g)$ such that $f$ realizes $A$ and $g$ realizes $B$.
\item $e$ realizes $A\lor B$, if $e$ codes a pair $(f,g)$ such that if $f = 0$ then $g$ realizes $A$, and if $f > 0$ then $g$ realizes $B$.
\item $e$ realizes $A\to B$, if, whenever $f$ realizes $A$, then the $e$-th partial recursive function is defined at $f$ and its value realizes $B$.
\item $e$ realizes $\neg A$, if no $f$ realizes $A$.
\item $e$ realizes $\forall x A(x)$, if, for every $n$, the $e$-th partial recursive function is defined at $n$ and its value realizes $A(n)$.
\item $e$ realizes $\exists x A(x)$, if $e$ codes a pair $(n,g)$ and $g$ realizes $A(n)$.
\end{itemize}
Since the objects of the domain of interpretation are numbers, we can internalize the notion we have just defined by formalizing it inside the same theory of arithmetic we are interpreting.
 A formalized realizability semantics together with a semantic soundness theorem (which is often called adequacy in this framework) allows a finer analysis of intuitionistic systems. For example, given the adequacy of Kleene semantics for a system of intuitionistic arithmetic we could conclude about constructivity of the system according to our \cref{def:constructive}: whenever $A \lor B$ is provable then by adequacy it is realizable, and therefore we will have a realizer coding either a realizer of $A$ or one of $B$; similarly whenever $\exists x A(x)$ is provable, then by adequacy it is realizable and the realizer codes some $n$ and a realizer of $A(n)$. 

 Moreover, realizability is able to tell more about the computational content of intuitionistic systems. Kleene realizers are understood as codes for a G\"odel numbering of the recursive functions, and thus can represent something that we can use in order to compute. Going further in this direction, Kreisel's \emph{modified realizability} \cite{Kreisel59} defines realizers as elements of a system of typed $\lambda$-calculus: these can be in turn very similar to statements of a modern functional programming language. We can think therefore of realizability interpretations as the link between constructive systems and computational systems.

\section{Constructive Recursive Mathematics and the controversy about Markov's principle}
Constructive recursive mathematics was developed by the Russian school of constructivism starting from the 1940s. Its main contributor was A.A. Markov \cite{Markov54}, and most of the research developments in this field happened until the 1970s.

In a fashion similar to the finitistic approach, the focus in CRM is on the fact that mathematical objects should be finitely representable. In particular, they should be representable by means of suitably defined \emph{algorithms}.

The main points of the approach of CRM are, as found in \cite{Troelstra88}
\begin{itemize}
\item The objects of mathematics are algorithms. Algorithms are meant in a mathematically precise sense in the sense that they should be presented as ``words'' in some finite alphabet of symbols.
\item Limitations due to finite memory capacity are disregarded, the length of symbol strings is unbounded (though always finite).
\item Logically compound statements not involving $\exists$, $\lor$ are understood in a direct way, but existential statements and disjunctions always have to be made explicit.
\item If it is impossible that an algorithmic computation does not terminate, we may assume that it does terminate.
\end{itemize}

The last of these points is what is commonly referred to as ``Markov's principle'', and was the main point of controversy between the intuitionists and the Russian school. Indeed, all the points that were listed fit naturally in classical recursion theory; if we think at Markov's principle in this context, it represents unbounded search: it is certain that the algorithm will halt at some point, but there is no guarantee that this will happen before the end of the universe. This was firmly disagreed by intuitionists and indeed we will see that it cannot be proven from intuitionistic logic.

\section{Natural deduction and the Curry-Howard isomorphism}
In \cref{sec:realizability} we highlighted how realizability sets a correspondence between constructive systems and models of computation. An even deeper link was noted by Haskell Curry: the rules for implication introduction and elimination of natural deduction (~\cref{fig:natded}) can be put in correspondence with the rules for abstraction and application of Church's simply typed lambda calculus.

Even though it was known from the 1940s, this correspondence was not further explored until some decades later. A reason for this delay could be found in the similar lack of success of the proof system of Natural Deduction. Introduced by Gentzen together with the immediately more popular Sequent Calculus, Natural Deduction presents inference rules in couples of \emph{introduction} and \emph{elimination} rules for every logical connective. Its other feature is that proofs are dependent on \emph{assumptions} that can be made and then discharged (represented by bracketing), thus rendering the proof independent of the previously made assumption. A system of natural deduction for intuitionstic logic is presented in \cref{fig:natded}.

\begin{figure}[!htb]
  \begin{align*}
    \AxiomC{\vdots}
    \noLine
    \UnaryInfC{$A_1$}
    \AxiomC{\vdots}
    \noLine
    \UnaryInfC{$A_2$}
    \RightLabel{$\land$-I}
    \BinaryInfC{$A_1 \land A_2$}
    \DisplayProof 
    \qquad
    \AxiomC{\vdots}
    \noLine
    \UnaryInfC{$A_1 \land A_2$}
    \RightLabel{$\land$-E$_1$}
    \UnaryInfC{$A_1$}
    \DisplayProof
    \qquad
    \AxiomC{\vdots}
    \noLine
    \UnaryInfC{$A_1 \land A_2$}
    \RightLabel{$\land$-E$_2$}
    \UnaryInfC{$A_2$}
    \DisplayProof
  \end{align*}
  \begin{align*}
    \AxiomC{$[A]$}
    \noLine
    % \UnaryInfC{\vdots}
    % \noLine
    \UnaryInfC{$B$}
    \RightLabel{$\to$-I}
    \UnaryInfC{$A \to B$}
    \DisplayProof
    \qquad
    \AxiomC{\vdots}
    \noLine
    \UnaryInfC{$A \to B$}
    \noLine
    \AxiomC{\vdots}
    \noLine
    \UnaryInfC{$A$}
    \RightLabel{$\to$-E}
    \BinaryInfC{$B$}
    \DisplayProof
  \end{align*}
  \begin{align*}
    \AxiomC{\vdots}
    \noLine
    \UnaryInfC{$A$}
    \RightLabel{$\lor$-I$_1$}
    \UnaryInfC{$A \lor B$}
    \DisplayProof  
    \qquad
    \AxiomC{\vdots}
    \noLine
    \UnaryInfC{$B$}
    \RightLabel{$\lor$-I$_2$}
    \UnaryInfC{$A \lor B$}
    \DisplayProof  
    \qquad
    \AxiomC{\vdots}
    \noLine
    \UnaryInfC{$A \lor B$}
    \AxiomC{$[A]$}
    \noLine
    % \UnaryInfC{\vdots}
    % \noLine
    \UnaryInfC{$C$}
    \AxiomC{$[B]$}
    \noLine
    % \UnaryInfC{\vdots}
    % \noLine
    \UnaryInfC{$C$}
    \RightLabel{$\lor$-E}
    \TrinaryInfC{$C$}
    \DisplayProof  
  \end{align*}
  \begin{align*}
    \AxiomC{\vdots}
    \noLine
    \UnaryInfC{$A$}
    \RightLabel{$\forall$-I ($x$ not free in the assumptions)}
    \UnaryInfC{$\forall x A$}
    \DisplayProof  
    \qquad
    \AxiomC{\vdots}
    \noLine
    \UnaryInfC{$\forall x A$}
    \RightLabel{$\forall$-E}
    \UnaryInfC{$A[t/x]$}
    \DisplayProof  
  \end{align*}
  \begin{align*}
    \AxiomC{\vdots}
    \noLine
    \UnaryInfC{$A[t/x]$}
    \RightLabel{$\exists$-I}
    \UnaryInfC{$\exists x A$}
    \DisplayProof  
    \qquad
    \AxiomC{\vdots}
    \noLine
    \UnaryInfC{$\exists x A$}
    \AxiomC{$[A]$}
    % \noLine
    % \UnaryInfC{\vdots}
    \noLine
    \UnaryInfC{$C$}
    \RightLabel{$\exists$-E ($x$ not free in $C$ and in the assumptions)}
    \BinaryInfC{$C$}
    \DisplayProof  
  \end{align*}
  \begin{align*}
    \AxiomC{$\bot$}
    \RightLabel{$\bot$-E}
    \UnaryInfC{$A$}
    \DisplayProof
  \end{align*}
\caption{Natural deduction for intuitionistic
  logic} \label{fig:natded}
\end{figure}

Sequent calculus provided a more technically convenient presentation of classical logic; moreover, Gentzen introduced it with the specific aim of proving its consistency, by means of what became to be known as Gentzen's \emph{Hauptsatz}, or cut-elimination theorem. Since we are not interested in sequent calculus, we will not talk about this theorem further. We are however interested in a somehow corresponding notion in the framework of natural deduction, which is \emph{proof normalization}. A normal proof is one where no detours appear; formally, a detour is a configuration in which an introduction rule is immediately followed by an elimination of the same connective that was introduced. Given that the two kinds of rules are one the inverse of the other, such an inference can be removed in order to make the proof more direct: an example of such procedure is shown in \cref{fig:normal}.

Normalization then is the process of removing detours from a proof, with the aim of obtaining a normal one. As we mentioned, the unavailability of a normalization theorem \footnote{In his thesis Gentzen had actually included a set of detour conversions and a proof of normalization for intuitionistic natural deduction. However this remained unknown until 2005, when a manuscript of the thesis was found. For more details see \cite{Von08}}, stating that every proof could be normalized, meant that sequent calculus became the system of choice for a long period, until Dag Prawitz finally crafted a direct normalization proof for natural deduction in 1965.

\begin{figure}
  \centering
  \begin{align*}
    \AxiomC{$[A]$}
    \noLine
    \UnaryInfC{\vdots}
    \noLine
    \UnaryInfC{$B$}
    \RightLabel{$\forall$-I}
    \UnaryInfC{$A\to B$}
    \AxiomC{\vdots}
    \noLine
    \UnaryInfC{$A$}
    \RightLabel{$\forall$-E}
    \BinaryInfC{$B$}
    \DisplayProof
    \qquad
    \leadsto
    \qquad
    \AxiomC{\vdots}
    \noLine
    \UnaryInfC{$A$}
    \noLine
    \UnaryInfC{\vdots}
    \noLine
    \UnaryInfC{$B$}
    \DisplayProof
  \end{align*}
  \caption{Normalization of a non-normal proof}
  \label{fig:normal}
\end{figure}

In his work (see for example \cite{Prawitz06}), Prawitz further clarified a key feature of the rules of natural deduction: the introduction rules can be thought of as \emph{definitional} rules that describe when one is allowed to assert a certain connective, and thus its meaning; in the same way, elimination rules can be seen as \emph{operational} rules that describe how one can use a formula depending on its main connective \footnote{This idea was already expressed by Gentzen: \emph{The introductions constitute, as it were, the ``definitions'' of the symbols concerned, and the eliminations are, in the final analysis, only consequences of this, which may be expressed something like this: At the elimination of a symbol, the formula with whose outermost symbol we are dealing may be used only ``in respect of what it means according to the introduction of that symbol''}. (\cite{Gentzen35})}.

As natural deduction started gathering  more interest, William Howard studied more in depth the relationship between deduction rules of natural deduction and typing rules of typed lambda calculus, and presented what came to be known as the Curry-Howard isomorphism \cite{Howard69}. Under this isomorphism, formulas are put in correspondence with types, hence the title of Howards's work \emph{The formulae as types notion of construction}; the correspondence stretches even further, and takes different names according to the different traditions that originated from the original work. We borrow the terminology of Wadler \cite{Wadler15} and state the full framework as:
\begin{itemize}
\item \emph{Propositions} as \emph{types}, the original intuition of Howard
\item \emph{Proofs} as \emph{programs}: since every proof tree can be made to correspond with a type derivation, we have a lambda term corresponding to the proof.
\item \emph{Simplification of proofs} as \emph{evaluation of programs}: the process of detour removal is nothing but a computation, where a complex term gets reduced in order to obtain a result of the computation.
\end{itemize}

\section{Contents of the thesis}
Modern research in the Curry-Howard tradition draws heavily from all the standpoints we briefly discussed. It stems from constructivism, and intuitionistic systems are the base for most Curry-Howard systems; it is formalist in the sense that proofs are the main object of the investigation; it is finitist in the sense that, in addition to the requirement that objects of computation should be finite, it tries to make sense of classical reasoning by these means.

We will sit in this tradition, and therefore although the main object of the discussion will be a mathematical principle, we will be interested in its computational and metalogical properties. As it was already mentioned, Markov's principle was already controversial in the debate about constructivism and foundations in the first half of the XX century: \cref{cha:intu-real-mark} will be devoted to a more in-depth accounting of the birth of realizability semantics and of the status of Markov's principle in each of them.

After that we will introduce some results in the more modern line of research of realizability and Curry Howard systems for classical logic. In \cref{cha:real-class-syst}, we shall introduce a Curry Howard system able to provide a realizability semantics for the semi-classical system of arithmetic with limited excluded middle ($\HA + \EM_1$). In \cref{cha:markovs-principle-ha} we will prove some additional results on the computational and constructive properties of $\HA + \EM_1$, and we will use them to give a new computational interpretation of Markov's principle.

Based on the intuitions of \cref{cha:markovs-principle-ha}, \cref{cha:furth-gener} will introduce a Curry Howard system for a system of full classical arithmetic and a corresponding restricted version that will be shown constructive thanks to Markov's principle.

%%% Local Variables:
%%% mode: latex
%%% TeX-master: "tesi.tex"
%%% End:

\chapter{Intuitionistic realizability and Markov's principle}
\label{cha:intu-real-mark}
Intuitionistic logic proved to be the underlying logic for many kinds of constructive mathematics. Therefore it is often referred to as \emph{the} constructive logic. The idea of realizability semantics originated in the context of intuitionistic systems, and so where the first Curry-Howard systems: indeed it was long believed that these were the only systems that allowed a computational interpretation.

After introducing the basic ideas of intuitionistic arithmetic needed to develop the theory of realizability, this chapter will present some classical realizability results and their relation to Markov's principle.

\section{Heyting Arithmetic and Markov's principle}
Throughout the introduction we made continuous references to \emph{Arithmetic}. By this name we mean, in its broadest sense, the theory of natural numbers with the usual operations of sum and product. From the point of view of logic, although a complete axiomatization cannot exist because of G\"odel's theorems, the most common axiom system for this theory is known as Peano Arithmetic, $\mathsf{PA}$. It takes the name from Giuseppe Peano, and in its modern presentation it consists of a classical theory over the language with constant terms $0, \mathbf{s},+$ and the predicate $=$, with the axioms
\begin{itemize}
\item $\forall x (x = x)$
\item $\forall x \forall y (x = y \to  y = x)$
\item $\forall x \forall y \forall z (x = y \to y = z \to x = z)$
\item $\forall x \forall y (x = y \to \mathbf{s} x = \mathbf{s} y)$
\item $\forall x \forall y (\mathbf{s} x = \mathbf{s}y \to x = y)$
\item $\forall x (\mathsf{s}x = 0 \to \bot)$
\item $\forall x (x + 0 = x)$
\item $\forall x \forall y (x + \mathbf{s}y = \mathbf{s}(a + b))$
\item $\forall x (x \cdot 0 = 0)$
\item $\forall x \forall y (x \cdot \mathbf{s}y = (x \cdot y) + x)$
\item $\forall x (\varphi(x) \to \varphi (\mathbf{s}x)) \to \varphi (0) \to \forall x \varphi(x)$, for all formulas $\varphi$
\end{itemize}
The first four axioms define our notion of equality as an equivalence relation preserved by the successor operation. Then the following two state that the successor is a bijection between the naturals and naturals greater than zero. After them we have the definitions for addition and multiplication, and finally the induction axiom scheme.

By Heyting Arithmetic, $\HA$, we mean the intuitionistic theory of the same axioms. In this context, we formulate Markov's principle as the statement \[\neg \neg \exists x A(x) \to \exists x A(x)\] where $A$ is a quantifier-free formula. Alternatively, we can also use the following form, which is equivalent under the axioms of $\HA$: \[ \neg \forall x A (x) \to \exists x \neg A(x)\]
It was mentioned in the introduction that the intuitionists did not accept Markov's principle. In line with this, neither of the formulas we just presented can be proved in the system of Heyting's intuitionistic arithmetic; however as we are going to see realizability interpretation provide mixed answer on this.

\section{G\"odel's \emph{Dialectica} interpretation}
\label{sec:godels-dialec-interpr}
Although not usually included under the category of realizability interpretations, the functional interpretation of intuitionistic arithmetic introduced by G\"odel, commonly referred to as the \emph{Dialectica} interpretation \cite{Goedel58}, is probably the first step into this line of research. As is made explicit in the title of the series of lectures where he first introduced his ideas, \emph{In what sense is intuitionistic logic constructive} \cite{Goedel41}, G\"odel aimed at making clearer the constructive meaning of the intuitionistic logical constants. In order to do this, he proposed a system of typed recursive functionals where to interpret intuitionistic theories; this approach was in his opinion finitist, as we noted in the introduction, and therefore more suitable to develop an analysis of constructivity and consistency.

Formally, the \emph{Dialectica} interpretation assigns to every formula $F$ of $\HA$ a formula $F_D$ in a system of typed functionals that we will call \textbf{T}; $F_D$ is of the form $\exists y \forall z A(y,z,x)$, where $x,y,z$ are list of variables of arbitrary type and $A$ is quantifier free. The definition is by induction on the structure of the formula: for $A$ atomic, $A_D=A$ (identifying the symbols of the languages $\HA$ and \textbf{T}); if $F_D = \exists x \forall y A(x,y)$ and $G_D = \exists u \forall v B(u,v)$, then
% \begin{itemize}
% \item $(F \land G)_D = \exists y,v \forall z,w (A(y,z,x)\land B(v,w,u))$
% \item $(F \lor G)_D = \exists y,v,t \forall z,w (t=0 \land A(y,z,x) \lor t=1 \land B(v,w,u))$

% \item $(\forall x F)_D = \exists Y \forall s,z A(Y(s),z,x)$
% \item $(\exists x F)_D = \exists sy \forall z A(y,z,x)$
% \item  $(F \to G)_D = \exists V,Z \forall y,w A(y, Z(y,w), x) \to B(V(y),w,u)$
% \item $(\neg F)_D = \exists \bar{Z} \forall y \neg A(y, \bar{Z}(y),x)$
% \end{itemize}
\begin{itemize}
\item $(F \land G)_D = \exists x,u \forall y,v (A(x,y)\land B(u,v))$
\item $(F \lor G)_D = \exists t,x,u \forall y,v (t=0 \to A(x,y) \land t=1 \to B(u,v))$
\item $(\forall z F)_D = \exists X \forall z,y A(X(z),y,z)$
\item $(\exists z F)_D = \exists z,x \forall y A(x,y,z)$
\item  $(F \to G)_D = \exists U,Y \forall x,v A(x, Y(x,v)) \to B(U(x),v)$
\item $(\neg F)_D = \exists Y \forall x \neg A(x, Y(x))$
\end{itemize}

Note that 6 follows from 5 when defining $\neg A = A \to \bot$. If we compare this with the usual BHK semantics, which also forms the basis of other realizability semantics, we can see that it is substantially different in particular in the definition of the implication: here we find no mention of a method to transform  ``any proof'' as we had in BHK.\footnote{With regard to this, G\"odel noted: ``[the fact that one does not need to quantify over all proofs] shows that the interpretation of intuitionistic logic, in terms of computable functions, in no way presupposes Heyting's and that, moreover, it is constructive and evident in a higher degree than Heyting's. For it is exactly the elimination of such vast generalities as ``any proof'' which makes for greater evidence and constructivity.'' \cite{Gödel72}}

If one thinks of the Dialectica as a Game Semantics, its peculiarity becomes clearer: consider a game between two players, where we win if we find a term $u$ such that there is no $t$ for which $A_D(u,t)$ holds; then we have a winning strategy if we can state $\exists x \forall y \ A_D(x,y)$. The cases for the connectives different from $\to$ is quite intuitive in this framework:

\begin{itemize}
\item In the case of $A \land B$, we need to find winning strategies $x$ for $A$ and $u$ for $B$
\item In the case of $A \lor B$, we declare (depending on $t$) whether we are going to give a winning strategy $x$ for $A$ or $u$ for $B$
\item In the case of $\forall x \ A$, we need to give a winning strategy $X(z)$ for $A(z)$ for every numeral $z$ the opponent might give
\item In the case of $\exists x \ A$, we need to give a numeral $z$, together with a winning straregy for $A(z)$.
\end{itemize}

The case of implication requires more explanation. Here, the opponent gives us a strategy $x$ for $A$: note that it need not be a winning one. In order to win, we need either to provide a winning strategy for $B$, or to show that the strategy he gave us was actually not winning. From this comes the shape of the interpretation of the implication: we need to give a method $U$ to obtain a strategy for $B$ such that, whenever $v$ is a strategy that wins against $U(x)$, we can build a strategy $Y(x,v)$ that wins against $x$.

\subsubsection{Markov's principle and the \emph{Dialectica}}
The difference between the BHK semantics and the \emph{Dialectica} interpretation goes in fact much farther than this, and although one can easily check that all formulas that are provable in $\HA$ are provable in \textbf{T}, the converse is not the case. It turns out that Markov's principle is precisely one of the formulas that obtain a justification in \textbf{T} but are not provable in $\HA$. If we consider the second form of Markov's principle introduced in the previous section, we have that\\
\\
\noindent
% $(\forall x A)_D = \exists Y \forall s \ A(Y(s), ,x)$\\
% $(\neg \forall x A)_D = \exists \bar{Z} \forall y \ \neg A(y,\bar{Z}(y) ,x)$\\
% \\
% $(\neg A)_D = \exists \bar{Z} \ \neg A (,\bar{Z},x))$\\
% $(\exists s \neg A)_D = \exists s,\bar{Z} \ \neg A (,\bar{Z},x))$\\
% %$(\exists A)_D = \exists s A(,,x)$\\
% %$(\neg \exists A)_D = \exists \bar{Z} \forall s A(s,\bar{Z}(s),x)$\\
% \\
% $(\neg \forall x A \to \neg \exists x A)_D = \exists V,Z \forall y \ \neg A(y, Z(y),x) \to \neg A(V(y),,x)$
$(\forall x A)_D = \forall x A(x)$\\
$(\neg \forall x A)_D = \exists x \ \neg A(x)$\\
\\
$(\neg A)_D = \neg A$\\
$(\exists x \neg A)_D = \exists x \ \neg A (x)$\\
\\
$(\neg \forall x A \to \exists x \neg A)_D = \exists U \forall x (\neg A(x) \to \neg A(U(x)))$

%$(\neg \forall x A \to \exists x \neg A)_D = \exists V,Z \forall y \ \neg A(y, Z(y),x) \to \neg A(V(y),,x)$

Since $\exists x \neg A(x)$ is already in the required form, it is not touched by the Dialectica. In the case of $\neg \forall x \ A$, the Dialectica interpretation of the negation states that there should be a counterexample, and asks for a functional that maps witnesses of $\forall x \ A$ (which are void in the interpretation) to counterexamples of $A$; this means that the interpretation is once again $\exists x \neg A(x)$. Therefore, since both formulas get the same interpretation, Markov's principle can be trivially interpreted.

It is interesting to note that G\"odel was aware of this result and viewed it as yet another example of the fact that intuitionistic logic was not well suited as a basic constructive logic, and the system \textbf{T} was on the other side behaving much better\footnote{``The higher degree of constructivity also appears in other facts, e.g., that Markov's principle $\neg \forall x A(x) \to \exists x \neg A(x)$ (see \cite{Kleene60}, page 157, footnote) is trivially provable for any primitive recursive $A$ and, in a more general setting, for any decidable property o of any objects $x$. This, incidentally, gives an interest to this interpretation of intuitionistic logic (no matter whether in terms of computable functions of higher types or of Turing functions) even if Heyting's logic is presupposed.'' \cite{Gödel72}}

One might now wonder how such an interpretation can be used in practice. Consider the case where we have an interpretation of the premise $\neg \forall x A$, and we want to use modus ponens together with Markov's principle to get the conclusion. We can easily see that the Dialectica interpretation validates modus ponens, as shown for example in \cite{Kohlenbach08}: assume we have the two formulas in \textbf{T}

\noindent
$\forall y \ A_D(t_1,y)$\\
$\forall x,v \ (A_D(x,t_2 (x,v)) \to B_D(t_3(x),v))$ \\

\noindent
Then we can take $t_1$ for $x$ in the second formula, and $t_2(t_1,v)$ for $y$ in the first. This results in 

\noindent
$A_D(t_1,t_2(t_1,v))$\\
$A_D(t_1,t_2 (t_1,v)) \to B_D(t_3(t_1),v)$ \\

\noindent
Therefore we have $B_D(t_3(t_1),v)$ for all $v$, and thus the functional assigned to $B$ is $t_3(t_1)$. Thus, we can view modus ponens as functional application. In our case we have

\noindent
$\neg A(t_1)$ \\
$\forall y (\neg A(y) \to \neg A (U(y)))$

\noindent
And therefore applying modus ponens results in the application $U(t_1) = t_1$, since $U =\lambda x.x$.

\section{Kleene's realizability}
Kleene was the first to investigate the notion of realizability, and indeed he was the one to introduce the word itself\footnote{As mentioned in \cite{Kleene45}, the initial development of the system is actually due to Kleene's first student David Nelson.}. Upon developing the system of recursive functions, he aimed at making the system of intuitionistic arithmetic ``more precise'', and he planned to do so by employing the system of recursive functions he contributed to formalize. More precisely, the objects of the domain of the interpretation (i.e. the realizers) are the G\"odel numbers of the recursive functions: thus Kleene's realizability is often referred to as \emph{number realizability}.

Consider the standard model of arithmetic $\mathbb{N}$ and a standard pairing function $\langle -,-\rangle: \mathbb{N}^2 \to \mathbb{N}$, together with its corresponding projection functions $\pi_1$, $\pi_2$ such that $\pi_i(\langle n_1, n_2 \rangle) = n_i$. By $\{n\}m$ we represent the result of the computation of the $n$-th partial recursive function on $m$, in a suitable model of the partial recursive functions; by $\overline{n}$ we mean the numeral (in $\HA$) representing $n$. In the classic definition of Kleene, any number $n$ is a realizer of a formula $F$ under the following cirmunstances:
\begin{description}
\item[$n \, \mathbf{r}\, s=t$] if $s=t$
\item[$n \, \mathbf{r}\, A \land B$]  if $\pi_1(n) \, \mathbf{r}\, A$ and $\pi_2(n) \, \mathbf{r}\, B$
\item[$n \, \mathbf{r}\, A \to B$] if for all $m$ such that $m \, \mathbf{r}\, A$, $\{n\}m$ is a terminating computation and $\{n\}m \, \mathbf{r}\, B$
\item[$n \, \mathbf{r}\, A \lor B$] if $\pi_1(n) = 0$ and $\pi_2(n) \, \mathbf{r}\, A$, or if $\pi_1(n) = 1$ and $\pi_2(n) \, \mathbf{r}\, B$
\item[$n \, \mathbf{r}\, \forall x A(x) $] if for all $m$, $\{n\}m$ is a terminating computation and $\{n\}m \, \mathbf{r}\, A (\overline{m})$
\item[$n \, \mathbf{r}\, \exists x A(x) $] if $\pi_1(n) \, \mathbf{r}\, A(\;\overline{ \pi_2(n)}\;)$
\end{description}

%% TODO risultati generali, adeguatezza etc

We can build a realizer for Markov's principle according to this definition. Consider the 
%pair $\langle 0, \mu n .A(n) \rangle$
number $n$ such that $\{n\}m =  \langle 0, \mu i .A(i) \rangle$; here, $\mu$ denotes the usual minimization operation from the theory of partial recursive functions. This is a realizer of $\neg \neg \exists x A(x) \to \exists x A(x)$ only if whenever $m$ is a realizer of $\neg \neg \exists x A(x)$, $\{n\}m$ is a realizer of $\exists x A(x)$. Unraveling the definitions, we need $\langle 0, \mu i .A(i) \rangle$ to be a realizer of $\exists x A(x)$, i.e. $0 \, \mathbf{r}\, A(\;\overline{\mu i .A(i) }\;)$. If one assumes that $\mu i .A(i)$ does not correspond to a terminating computation, then this would mean that $\exists x A (x)$ is not realizabile; in turn, if there is no realizer of $\exists x A (x)$ then any number is a realizer of $\exists x A (x)  \to \bot \equiv \neg \exists x A (x)$; finally, since we have a realizer for $\neg \exists x A (x) \to \bot$, this would give us a realizer of $\bot$, and thus a contradiction. This ensures the termination of the computation, and therefore we have  $A(\;\overline{\mu i .A(i) }\;) \equiv \top$, and any number is a realizer of $\top$.

We can easily see the catch here: the termination of the computation is ensured by classical reasoning, and what we have done is a simple shift of the classical reasoning contained in Markov's principle to the metalevel, in this case the theory of partial recursive functions. This is of course not satisfying at all from a strictly constructive point of view.

\section{Kreisel's modified realizability}
%% TODO
% L'implicazione non è interpretabile come mossa di un gioco: cosa vuol dire che l'avversario mi da una strategia vincente per A? Dialectica invece chiede una strategia.
A big step forward in the field of realizability in the direction of computer science was done by Georg Kreisel with his system of \emph{modified realizability} \cite{Kreisel59}. Kreisel's realizability differentiates itself from Kleene's by using a typed lambda calculus as the domain of interpretation. Types here are put in correspondence with formulas of $\HA$, somehow predating Howard's idea of completely identifying them by some years; moreover, the use of lambda calculus and the subsequent success of lambda calculus as the foundation for functional programming languages laid the foundation for the link between computer science and proof theory.

We begin the presentation of modified realizability by presenting the system of lambda calculus. First we need to introduce the types we are going to use:
\begin{itemize}
\item $\Nat$ is a type (intuitively, the type of naturals)
\item If $\sigma$, $\tau$ are types, then $\sigma \to \tau$, $\sigma \times \tau$, $\sigma + \tau$ are types
\end{itemize}

Then, we introduce the typed terms of the system:
\begin{itemize}
\item For every type $\sigma$, a countable set of variables $x^\sigma,y^\sigma,\dots$
\item  $0: \Nat$, $\mathbf{s}: \Nat \to \Nat$
\item For all types $\sigma$, $\rec ^\sigma : \sigma \to (\Nat \to \sigma \to \sigma) \to \Nat \to \sigma$
\item For all types $\sigma$, $\tau$, projections $\pi_1^{\sigma,\tau} : \sigma \times \tau \to \sigma$, $\pi_2^{\sigma,\tau} : \sigma \times \tau \to \tau$ and pairing $\langle -,- \rangle : \sigma \to \tau \to \sigma \times \tau$
%\item For all types $\sigma$, $\tau$, injections $\inj_1^{\sigma,\tau} : \sigma \to \sigma + \tau$, $\inj_2^{\sigma,\tau} : \sigma \to \tau + \sigma$ and cases $[ -,- ] : \sigma + \tau \to \gamma \to \gamma \to \gamma$
\item If $t: \tau$, then $\lambda x^\sigma.f: \sigma \to \tau$
\item If $s: \sigma \to \tau$, $t: \sigma$, then $st : \tau$
\end{itemize}

And third, the set of reduction rules:
\begin{itemize}
\item $(\lambda x.t)s \mapsto t[s/x]$
\item $\pi_1(\langle s,t \rangle) \mapsto s$, $\pi_2(\langle s,t \rangle) \mapsto t$
\item $\rec xy0 \mapsto x$, $\rec xy(\mathbf{s}z) \mapsto yz\rec xyz$
\end{itemize}

We are now ready to define the realizability interpretation. We will not treat directly the case of $\lor$, but we will assume that $A\lor B$ is a shorthand for $\exists x ((x=0 \to A) \land (\neg(x=0) \to B))$ We do so by first assigning a type $tp(A)$ to every formula $A$:
\begin{align*}
  tp(\bot) &= tp(s=t) = \Nat & tp (A \land B) &= tp(A) \times tp(B) & tp(A \to B) &= tp(A) \to tp(B) \\
  tp(\forall x A) &= \Nat \to tp(A) & tp (\exists x A) &= \Nat \times tp(A)\\
\end{align*}

\noindent
Finally, we can state

\begin{description}
\item[$t \, \mathbf{mr}\, s=t$] if $s=t$
\item[$t \, \mathbf{mr}\, A \land B$]  if $\pi_1(t) \, \mathbf{mr}\, A$ and $\pi_2(t) \, \mathbf{mr}\, B$
\item[$t \, \mathbf{mr}\, A \to B$] if for all $s : tp(A)$, $ts \, \mathbf{mr}\, B$
\item[$t \, \mathbf{mr}\, \forall x A(x) $] if for all $m : \Nat$, $tm \, \mathbf{mr}\, A (\overline{m})$
\item[$t \, \mathbf{mr}\, \exists x A(x) $] if $\pi_1(t) \, \mathbf{mr}\, A(\;\overline{ \pi_2(t)}\;)$
\end{description}

The term calculus comes with some important properties, the main one being strong normalization. This means that every term will reduce to a normal form after a finite number of reduction steps.

If we analyze modified realizability from a game semantical point of view as we did with the Dialectica, we will notice that it only differs in the definition of the implication. Indeed, here we go back to a definition in the style of the BHK. Game semantically, here we are only talking about winning strategies: this means that when playing on the formula $A \to B$, the opponent will always give us a winning strategy for $A$, to which we should answer with a winning strategy for $B$. However, winning strategies cannot be effectively recognized, so the correctness of moves cannot be checked: this is why, when it comes to game semantics, the Dialectica represents a clearer interpretation.

The fact that Markov's principle cannot be interpreted by means of the modified realizability was already shown by Kreisel \cite{Kreisel62}, and was indeed presented as one of the main points of his system. One can argue like this: assume that Markov's principle is realizable. Then in particular, for every value of $n$ one could realize $\neg \forall x \ \mathbf{T}^\bot(n,n,x) \to \exists x \ \mathbf{T}(n,n,x)$, where \textbf{T} is Kleene's predicate and is interpreted as saying ``the Turing machine $\phi_n$ terminates the computation after $x$ steps on input $n$'' (this is known to be primitive recursive and thus representable in $\HA$).
%Consider the formula $\forall n \ (\forall x \ \mathbf{T}^\bot(n,n,x) \lor \exists x \ \mathbf{T}(n,n,x))$, which states the \emph{halting problem}
Let then $n$ be fixed, and since we have that $tp(\neg \forall x \mathbf{T}^\bot(n,n,x)) = (\Nat \to \Nat) \to \Nat$, consider the dummy term $d := \lambda x^\Nat y^\Nat.0 : (\Nat \to \Nat) \to \Nat$. By applying the realizer of Markov's principle to this dummy term, we will get a term of type $tp(\exists x \mathbf{T}) = \Nat \times \Nat$; this last term will then normalize to a term of the form $\langle m, t \rangle$, such that $m$ is a numeral.
Distinguish two cases:

\begin{enumerate}
\item If $\mathbf{T}(n,n,m)$ holds, then we have found that the $n$th Turing machine will halt on input $n$ after $m$ steps
\item If $\mathbf{T}(n,n,m)$ does not hold, we claim that the $n$th Turing machine does not halt on input $n$. Suppose that it halts, then we would have that $\forall x \ \mathbf{T}^\bot (n,n,x)$ is false and thus not realizable; this in turn means that $\neg \forall x \ \mathbf{T}^\bot (n,n,x)$ is trivially realizable by any term, and in particular by the dummy term $d$; by the definition of realizability, the realizer for Markov's principle applied to $d$ gives a realizer for $\exists x \mathbf{T}(n,n,x)$. We have already denoted the normal form of this term as $\langle m, t \rangle$, and since it is a realizer of $\exists x \mathbf{T}(n,n,x)$ it must be the case that $t$ is a realizer of $\mathbf{T}(n,n,m)$. This means that $\mathbf{T}(n,n,m)$ holds, which is a contradiction.
\end{enumerate}

% \begin{enumerate}
% \item We first assume that $\forall x \ \mathbf{T}^\bot(n,n,x)$ is not realizable. Then $\forall x \ \mathbf{T}^\bot(n,n,x) \to \bot \equiv  \neg \forall x \ \mathbf{T}^\bot(n,n,x)$ is trivially realizable by any term. Since we have that $tp(\neg \forall x \mathbf{T}^\bot(n,n,x)) = \Nat \to \Nat \to \Nat$, take as realizer the term $\lambda x^\Nat y^\Nat.0 : \Nat \to \Nat \to \Nat$; by applying the realizer for the instance of Markov's principle to this term we will get a realizer of $\exists x \mathbf{T} (n,n,x)$. This in turn will reduce to a normal form containing a numeral $m$
% \item On the other side, assume that $\forall x \ \mathbf{T}^\bot(n,n,x)$ is realizable.

Since the term calculus is strongly normalizing, we would have described a procedure that, given any $m$, decides in finite time whether the \emph{n}th Turing machine will halt on input $n$, which is well known to be an undecidable problem.

%%% Local Variables:
%%% mode: latex
%%% TeX-master: "tesi.tex"
%%% End:

\chapter{Realizability and classical systems}
\label{cha:real-class-syst}
The previous chapter showed how realizability can be employed as a tool to analyze the constructivity of deductive systems, and to extract computational content from proofs in these systems. Given the constructive nature of realizability semantics and the inherent non constructivity of classical logic, it would seem impossible to obtain such a semantics for systems based on classical logic. This was widely believed until the nineties, when a correspondence between control operators in programming languages and classical reasoning in proofs was discovered. In this chapter, after a brief history of this idea, we will present a related idea for realizability interpretations called \emph{Interactive realizability}. Finally, we will introduce the Curry-Howard system $\HA+\EM_1$ for intuitionistic arithmetic with classical reasoning limited to formulas of the form $\exists x \emp{}$ with $\emp{}$ atomic, together with its realizability interpretation.

\section{Exceptions and classical logic}
Though successful in establishing links between intuitionistic theories and computational mechanisms, the Curry-Howard correspondence was for a long time regarded as incompatible with classical theories. Indeed, if we try to extend to classical logic the system of natural deduction we have introduced in \cref{cha:introduction}, we need to add a rule either for the excluded middle or for the double negation elimination (i.e. \emph{reductio ad absurdum}). In the first case, we need to do a disjunction elimination without having any possibility of knowing which of the two disjuncts actually holds; in the second, we assert a formula and all we know is that its negation leads to an absurdum. It looks like we have no possibilities of recovering any computational construct.

\begin{figure}
  \begin{align*}
    \AxiomC{$[\neg(\alpha \to \beta)]$}
    \noLine
    \UnaryInfC{$\Pi_1$}
    \noLine
    \UnaryInfC{$\bot$}
    \RightLabel{$\bot_c$}
    \UnaryInfC{$\alpha \to \beta$}
    \DisplayProof
    \qquad
    \leadsto
    \qquad
    \AxiomC{$[\neg \beta]_{(1)}$}
    \AxiomC{$[\alpha \to \beta]_{(2)}$}
    \AxiomC{$[\alpha]_{(3)}$}
    \BinaryInfC{$\beta$}
    \BinaryInfC{$\bot$}
    \RightLabel{$\to$-I$_{(2)}$}
    \UnaryInfC{$\neg (\alpha \to \beta)$}
    \noLine
    \UnaryInfC{$\Pi_1$}
    \noLine
    \UnaryInfC{$\bot$}
    \RightLabel{$\bot_{c(1)}$}
    \UnaryInfC{$\beta$}
    \RightLabel{$\to$-I$_{(3)}$}
    \UnaryInfC{$\alpha \to \beta$}
    \DisplayProof
  \end{align*}
  \caption{Prawitz's normalization step for \emph{reductio ad absurdum} on an implication}
  \label{fig:prawitzabs}
\end{figure}

However, we have also mentioned that classical systems of natural deduction too are equipped with a normalization theorem. It was exactly in this observation that the solution to the riddle laid undiscovered for many years\footnote{The link between Prawitz's reductions and typing of the $\mathcal{C}$ operator was established only \emph{a posteriori}, for example in \cite{de01}}. Let's take a look at the rules that Prawitz gave for the normalization of the double negation elimination in \cref{fig:prawitzabs}: the aim is to apply the rule $\bot_c$ to a formula of lower complexity, and so one assumes the negation of the conclusion together with the entire implication and the antecedent; classical reasoning is then only applied to the negated assumption. Similar rules were given for the other logical connectives. This reduction looks very similar to the one that Felleisen gave for his $\mathcal{C}$ operator:
\[ \mathcal{C} (\lambda k. M) \to \lambda n . \mathcal{C} (\lambda k. M[k:=\lambda f .k (f n)]) \]

Presented in \cite{Felleisen88}, this operator introduced the notion of \emph{continuation}, and was the basis for the introduction of such constructs in programming languages (an example being the construct \texttt{call}/\texttt{cc} available in Scheme). It was Griffin then, who in \cite{Griffin89} proposed to type Felleisen's operator as $\neg \neg A \to A$. The idea that sequential control operators could provide a computational correspondent to classical reasoning (as opposed to pure functional flow of computation and intuitionistic reasoning) proved to be very successful, and breathed new life into the \emph{propositions as types} paradigm. Starting from ideas similar to Griffin's several other systems were developed, such as the ones from Parigot \cite{Parigot92} and Krivine \cite{Krivine09}. Generalizing to other control operators, de Groote \cite{de95} showed that mechanisms of exceptions can be put in correspondence with uses of the excluded middle.

The approach of enriching systems of lambda calculus with imperative constructs provided also a new way to approach semi-classical principle, by extending Kreisel's modified realizability with delimited control operators. Using this method, Hugo Herbelin introduced in \cite{Herbelin10} a system of intuitionistic logic with the addition of two logical rules crafted in order to correspond to \texttt{catch} and \texttt{throw} instructions for a system of delimited exceptions. 
%%TODO: \cite{Aschieri14}.

\section{Interactive realizability}
The possibilities opened by new Curry-Howard correspondences for classical logic did not, on the other side, provoke a similar number of new systems in the field of realizability semantics. The first and major example remains the work of Krivine \cite{Krivine10}, who recently applied ideas of classical realizability to set theory in order to obtain a technique alternative to forcing. 

Interactive realizability is a new realizability semantics for classical systems introduced by Aschieri and Berardi \cite{Aschieri12,Aschieri13} based on the concept of \emph{learning}: the main idea is that realizers are programs that \emph{make hypotheses}, \emph{test} them and \emph{learn} by refuting the incorrect ones. This is obtained by means of systems of lambda calculus with exceptions mechanisms: a program will continue to execute under some assumptions, and whenever it uses an instance of an assumption, the instance gets tested; if the assumption is discovered to be false an exception is raised, and the program can continue to run using the new knowledge gained from the counterexample. Different systems of interactive realizability have been put forward for various systems such as Heyting Arithmetic with limited classical principles, or more recently full first order logic and non-classical logics.

\section{The system $\HA+\EM_{1}$}
\label{sec:system-ha+em_1}
Following the terminology of \cite{Akama04}, we will now consider the semi-classical principle $\EM_1$, that is excluded middle restricted to formulas of the form $\exists \alpha \emp{}$ with $\emp{}$ an atomic predicate\footnote{This class of formulas corresponds with the class of $\Sigma_1^0$ formulas of the arithmetical hierarchy}. System $\HA+\EM_{1}$, introduced in \cite{Aschieri13}, applies the idea of interactive realizability to an intuitionistic logic extended with this principle. We could view this as adding the axiom $\forall \alpha^\Nat \emp{} \lor \exists \alpha^\Nat \neg \emp{}$ for every atomic $\emp{}$; this however carries no useful computational meaning. The new principle is therefore treated as a disjunction elimination, where the main premise is the classical axiom and gets cut.

If we try to fit this in a Curry-Howard system, we have now two proof terms representing a construction of the same conclusion, corresponding to the two proof branches where the first and then the second disjunct are assumed. By looking at the shape of the two assumptions, we can see that in the first case we need a condition to hold for all values, while in the second we are looking for a counterexample. The idea is that we should create a new proof term where we include both possible computations, and during the computation itself we might switch from the first to the second. Hence the \emuno rule that we add to the system has the following form:

\[ \begin{array}{c} \Gamma, a: \forall \alpha^{\Nat} \emp{} \vdash u: C\ \ \ \Gamma, a: \exists \alpha^{\Nat} \neg \emp{} \vdash v:C\\
\hline
\Gamma\vdash \E{a}{u}{v} : C
\end{array} \]

Here $a$ represents a communication channel between the two possible computations. The hypothesis $\forall \alpha^{\Nat} \emp{}$ is computationally void: it only serves as a certificate for the correctness of $u$; conversely, the branch where we assume $\exists \alpha^{\Nat}\lnot \emp{}$ might ask for an actual witness in order to proceed. Informally, what we want to accomplish with the reduction rules is that we should reduce inside $u$ and check for all the used instances of the universal hypothesis whether $\emp{}[n/\alpha]$ is actually true. Whenever one such instance is refuted, we have found a witness for $\neg \emp{}$, and we can employ it for the execution of $v$. This is obtained by new terms that we should use when we introduce assumptions that are to be eliminated via classical reasoning: we introduce the two typing rules
\[\Gamma, a:{\forall \alpha^{\Nat} \emp{}}\vdash \hyp{a}{\alpha}{P}: \forall\alpha^{\Nat} \emp{}\]
\[\Gamma, a:{\exists \alpha^{\Nat} \lnot\emp{}}\vdash \mathtt{W}_{a}^{\exists \alpha \neg \mathsf{P}}: \exists\alpha^{\Nat} \lnot \emp{}\]

In the first case, we introduce a term that makes the \emph{hypothesis} that $\emp{}$ holds for all values of $\alpha$; in the second, the proof term waits for a \emph{witness} for which $\emp{}$ does not hold. From an operational point of view, terms of the form $\hyp{}{\alpha}{P}$ are the ones who can raise an exception, and terms of the form $\mathtt{W}_{a}^{\exists \alpha \neg \mathsf{P}}$ are those who will catch it.

%% Term assignment haem1
\begin{figure*}[!htb]
\begin{description}
\item[Grammar of Untyped Terms]
  \[t,u, v::=\ x\ |\ tu\ |\ tm\ |\ \lambda x u\ |\ \lambda
    \alpha u\ |\ \langle t, u\rangle\ |\ \pi_0u\ |\ \pi_{1} u\ |\
    \inj_{0}(u)\ |\ \inj_{1}(u)\ |\ (m, t)\ | t[x.u, y.v]\ |\
    t[(\alpha, x). u]\]
  \[|\ \E{a}{u}{v}\ |\ \hyp{a}{\alpha}{P}\ |\ \wit{a}{\alpha}{P}\ |\ \True \ |\ \rec u v m \ |\
    \mathsf{r}t_{1}\ldots t_{n}\] where $m$ ranges over terms of $\Language$, $x$ over variables of the lambda calculus and $a$ over $\EM_1$ hypothesis variables. Moreover, in terms of the form $\E{a}{u}{v}$ there is a $\emp{}$ such that all the free occurrences of $a$ in $u$ are of the form $\hyp{a}{\alpha}{P}$ and those in $v$ are of the form $\wit{a}{\alpha}{P}$.
\item[Contexts] With $\Gamma$ we denote contexts of the form $e_1:A_1, \ldots, e_n:A_n$, where $e_{i}$ is either a proof-term variable $x, y, z\ldots$ or a $\EM_{1}$ hypothesis variable $a, b, \ldots$

\item[Axioms] %AXIOM
$\begin{array}{c} \Gamma, x:{A}\vdash x: A 
\end{array}$
$\begin{array}{c} \Gamma, a:{\forall \alpha^{\Nat} \emp{}}\vdash \hyp{a}{\alpha}{P}: \forall\alpha^{\Nat} \emp{}
\end{array}$
$\begin{array}{c} \Gamma, a:{\exists \alpha^{\Nat} \emp{}^\bot}\vdash \wit{a}{\alpha}{P}: \exists\alpha^{\Nat} \emp{}^\bot
\end{array}$

\item[Conjunction] %CONJUNCTION
$\begin{array}{c} \Gamma \vdash u: A\ \ \ \Gamma\vdash t: B\\ \hline \Gamma\vdash \langle
u,t\rangle:
A\wedge B
\end{array}\ \ \ \ $
$\begin{array}{c} \Gamma \vdash u: A\wedge B\\ \hline\Gamma \vdash\pi_0u: A
\end{array}\ \ \ \ $
$\begin{array}{c} \Gamma \vdash u: A\wedge B\\ \hline \Gamma\vdash\pi_1 u: B
\end{array}$
\item[Implication] %IMPLICATION
$\begin{array}{c} \Gamma\vdash t: A\rightarrow B\ \ \ \Gamma\vdash u:A \\ \hline
\Gamma\vdash tu:B
\end{array}\ \ \ \ $
$\begin{array}{c} \Gamma, x:A \vdash u: B\\ \hline \Gamma\vdash \lambda x u:
A\rightarrow B
\end{array}$
\item[Disjunction Intro.] %DISJUNCTION
$\begin{array}{c} \Gamma \vdash u: A\\ \hline \Gamma\vdash \inj_{0}(u): A\vee B
\end{array}\ \ \ \ $
$\begin{array}{c} \Gamma \vdash u: B\\ \hline \Gamma\vdash\inj_{1}(u): A\vee B
\end{array}$
\item[Disjunction Elim.] $\begin{array}{c} \Gamma\vdash u: A\vee B\ \ \ \Gamma, x: A \vdash w_1: C\ \ \ \Gamma, x:B\vdash w_2:
C\\ \hline \Gamma\vdash u [x.w_{1}, x.w_{2}]: C
\end{array}$
\item[Universal Quantification] %FOR ALL
$\begin{array}{c} \Gamma \vdash u:\forall \alpha^{\Nat} A\\ \hline \Gamma\vdash ut: A[t/\alpha]
\end{array} $
$\begin{array}{c} \Gamma \vdash u: A\\ \hline \Gamma\vdash \lambda \alpha u:
\forall \alpha^{\Nat} A
\end{array}$ \\
where $t$ is a term of the language $\Language$ and $\alpha$ does not occur
free in any formula $B$ occurring in $\Gamma$.

\item[Existential Quantification] %EXISTS
$\begin{array}{c}\Gamma\vdash u: A[t/\alpha]\\ \hline \Gamma\vdash (
t,u):
\exists
\alpha^\Nat. A
\end{array}$ \ \ \ \
$\begin{array}{c} \Gamma\vdash u: \exists \alpha^\Nat A\ \ \ \Gamma, x: A \vdash t:C\\
\hline
\Gamma\vdash u [(\alpha, x). t]: C
\end{array} $\\
where $\alpha$ is not free in $C$
nor in any formula $B$ occurring in $\Gamma$.

\item[Induction] %INDUCTION
$\begin{array}{c} \Gamma\vdash u: A(0)\ \ \ \Gamma\vdash v:\forall \alpha^{\Nat}.
A(\alpha)\rightarrow A(\suc(\alpha))\\ \hline \Gamma\vdash \rec uv m : A[m/\alpha] \end{array}$

\item[Post Rules] %ATOMIC RULES
$\begin{array}{c} \Gamma\vdash u_1: A_1\ \Gamma\vdash u_2: A_2\ \cdots \ \Gamma\vdash u_n:
A_n\\ \hline\Gamma\vdash u: A
\end{array}$

where $A_1,A_2,\ldots,A_n,A$ are atomic formulas of $\HA$ and the rule is a Post rule for equality, for a Peano axiom or for a classical propositional
tautology or for booleans and if $n>0$, $u=\mathsf{r} u_{1}\ldots u_{n}$, otherwise $u=\True$.

\item[EM1]$\begin{array}{c} \Gamma, a: \forall \alpha^{\Nat} \emp{} \vdash w_1: C\ \ \ \Gamma, a: \exists \alpha^{\Nat} \emp{}^\bot \vdash w_2:
C\\ \hline \Gamma\vdash \E{a}{w_{1}}{w_{2}} : C
\end{array}$
\end{description}
\caption{Term Assignment Rules for $\HA+\EM_{1}$}
\label{fig:termassignment}
\end{figure*}

In \cref{fig:termassignment}, we define a system of natural deduction for $\HA+\EM_{1}$ together with a term assignment in the spirit of Curry-Howard correspondence for classical logic; for a general treatment of this kind of systems, one could refer to textbooks such as \cite{Sorensen06}. Let $\Language$ be the language of $\HA$, three distinct classes of variables appear in the proof terms: one for proof terms, denoted usually as $x, y,\ldots$; one for quantified variables of $\Language$, denoted usually as $\alpha, \beta, \ldots$; one for hypotheses bound by $\EM_{1}$, denoted usually as $a, b, \ldots$. Atomic predicates are denoted by $\emp{}, \emp{0}, \emp{1}, \ldots$; moreover, by $\emp{}^\bot$ we denote the complement predicate of $\emp{}$, and since atomic predicates are decidable in $\HA$ we have that $\emp{}^\bot \equiv \neg \emp{}$. In the term $\E{a}{u}{v} $ all the occurrences of $a$ in $u$ and $v$ are bound. We assume the usual capture-avoiding substitution for the lambda calculus, and in addition to this we add a new kind of substitution:

\begin{definition}[Witness substitution]
  Let $v$ be any term and $n$ a closed term of $\Language$. Then
  \[ v[a:=n] \]
is the term obtained replacing every occurrence of $\wit{a}{\alpha}{P}$ in $v$ by $(n,\True)$ if $\emp{}[n/\alpha] \equiv \False$, and by $(n,\hyp{a}{\alpha}{\alpha=0}\suc 0)$ otherwise
\end{definition}

Note that the reduction rules for the system in \cref{fig:F} make it clear that the second case will never actually happen; however it is needed in order to prove the normalization of the system.

%% Reduction rules haem1
\begin{figure*}[!htb]
\begin{description}
\item[Reduction Rules for $\HA$] 
\[(\lambda x. u)t\mapsto u[t/x]\qquad (\lambda \alpha. u)t\mapsto u[t/\alpha]\]
\[ \pi_{i}\pair{u_0}{u_1}\mapsto u_i, \mbox{ for i=0,1}\]
\[\inj_{i}(u)[x_{1}.t_{1}, x_{2}.t_{2}]\mapsto t_{i}[u/x_{i}], \mbox{ for i=0,1} \]
\[(n, u)[(\alpha,x).v]\mapsto v[n/\alpha][u/x], \mbox{ for each numeral $n$} \]
\[\rec u v 0 \mapsto u\]
\[\rec u v (\suc n) \mapsto v n (\rec u v n), \mbox{ for each numeral $n$} \]
\item[Permutation Rules for \emuno]
\[(\E{a} u v) w \mapsto \E{a}{uw}{vw} \]
\[\pi_{i}(\E{a} u v) \mapsto \E{a}{\pi_{i}u}{\pi_{i}v} \]
\[(\E{a} u v)[x.w_{1}, y.w_{2}] \mapsto \E{a}{u[x.w_{1}, y.w_{2}]}{v[x.w_{1}, y.w_{2}]} \]
\[(\E{a} u v)[(\alpha, x).w] \mapsto \E{a}{u[(\alpha, x).w]}{v[(\alpha, x).w]} \]
\item[Reduction Rules for \emuno]
\[\E{a} u v\mapsto u,\ \mbox{ if $a$ does not occur free in $u$ }\]
\[\E{a} u v\mapsto v[a:=n],\ \mbox{ if $\hyp{a}{\alpha}{P}n$ occurs in $u$ and $\emp{} [n/\alpha]$ is closed and $\emp{} [n/\alpha]=\False$ }\]
\[(\hyp{a}{\alpha}{P})n \mapsto \True \mbox{ if $\emp{}[n/\alpha]$ is \emph{closed} and $\emp{}[n/\alpha] \equiv \True$}\] 
% \[[a]\wit m \mapsto (n, \True) \mbox{ for each numeral $n$ }\]
\end{description}
\caption{Reduction Rules for $\HA$ + $\EM_{1}$}
\label{fig:F}
\end{figure*}

\section{Realizability interpretation of $\HA+\EM_1$}
\label{sec:real-interpr-ha+em_1}
As we anticipated, this system can be equipped with a realizability interpretation based on the ideas of interactive realizability. In order to do this, we first need to define some classes of terms:
\begin{definition}[Terms in normal form]\mbox{}
  \begin{itemize}
  \item $\sn$ is the set of strongly normalizing untyped proof terms
  \item $\nf$ is the set of normal untyped proof terms
  \item $\postnf$ is the set of the Post normal forms (intuitively,
    normal terms representing closed proof trees made only of Post
    rules whose leaves are universal hypothesis followed by an
    elimination rule), that is: $\True\in\postnf$; for every closed
    term $n$ of $\Language$, if $\hyp{a}{\alpha}{P}n\in\nf$, then
    $\hyp{a}{\alpha}{P}n\in\postnf$; if
    $t_{1}, \ldots, t_{n}\in\postnf$, then
    $\mathsf{r}t_{1}\ldots t_{n}\in\postnf$.
  \end{itemize}
\end{definition}

\begin{definition}[Quasi-Closed terms] If  $t$ is an untyped proof term which  contains as free variables only $\EM_{1}$-hypothesis variables $a_{1}, \ldots, a_{n}$,  such that each occurrence of them is of the form $\hyp{a_i}{\alpha}{P_i}$ for some $\emp{i}$, then $t$ is said to be \emph{quasi-closed}.
\end{definition}

We can now give the definition of realizers for $\HA+\EM_1$. Realizers will be quasi closed terms, and the definition will be by induction on the formula to be realized; the cases for $\land$, $\to$ and $\forall$ are the same as the ones for intuitionistic realizability we are already familiar with. The case for atomic formulas will need to be extended to take into account the case were we have open universal assumptions (since realizers are quasi-closed). Finally, the realizers for $\lor$ and $\exists$ will need a different kind of definition, with induction done also on the shape of the term.
\begin{definition}[Realizability for $\HA +\EM_{1}$]
\label{definition-reducibility}
Assume $t$ is a {\em quasi-closed} term in the grammar of untyped proof terms of $\HA+\EM_{1}$ and $C$ is a {\em closed} formula. We define the relation $t\real C$ by induction on  $C$.
\begin{enumerate}
\item
$t\real \emp{}$ if and only if  one of the following holds:
\begin{enumerate}[i)]
\item $t\in\postnf$  and  $\emp{}\evaluates \False$ implies $t$ contains a subterm  $\hyp{a}{\alpha}{Q}n$ with $\mathsf{Q}[n/\alpha]\evaluates \False$;\\

\item  $t\notin\nf$ and for all $t'$, $t\mapsto t'$ implies $t'\real \emp{}$\\
\end{enumerate}
\item
$t\real {A\wedge B}$ if and only if $\pi_0t \real {A}$ and  $\pi_1t\real {B}$\\

\item
$t\real {A\rightarrow B}$ if and only if for all $u$, if $u\real {A}$,
then $tu\real {B}$\\

\item
$t\real {A\vee B}$  if and only if  one of the following holds:\\
\begin{enumerate}[i)]
\item $t={\inj_{0}(u)}$ and $u\real A$ or $t={\inj_{1}(u)}$ and $u\real B$;\\
\item  $t=\E{a}{u}{v}$ and  $u\real A\lor B$ and $v[a:=m]\real A\lor B$ for every numeral $m$;\\
\item  $t\notin\nf$ is neutral and for all $t'$, $t\mapsto t'$ implies $t'\real A\lor B$.\\
\end{enumerate}
\item
$t\real {\forall \alpha^{\Nat} A}$ if and only if for every closed term $n$ of $\Language$,
$t{n}\real A[{n}/\alpha]$\\
\item
$t\real \exists \alpha^{\Nat} A$ if and only if one of the following holds:

\begin{enumerate}[i)]
\item $t={(n,u)}$ for some numeral $n$ and $u \real A[{n}/\alpha]$;\\

\item  $t=\E{a}{u}{v}$ and  $u\real \exists \alpha^{\Nat}A$ and $v[a:=m]\real \exists \alpha^{\Nat}A$ for every numeral $m$;\\

\item  $t\notin\nf$ is neutral and for all $t'$, $t\mapsto t'$ implies $t'\real \exists \alpha^{\Nat}A$.\\
\end{enumerate}
\end{enumerate}
\end{definition}

As we said, realizers are quasi closed terms: this means that in general realizers could contain open universal assumptions, and thus their correctness depends on them. The base cases of the definition of the realizers for the disjunction and existential quantifiers are again the same as the ones for modified realizability; however, we add a second clause that takes into account the situation where the realizer has used some assumptions; in these cases, we ask that both parts of a term of the shape $\E{a}{u}{v}$ are realizers of the formula in their turn. In a realizer with such a shape, $u$ will then be a realizer with a new open assumption in the form of a term $\hyp{a}{\alpha}{P}$, as the ones just described; $v$, on the opposite, needs a witness in order to compute and therefore we need to substitute a witness in it in order to obtain a realizer.
What this means is that these realizers will still contain a realizer in the usual shape of the clauses (i), but in the form of a \emph{prediction}, as we will see in \cref{proposition-disj}.

We conclude the section by giving some properties of the system, as they are found in the original paper \cite{Aschieri13}, that will be employed in the rest of the thesis. First of all, we define a version of the properties of reducibility candidates in the style of Girard \cite{Girard89}

\begin{definition}  Let  $t$ be a realizer of a formula $A$, define the following properties for $t$, plus an inhabitation property \crcinque\ for $A$:

\cruno\ If $t\real A$, then $t\in \sn$.\\

\crdue\ If $t \real A$ and $t\mapsto^{*} t'$, then $t' \real A$.\\

\crtre\  If $t\notin\nf$ is neutral and  for every $t'$, $t\mapsto t'$ implies $t'\real A$, then $t \real A$.\\

\crquattro\ If $t=\E{a}{u}{v}$, $u\real A$ and $v[a:=m]\real A$ for every numeral $m$, then $t\real A$. \\

\crcinque\ There is a $u$ such that $u\real A$.
\end{definition}

\begin{proposition}\label{proposition-candidates}
Every term $t$ has the properties \cruno, \crdue, \crtre, \crquattro\ and  the inhabitation property \crcinque\ holds.
\end{proposition}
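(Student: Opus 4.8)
The five properties feed into one another, so the plan is to prove them together, extracting \crdue{} first since it is used throughout. For \crdue{} I would argue by induction on the derivation that $t\real A$: the clauses for $\wedge$, $\to$ and $\forall$ are stable under reduction because $\pi_i$, proof-term application and first-order application commute with $\mapsto$; the clauses for atomic formulas, $\vee$ and $\exists$ concern either terms already in $\nf$ (where there is nothing to prove) or terms explicitly closed under one reduction step, so closure under $\mapsto^{*}$ follows by a trivial induction on the length of the reduction.

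The remaining four I would prove by induction on the number of logical symbols of the closed formula $A$ --- a measure unchanged by substituting closed terms of $\Language$ for first-order variables --- establishing them, for a fixed $A$, in the order \cruno, \crtre, \crquattro, \crcinque, and using \crdue{} freely; it is convenient to strengthen \crcinque{} to ``a single term realizes every closed first-order instance of $A$'', so that it survives the quantifier cases. Many sub-cases are immediate. \crtre{} and \crquattro{} for $\emp{}$, $\vee$ and $\exists$ are exactly clauses (ii)/(iii) of \cref{definition-reducibility}. \cruno{} for $\emp{}$ holds because a realizer lies either in $\postnf\subseteq\nf$ or, by the inductive clause (ii), is forced to have a well-founded, finitely branching reduction tree above it; the clause-(ii) cases of $\vee$ and $\exists$ are similar, additionally using that $u$ and all $v[a:=m]$ are strongly normalizing --- these being subderivations --- hence so is $\E{a}{u}{v}$. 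For \crcinque{} one exhibits canonical realizers: the fixed term $\hyp{a}{\alpha}{\alpha=0}\suc 0$ realizes \emph{every} closed atomic formula --- it is in $\postnf$, and whenever the predicate evaluates to $\False$ it is itself the required refuted hypothesis instance --- while for a compound $A$ one pairs / abstracts / injects / tuples the uniform realizers of the immediate subformulas given by the induction hypothesis, using \crtre{} to check that the redexes thereby created are still realizers. Finally, for $\wedge$, $\to$ and $\forall$, \cruno{} follows by projecting, applying $t$ to a realizer of the premise (supplied by \crcinque{} on the subformula), or instantiating $t$ at the numeral $0$, which turns an infinite reduction of $t$ into one of $\pi_i t$, $ts$ or $t\,0$ and contradicts \cruno{} on the smaller formula; and \crtre{} is the usual reducibility-candidates argument --- e.g.\ to show that a neutral, non-normal $t$ all of whose one-step reducts realize $B\to C$ realizes $B\to C$, take $s\real B$, note $s\in\sn$ by \cruno, and prove $ts\real C$ by an auxiliary induction on the longest reduction from $s$, closing with \crtre{} for $C$.

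The crux is \crquattro{} for $\wedge$, $\to$ and $\forall$ --- this is where the system departs from Girard's candidates. Here one must show that the ``communication channel'' term $\E{a}{u}{v}$, with $u\real A$ and $v[a:=m]\real A$ for every numeral $m$, is itself a realizer. The plan is to use the permutation rules of \cref{fig:F} to push the eliminator inside --- e.g.\ $(\E{a}{u}{v})s\mapsto\E{a}{us}{vs}$ --- together with the fact that $\pi_i$, application and first-order application all commute with the witness substitution $[a:=m]$: then the hypotheses give that $us$ (respectively $\pi_i u$, $u\,n$) realizes the relevant immediate subformula $C$ of $A$ and that $(v[a:=m])s$ realizes $C$ as well, so \crquattro{} for the \emph{smaller} formula $C$ makes the permuted term a realizer, and \crtre{} closes the argument once the other one-step reducts of the eliminated term have been checked. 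Those reducts --- chiefly $(\E{a}{u'}{v})s$ with $u\mapsto u'$ --- are handled by an auxiliary induction on $\mu(u)+\mu(v)$, where $\mu(\cdot)$ is the length of the longest reduction, which is finite precisely because $u$, and hence $v$, lie in $\sn$ by \cruno{} (this is why \cruno{} must precede \crquattro). One must also keep track of the side-condition on $\E{a}{u}{v}$ --- that the free occurrences of $a$ in $u$ are of the form $\hyp{a}{\alpha}{P}$ and those in $v$ of the form $\wit{a}{\alpha}{P}$ --- which is preserved under reduction, so that the relevant clauses of \cref{definition-reducibility} and of \crquattro{} continue to apply. Once \crquattro{} is in place, \crcinque{} is routine and the induction closes.
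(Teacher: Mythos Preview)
Your proposal is essentially correct and follows the same route as the paper: induction on the formula, Girard-style reducibility candidates extended by the extra clause \crquattro\ for the exception term $\E{a}{u}{v}$, with \crquattro\ for $\to$, $\land$, $\forall$ reduced via the permutation rules to \crquattro\ on the immediate subformula and closed off by \crtre\ plus an auxiliary induction on reduction lengths. Two small technical points deserve tightening. First, extracting \crdue\ ``by induction on the derivation that $t\real A$'' does not quite stand on its own: for $A=B\to C$ (and similarly $\land$, $\forall$) the definition has no case analysis on $t$, so there is no derivation to induct on --- from $t\real B\to C$ and $t\mapsto t'$ you get $tu\real C$ and $tu\mapsto t'u$, and to conclude $t'u\real C$ you need \crdue\ for the \emph{subformula} $C$. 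So \crdue, too, is proved by induction on the formula; the paper simply folds it into the main induction. Second, in your \crquattro\ argument for $B\to C$ the auxiliary measure $\mu(u)+\mu(v)$ is insufficient: $(\E{a}{u}{v})s$ can also reduce to $(\E{a}{u}{v})s'$ with $s\mapsto s'$, and for that reduct your measure does not decrease. The paper uses the sum of the heights of the reduction trees of $u$, $v$ \emph{and} $s$ (all three are in $\sn$ by \cruno), which handles this case as well; the analogous fix is needed for the \crcinque\ argument showing $(\lambda\_.u)t\real C$. These are easy repairs, not conceptual gaps.
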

\begin{proof} By induction on $C$.
\begin{itemize}
\item $C=\emp{}$ is atomic.\\

\cruno.\ By induction on the definition of $t\real \emp{}$. If $t\in\postnf$, then $t\in\sn$. If $t\notin\nf$ is neutral, then  $t\mapsto t'$ implies $t\real \emp{}$ and thus by induction hypothesis $t'\in\sn$; so $t\in\sn$.
Suppose then $t=\E{a}{u}{v}$. Since $u\real \emp{}$ and for all numerals $n$, $v[a:=n]\real \emp{}$, we have by induction hypothesis $u\in\sn$ and for all numerals $n$, $v[a:=n]\in\sn$; but these last two conditions  are easily seen to imply $\E{a}{u}{v}\in\sn$. \\

\crdue.\  Suppose $t\real \emp{}$.  It suffices to assume that $t\mapsto t'$ and show that  $t'\real \emp{}$. We proceed by induction on the number of the occurrences of the symbol $\E{}{}{}$ in $t$.  If $t$ is neutral, since it is not the case that $t\in\postnf$, by definition of $t \real \emp{}$ we obtain $t'\real \emp{}$.
Therefore, assume  $t$ is not neutral and thus $t=\E{a}{u}{v}$, with $u\real \emp{}$ and for all numerals $n$, $v[a:=n]\real \emp{}$. If $t'=u$ or $t'=v[a:=m]$ for some numeral $m$, we obtain the thesis. If $t'=\E{a}{u'}{v}$, with $u\mapsto u'$, then by induction hypothesis, $u'\real \emp{}$. So $\E{a}{u'}{v}\real \emp{}$ by definition. If $t'=\E{a}{u}{v'}$, with $v\mapsto v'$, then for every numeral $n$, $v[a:=n]\mapsto v'[a:=n]$, and thus by induction hypothesis $v'[a:=n]\real \emp{}$. So $\E{a}{u}{v'}\real \emp{}$ by definition. \\

\crtre\ and \crquattro\ are trivially true by definition of $t\real \emp{}$. \\

\crcinque.\ We have that $\hyp{a}{\alpha}{\,\alpha=0}\suc 0\real \emp{}$.\\

\item $C=A\rightarrow B$.\\

 \cruno.\  Suppose $t\real A\rightarrow B$. By induction hypothesis \crcinque,\  there is an $u$ such that $u\real A$; therefore, $tu\real B$. By induction hypothesis \cruno,\ $tu\in\sn$ and thus $t\in\sn$.  \\

 \crdue\ and \crtre\ are proved as in \cite{Girard89}.\\

 \crquattro. ($\Rightarrow$) Suppose $\E{a}{u}{v}\real A\rightarrow B$ and let $t\real A$. Then $(\E{a}{u}{v})t\real B$ and by \crdue,\ $\E{a}{ut}{vt}\real B$. By \crquattro,\ $ut\real B$ and for all numerals $n$, $v[a:=n]t=vt[a:=n]\real B$. We conclude that $u\real A\rightarrow B$ and $v[a:=n]\real A\rightarrow B$.\\
   ($\Leftarrow$). Suppose $u\real A\rightarrow B$ and $v[a:=n]\real A\rightarrow B$ for every numeral $n$. Let $t\real A$. We show by induction on the sum of the height of the reduction trees of $u, v, t$ (they are all  in $\sn$ by \cruno) that $(\E{a}{u}{v})t\real B$. By induction hypothesis \crtre,\ it is enough to assume $(\E{a}{u}{v})t\mapsto z$ and show $z\real B$. If $z=ut$ or $v[a:=n]t$, we are done. If $z=(\E{a}{u'}{v})t$ or $z=(\E{a}{u}{v'})t$ or $(\E{a}{u}{v})t'$, with $u\mapsto u'$, $v\mapsto v'$ and $t\mapsto t'$, we obtain $z\real B$ by \crdue\ and induction hypothesis. If $z=(\E{a}{ut}{vt})$, by induction hypothesis \crquattro,\ $z\real B$.  \\

% \crtre.\  Assume $t$ is neutral and $t\mapsto t'$ implies $t'\real A\rightarrow B$. Suppose $u\real A$; we have to show that $tu \real B$. By induction hypothesis \crtre\ holds for terms of type $B$. So assume $tu \mapsto z$; it is enough to show that $z\real B$. We proceed by induction on the height of the reduction tree of $u$ ($u\in\sn$ by induction hypothesis \cruno).  If $z=t'u$, with $t \mapsto t'$, then by hypothesis $t'\real A\rightarrow B$, so $z\real B$. If $z=tu'$, with $u\mapsto u'$, by induction hypothesis \crdue\ $u'\real A$, and therefore $z\real B$ by induction hypothesis relative to the size of the reduction tree of $u'$. If  $t=\E{a}{w_{1}} {w_{2}}$ and $z=\E{a}{(w_{1}u)}{(w_{2}u)}$,  then $t\mapsto w_{1}, t\mapsto w_{2}$, thus by hypothesis $w_{1}, w_{2}\real A\rightarrow B$. So
%  $w_{1}u\real B$ and $w_{2}u\real B$; we can then reason by induction on the reduction trees of $w_{1}u$ and $w_{2}u$ and show that the hypotheses of \crtre\ are verified for $z$; we conclude $z\real B$. There are no other cases since $t$ is neutral.\\

 \crcinque.\ By induction hypothesis \crcinque,\ there is a term $u$ such that $u\real B$. We want to show that $\lambda\_. u\real A\rightarrow B$. Suppose $t\real A$: we have to show that $(\lambda\_.u)t\real B$. We proceed by induction on the sum of the height of the reduction trees of $u$ and $t$ (by \cruno,\ $u, t\in\sn$). By induction hypothesis \crtre,\ it is enough to assume $(\lambda\_.u)t\mapsto z$ and show $z\real B$. If $z=u$, we are done. If $z= (\lambda\_.u')t$ or $z=(\lambda\_.u)t'$, with $u\mapsto u'\real B$ (by \crtre) and $t\mapsto t'\real B$ (by \crtre), we obtain $z\real B$ by induction hypothesis.\\
 \item $C=\forall \alpha^{\Nat} A$ or $C=A\land B$. Similar to the case $C=A\rightarrow B$.\\

 \item $C=A_{0}\lor A_{1}$.\\

  \cruno\ By induction on the definition of $t\real A_{0}\lor A_{1}$. If $t=\inj_{i}(u)$, then $u\real A_{i}$, and by induction hypothesis \cruno, $u\in\sn$; therefore, $t\in\sn$. If $t\notin\nf$ is neutral, then  $t\mapsto t'$ implies $t'\real A_{0}\lor A_{1}$ and thus $t'\in\sn$ by induction hypothesis; therefore, $t\in\sn$.
Suppose then $t=\E{a}{u}{v}$. Since $u\real A_{0}\lor A_{1}$ and for all numerals $n$, $v[a:=n]\real A_{0}\lor A_{1}$, we have by induction hypothesis $u\in\sn$ and for all numerals $n$, $v[a:=n]\in\sn$. We conclude as in the case $C=\emp{}$ that $t\in \sn$. \\

 \crdue.\ Suppose $t\real A_{0}\lor A_{1}$. It suffices to assume that $t\mapsto t'$ and show that  $t'\real A_{0}\lor A_{1}$. We proceed by induction on the %number of occurrences of the symbol $\E{}{}{}$ in $t$.
 definition of $t\real A_{0}\lor A_{1}$. If $t=\inj_{i}(u)$, then $t'=\inj_{i}(u')$, with $u\mapsto u'$. By definition of $t\real A_{0}\lor A_{1}$, we have $u \real A_{i}$. By induction hypothesis \crdue,\ $u'\real A_{i}$ and thus $t'\real A_{0}\lor A_{1}$.   If $t\notin\nf$ is neutral, by definition of $t\real A_{0}\lor A_{1}$, we obtain that $t'\real A_{0}\lor A_{1}$. If $t=\E{a}{u}{v}$, with $u\real A_{0}\lor A_{1}$ and for all numerals $n$, $v[a:=n]\real A_{0}\lor A_{1}$. If $t'=u$ or $t'=v[a:=m]$, we are done. If $t'=\E{a}{u'}{v}$, with $u\mapsto u'$, then by induction hypothesis, $u'\real A_{0}\lor A_{1}$. So $\E{a}{u'}{v}\real A_{0}\lor A_{1}$ by definition. If $t'=\E{a}{u}{v'}$, with $v\mapsto v'$, then for every numeral $n$, $v[a:=n]\mapsto v'[a:=n]$ and thus by induction hypothesis $v'[a:=n]\real A_{0}\lor A_{1}$. So $\E{a}{u}{v'}\real A_{0}\lor A_{1}$ by definition. \\

  \crtre\ and \crquattro\ are trivial.\\

\crcinque.\ By induction hypothesis \crcinque,\ there is a term $u$ such that $u\real A_{0}$. Thus $\inj_{0}(u)\real A_{0}\lor A_{1}$. \\

 \item $C=\exists \alpha^{\Nat} A$. Similar to the case $t=A_{0}\lor A_{1}$.
\end{itemize}
\end{proof}

\noindent
This first property can be used in order to state a first result on the meaning of realizers: if we denote by $\econt{u}$ a term of the form  $\E{}{(\E{}{(\E{}{u}{v_{1}})}{v_{2}})\ldots)}{v_{n}}$ for any $n \ge 0$, then

\begin{proposition}[Weak Disjunction and Numerical Existence Properties]\label{proposition-disj}\mbox{}
\begin{enumerate}
\item Suppose $t\real A\lor B$. Then either $t\mapsto^{*} \econt{\inj_{0}(u)}$ and $u\real A$ or $t\mapsto^{*} \econt{\inj_{1}(u)}$ and $u\real B$.
\item Suppose $t\real \exists \alpha^{\Nat} A$. Then $t\mapsto^{*} \econt{(n,u)}$ for some numeral $n$ such that $u\real A[n/\alpha]$.
\end{enumerate}
\end{proposition}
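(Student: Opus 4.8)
The plan is to prove both statements simultaneously by induction on the (well-founded) derivation of $t\real A\lor B$ from clause~4 of \cref{definition-reducibility} --- respectively of $t\real\exists\alpha^{\Nat}A$ from clause~6 --- since the two are structurally identical; at each step one case-splits on which of the three sub-clauses (i), (ii), (iii) was used last.

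In sub-clause~(i), $t$ is already an introduction: $t=\inj_{0}(u)$ with $u\real A$, or $t=\inj_{1}(u)$ with $u\real B$ (and for the existential, $t=(n,u)$ with $u\real A[n/\alpha]$). Reading $\econt{\cdot}$ with $n=0$ wrappers, $\econt{\inj_{0}(u)}$ is literally $\inj_{0}(u)=t$, so $t\mapsto^{*}\econt{\inj_{0}(u)}$ holds reflexively and we are done. In sub-clause~(iii), $t\notin\nf$ is neutral and every one-step reduct realizes the formula; since $t$ is not normal there is at least one $t'$ with $t\mapsto t'$, and the induction hypothesis applied to $t'\real A\lor B$ yields $t'\mapsto^{*}\econt{\inj_{i}(u)}$ with $u$ realizing the relevant disjunct, whence $t\mapsto t'\mapsto^{*}\econt{\inj_{i}(u)}$.

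The substantive case is sub-clause~(ii), $t=\E{a}{w}{w'}$ with $w\real A\lor B$ and $w'[a:=m]\real A\lor B$ for every numeral $m$. I would apply the induction hypothesis to $w$, getting $w\mapsto^{*}\econt{\inj_{i}(u)}$ for some $i$ and some $u$ realizing the corresponding disjunct. Since $\mapsto$ is a congruence (in particular reductions of realizers may be performed in the left component of $\E{a}{\cdot}{\cdot}$, as used throughout \cref{fig:F} and the proof of \cref{proposition-candidates}), we get $t=\E{a}{w}{w'}\mapsto^{*}\E{a}{\econt{\inj_{i}(u)}}{w'}$, and the point is that $\E{a}{\econt{\inj_{i}(u)}}{w'}$ is again of the shape $\econt{\inj_{i}(u)}$: it is the same tower of $\E{}{\cdot}{\cdot}$'s with one more outermost layer $\E{a}{(\cdots)}{w'}$, matching the pattern $\E{}{(\cdots)}{v_{n+1}}$ in the definition of $\econt{\cdot}$. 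The existential case is verbatim the same with $(n,u)$ in place of $\inj_{i}(u)$.

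The only point needing care is the well-foundedness of the induction: sub-clause~(ii) recurses into the strict subterm $w$ while sub-clause~(iii) recurses along a single reduction step, so the correct order is the one induced by the realizability-derivation tree itself, whose well-foundedness is underwritten by \cruno\ (realizers are strongly normalizing, so the reduction tree appealed to in sub-clause~(iii) is finite). Everything else is bookkeeping --- checking that $\econt{\cdot}$ absorbs a new outer $\E{a}{\cdot}{w'}$ layer, and that reductions of realizers propagate under that context --- both immediate from \cref{fig:F} and the definition of $\econt{\cdot}$.
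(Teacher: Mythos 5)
Your argument is correct, but it takes a genuinely different route from the paper's. The paper's proof goes through normalization: by \cruno\ the realizer $t$ is strongly normalizing, so $t\mapsto^{*}t'\in\nf$, and by \crdue\ the normal form $t'$ still realizes $A\lor B$; one then reads off the shape of a \emph{normal} realizer directly from \cref{definition-reducibility}.4 --- clause (iii) is excluded by normality, so $t'$ is either an injection (clause (i)) or a tower $\E{}{\E{}{\E{}{v}{v_{1}}}{v_{2}}\ldots}{v_{n}}$ whose left spine, by repeated application of clause (ii), bottoms out in a normal term $v$ not of the form $\E{}{w_0}{w_1}$, which must then be an injection by clause (i). You instead induct on the generation of the realizability predicate and rebuild the $\econt{\cdot}$ context step by step; this additionally relies on the fact that reduction is a congruence for the left component of $\E{a}{\cdot}{\cdot}$, which is true and used implicitly throughout the proof of \cref{proposition-candidates}. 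Both arguments are sound; the paper's is shorter because it piggybacks on \cruno\ and \crdue, while yours yields a possibly non-normal reduct of the required shape, which still suffices for the statement and for \cref{theorem-extraction}. One small correction: the well-foundedness of your induction is not ``underwritten by \cruno'' --- that would be circular in spirit, since the candidate properties are themselves established by exactly this kind of induction; it is simply the induction principle attached to the inductive clauses of \cref{definition-reducibility} (the paper invokes it as ``induction on the definition of $t\real A_{0}\lor A_{1}$''), and needs no appeal to strong normalization.
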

\begin{proof}\mbox{}
\begin{enumerate}
\item Since $t\in\sn$ by \cruno,\ let $t'$ be such that $t\mapsto^{*} t'\in\nf$. By \crdue,\ $t'\real A\lor B$. If $t'=\inj_{0}(u)$, we are done. The only possibility left is that $t'= \E{}{\E{}{\E{}{v}{v_{1}}}{v_{2}}\ldots}{v_{n}}$, with $v$ not of the form $\E{}{w_{0}}{w_{1}}$.  By \cref{definition-reducibility}.4.(ii) we have $v\real A\lor B$, and since $v$ is normal and not of the form $\E{}{w_{0}}{w_{1}}$, by \cref{definition-reducibility}.4.(i) we have either  $v=\inj_{0}(u)$, with $u\real A$, or $v=\inj_{1}(u)$, with $u\real B$.
\item Similar to 1.
\end{enumerate}
\end{proof}

Informally, this means that a realizer of a disjunction ``contains'' a realizer of one of the disjuncts, and a realizer of an existential statement similarly contains a witness. However, these realizers might rely on universal assumptions. We can specialize this theorem in the case of simpler existential formulas:

\begin{thm}[Existential Witness Extraction]\label{theorem-extraction}
Suppose $t$ is closed, $t\real \exists \alpha^{\Nat} \emp{}$ and $t\mapsto^{*} t'\in\nf$. Then $t'=(n,u)$ for some numeral $n$ such that $\emp{}[n/\alpha]\evaluates \True$.
\end{thm}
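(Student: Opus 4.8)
The plan is to work directly with the given normal form $t'$, transporting realizability to it and then reading off its shape syntactically. Since $t\mapsto^{*}t'$ and $t\real\exists\alpha^{\Nat}\emp{}$, property \crdue\ gives $t'\real\exists\alpha^{\Nat}\emp{}$, and since reduction never creates free variables $t'$ is again closed. As $t'\in\nf$, clause (iii) of \cref{definition-reducibility}.6 is ruled out, so $t'$ satisfies either clause (i) --- $t'=(n,u)$ with $n$ a numeral and $u\real\emp{}[n/\alpha]$ --- or clause (ii) --- $t'=\E{a}{w_{1}}{w_{2}}$.

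The heart of the argument is to exclude clause (ii). Suppose $t'=\E{a}{w_{1}}{w_{2}}$. Normality forbids the redex $\E{a}{w_{1}}{w_{2}}\mapsto w_{1}$, so $a$ must occur free in $w_{1}$, necessarily through an occurrence of $\hyp{a}{\alpha}{P}$ by the grammar restriction. Now $w_{1}$ is itself a quasi-closed normal term realizing $\exists\alpha^{\Nat}\emp{}$, hence again falls under clause (i) or (ii); iterating, and using that $t'$ is a finite term, one reaches an innermost subterm of the form $\E{c}{(m,v)}{w}$ with $m$ a numeral, $v\real\emp{}[m/\alpha]$ and $v\in\nf$, so $v\in\postnf$ by \cref{definition-reducibility}.1. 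A $\postnf$ term binds no $\EM_{1}$-variable and applies every $\hyp{}{}{}$ to a \emph{closed} term of $\Language$; hence if $c$ were free in $v$ it would appear as $\hyp{c}{\gamma}{P}n_{0}$ with $n_{0}$ closed, so $\emp{}[n_{0}/\gamma]$ is a closed atomic formula. By decidability it evaluates to $\True$ --- giving the redex $\hyp{c}{\gamma}{P}n_{0}\mapsto\True$ --- or to $\False$ --- giving the redex $\E{c}{(m,v)}{w}\mapsto w[c:=n_{0}]$ --- and either contradicts $t'\in\nf$. So $c$ is not free in $(m,v)$, whence $\E{c}{(m,v)}{w}\mapsto(m,v)$ is a redex inside $t'$, the final contradiction; clause (ii) is impossible.

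Thus $t'=(n,u)$ with $n$ a numeral and $u\real\emp{}[n/\alpha]$, and it remains to show $\emp{}[n/\alpha]\evaluates\True$. As a normal subterm of the closed term $t'$, $u$ is closed and in $\nf$, so the second case of \cref{definition-reducibility}.1 is unavailable: $u\in\postnf$, together with the implication ``$\emp{}[n/\alpha]\evaluates\False$ implies $u$ contains a subterm $\hyp{b}{\beta}{Q}n_{0}$ with $\mathsf{Q}[n_{0}/\beta]\evaluates\False$''. But a closed $\postnf$ term contains no $\hyp{}{}{}$ subterm at all, so the consequent of that implication fails; hence $\emp{}[n/\alpha]$ does not evaluate to $\False$, and since it is a closed atomic formula of $\HA$ it is decidable, so $\emp{}[n/\alpha]\evaluates\True$. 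The main obstacle is the middle step: excluding the shape $\E{a}{w_{1}}{w_{2}}$ forces one to play the absence of redexes in $t'$ against the rigid structure of $\postnf$ terms and the decidability of closed atomic formulas to squeeze out the contradiction.
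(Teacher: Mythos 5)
Your proof is correct and takes essentially the same route as the paper's: you peel the normal form down to an innermost $(n,u)$ inside a nest of $\E{}{}{}$-terms (re-deriving by hand the decomposition that the paper obtains by citing \cref{proposition-disj}), then play normality against the rigid structure of $\postnf$ terms and the \emuno reduction rules to rule out a nonempty nest, and finally use closedness of $u$ to force $\emp{}[n/\alpha]\evaluates\True$. The only difference is that you inline the proof of the weak existence property instead of invoking it.
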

\begin{proof}\mbox{}
%Assume for contradiction that $\emp{}[n/\alpha]\evaluates \False$.
 By \cref{proposition-disj}, there is some numeral $n$ such that $t'=\econt{(n,u)}$ and $u\real \emp{}[n/\alpha]$.  So $$t'= \E{a_{m}} {\E{a_{2}} {\E{a_{1}} {(n,u)}{v_{1}}} {v_{2}}\ldots} {v_{m}}$$
Since $t'$ is closed, $u$ is quasi-closed and all its free variables are among $a_{1}, a_{2},\ldots, a_{m}$. We observe that  $u$ must be closed. Otherwise,  by \cref{definition-reducibility}.1.(i) and $u\real \emp{}[n/\alpha]$ we deduce that $u\in\postnf$, and thus $u$ should contain a subterm  $\hyp{a_i}{\alpha}{Q}n$; moreover, $\mathsf{Q}[n/\alpha]\evaluates \False$ otherwise $u$ would not be normal; but then we would have either $m\neq 0$ and $t'\notin\nf$ because $t' \mapsto \E{a_{m}}{\E{a_{2}}{v_{1}[a_1:=n]}{v_{2}}\ldots}{v_{m}}$, or $m=0$ and $t'$ non-closed.  Since $u$ is closed, we obtain  $t'=(n,u)$, for otherwise $t' \mapsto \E{a_{m}}{\E{a_{2}}{(n,u)}{v_{2}}\ldots}{v_{m}}$ and $t'\notin\nf$. Since $u\real \emp{}[n/\alpha]$, by \cref{definition-reducibility}.1.(i) it must be $\emp{}[n/\alpha]\evaluates \True$.
\end{proof}

\noindent
 We now come to the main theorem, the soundness of the realizability semantics:
\begin{thm}[Adequacy Theorem]\label{AdequacyTheorem}
Suppose that $\Gamma\vdash w: A$ in
the system $\HA + \EM_1$, with
$$\Gamma=x_1: {A_1},\ldots,x_n:{A_n}, a_{1}: \exists \alpha_{1}^{\Nat} \emp{}_{1}^{\bot},\ldots, a_{m}: \exists \alpha_{m}^{\Nat} \emp{}_{m}^{\bot}, b_{1}: \forall \alpha_{1}^{\Nat}\mathsf{Q}_{1},\ldots, b_{l}:\forall \alpha_{l}^{\Nat}\mathsf{Q}_{l}$$
and that the free variables of the formulas occurring in $\Gamma $ and $A$ are among
$\alpha_1,\ldots,\alpha_k$. For all closed terms $r_1,\ldots,r_k$ of $\Language$, if there are terms $t_1, \ldots, t_n$ such that
\[\text{ for  $i=1,\ldots, n$, }t_i\real A_i[{r}_1/\alpha_1\cdots
{r}_k/\alpha_k]\]
 then
\[w[t_1/x_1\cdots
t_n/x_n\  {r}_1/\alpha_1\cdots
{r}_k/\alpha_k\ a_{1}:=i_{1}\cdots a_{m}:=i_{m}  ]\real A[{r}_1/\alpha_1\cdots
{r}_k/\alpha_k]\]
for every numerals $i_{1}, \ldots, i_{m}$.
\end{thm}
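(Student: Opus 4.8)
The plan is to prove the Adequacy Theorem by structural induction on the derivation of $\Gamma \vdash w : A$, following the standard reducibility-candidates method but using the machinery of \cref{proposition-candidates}. Throughout, I write $\sigma$ for the substitution $[t_1/x_1 \cdots t_n/x_n\ r_1/\alpha_1 \cdots r_k/\alpha_k\ a_1 := i_1 \cdots a_m := i_m]$ induced by the hypotheses of the theorem, so the goal is $w\sigma \real A\sigma$. First I would check that $w\sigma$ is quasi-closed: the only free variables that can remain are the $\EM_1$-hypothesis variables $b_1, \ldots, b_l$ bound to universal formulas in $\Gamma$, and by the grammar restriction their occurrences are all of the form $\hyp{b_j}{\alpha}{Q_j}$, so quasi-closedness holds; this is what makes the inductive statement well-typed. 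Then one proceeds case by case on the last rule.

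The cases for the intuitionistic rules ($\land$-introduction and elimination, $\to$-introduction and elimination, $\forall$-introduction and elimination, $\lor$-introduction, induction) are essentially the classical Girard-style argument, now supported by \cruno--\crcinque. For instance, for $\to$-elimination one uses the definition of $t \real A \to B$ directly; for $\to$-introduction one must show $\lambda x\, u\sigma \real A \to B$, which means showing $(\lambda x\, u\sigma) s \real B$ whenever $s \real A$, and this is done by an inner induction on the sum of the heights of the $\sn$-reduction trees of $u\sigma$ and $s$ (available since both are in $\sn$ by \cruno), applying \crtre\ to reduce to the one-step reducts, exactly as in the proof of \cruno\ for $C = A \to B$ in \cref{proposition-candidates}. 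The axiom rules are immediate: for a proof variable $x_i$, $x_i\sigma = t_i \real A_i\sigma$ by hypothesis; for $\hyp{b_j}{\alpha}{Q_j}$ one checks it realizes $\forall \alpha^\Nat Q_j$, i.e. that $\hyp{b_j}{\alpha}{Q_j}\, n \real Q_j[n/\alpha]$ for every closed $n$, which follows because $\hyp{b_j}{\alpha}{Q_j}n$ is either already $\True$ after a reduction (if $Q_j[n/\alpha] \evaluates \True$) or is a normal term in $\postnf$ containing itself as the required $\hyp{}{}{}$-subterm (if $Q_j[n/\alpha] \evaluates \False$); for $\wit{a_i}{\alpha}{P_i}$, the substitution $a_i := i_i$ replaces it by $(i_i, \True)$ or $(i_i, \hyp{a_i}{\alpha}{\alpha=0}\suc 0)$, and in either case a short check against \cref{definition-reducibility}.6 shows it realizes $\exists \alpha^\Nat P_i^\bot$. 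The Post rules are handled by observing that an atomic realizer is a $\postnf$ term, and a Post-rule combination of $\postnf$ terms is again in $\postnf$; one must argue that if the conclusion atom evaluates to $\False$ then some premise atom does too (soundness of the Post rule), so that the required refuted-hypothesis subterm is inherited.

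The genuinely new case is the $\EM_1$ rule, and this is where \crquattro\ does the essential work. Here $w = \E{a}{w_1}{w_2}$ with $\Gamma, a : \forall \alpha^\Nat \emp{} \vdash w_1 : C$ and $\Gamma, a : \exists \alpha^\Nat \emp{}^\bot \vdash w_2 : C$. Applying the induction hypothesis to the first premise, using that $\hyp{a}{\alpha}{P} \real \forall \alpha^\Nat \emp{}$ (the argument just given for the $\hyp{}{}{}$ axiom) as the realizer for the new hypothesis $a$, yields $w_1\sigma[a := \hyp{a}{\alpha}{P}]$ — effectively just $w_1\sigma$ with $a$ still free as a universal-hypothesis variable — realizing $C\sigma$. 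Applying the induction hypothesis to the second premise, for each numeral $m$ taking the witness substitution $a := m$, yields $w_2\sigma[a := m] \real C\sigma$ for every numeral $m$. These two facts are precisely the hypotheses of \crquattro\ for the term $\E{a}{w_1\sigma}{w_2\sigma} = w\sigma$, so \crquattro\ gives $w\sigma \real C\sigma$. The main obstacle, and the point that needs care rather than cleverness, is bookkeeping the interaction between the outer substitution $\sigma$ and the $\EM_1$ channel variable $a$ — in particular verifying that $w_1\sigma$ only uses $a$ through $\hyp{a}{\alpha}{P}$-occurrences and $w_2\sigma$ only through $\wit{a}{\alpha}{P}$-occurrences (so that the witness substitution $w_2\sigma[a:=m]$ makes sense and matches what the induction hypothesis produces), and that substitution commutes appropriately, e.g. $v[a:=n]t = (vt)[a:=n]$ as already used in \cref{proposition-candidates}. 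Everything else is routine verification against \cref{definition-reducibility} and reuse of the five candidate properties.
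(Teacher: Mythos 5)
Your proposal follows essentially the same route as the paper: induction on the derivation, the same direct verification for the $\hyp{}{}{}$ and $\wit{}{}{}$ axioms and the Post rules, and \crquattro\ doing exactly the work you describe in the $\EM_1$ case, with the universal hypothesis left free and the witness substitution applied on the existential side. The only part your sketch glosses over is the $\lor$- and $\exists$-elimination cases, which in this system are not purely Girard-style because a realizer of a disjunction or existential may itself be an $\E{a}{u}{v}$ term — the paper isolates this in \cref{proposition-somecases}(4)--(6), using the permutation rules together with \crquattro, which is consistent with (and needed to complete) your "routine verification" step.
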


Before proving this theorem, we need an auxiliary lemma

\begin{lemma}\label{proposition-somecases}\mbox{}
\begin{enumerate}
\item If for every $t\real A$, $u[t/x]\real B$, then  $\lambda x\, u\real A\rightarrow B$.
\item If for every closed term $m$ of $\Language$, $u[m/\alpha]\real B[m/\alpha]$, then $\lambda \alpha\, u\real \forall \alpha^{\Nat} B$.
\item If $u\real A_{0}$ and $v\real A_{1}$, then $\pi_{i}\pair{u}{v}\real  A_{i}$.
\item If ${w_{0}[x_{0}.u_{0}, x_{1}.u_{1}]}\real C$ and for all numerals $n$, ${w_{1}[x_{0}.u_{0}, x_{1}.u_{1}]}[a:=n]\real C$, then $(\E{a}{w_{0}}{w_{1}})[x_{0}.u_{0}, x_{1}.u_{1}]\real C$.
\item If $t\real A_{0}\lor A_{1}$ and for every $t_{i}\real A_{i}$ it holds $u_{i}[t_{i}/x_{i}]\real C$, then $t[x_{0}.u_{0}, x_{1}.u_{1}]\real C$.
\item If $t\real \exists \alpha^{\Nat} A$ and for every term $n$ of $\Language$ and $v\real A[n/\alpha]$ it holds $u[n/\alpha][v/x]\real C$, then $t[(\alpha, x).u]\real C$.
\end{enumerate}
\end{lemma}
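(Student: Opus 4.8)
The lemma collects six closure properties that are exactly the ``compatibility'' clauses one needs to push the Adequacy Theorem through its inductive cases (abstraction, $\forall$-introduction, pairing, and the three elimination rules that commute with $\E{a}{-}{-}$). The plan is to prove each clause by induction on the sum of the heights of the reduction trees of the terms involved (all of which lie in $\sn$ by \cruno), using \crtre\ to reduce the goal ``$z\real C$ for the relevant redex $z$'' to ``$z'\real C$ for every one-step reduct $z'$'', and then matching $z'$ against the finitely many reduction rules of \cref{fig:F}. Items 1--3 are the standard reducibility-candidate arguments (essentially as in \cite{Girard89}); in fact item 1 is the $(\Leftarrow)$ direction already carried out inside the proof of \cref{proposition-candidates} for \crcinque, and items 2 and 3 are the analogous computations for $\lambda\alpha$ and $\pi_i\langle-,-\rangle$.

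\textbf{The commuting-conversion cases (4--6).} For item 5 I would argue by induction on the definition of $t\real A_0\lor A_1$ (\cref{definition-reducibility}.4). If $t=\inj_i(u)$ with $u\real A_i$, then $t[x_0.u_0,x_1.u_1]\mapsto u_i[u/x_i]$, and an auxiliary induction on reduction-tree heights together with \crtre\ and the hypothesis $u_i[u/x_i]\real C$ (instantiated at $t_i=u$) gives the claim. If $t=\E{a}{w_0}{w_1}$, then $t[x_0.u_0,x_1.u_1]\mapsto \E{a}{w_0[x_0.u_0,x_1.u_1]}{w_1[x_0.u_0,x_1.u_1]}$ by the permutation rule; the two inductive hypotheses from \cref{definition-reducibility}.4.(ii) give $w_0[x_0.u_0,x_1.u_1]\real C$ and $w_1[x_0.u_0,x_1.u_1][a:=m]\real C$ for every numeral $m$ (here one uses that $[a:=m]$ commutes with the $[x_0.u_0,x_1.u_1]$ substitution, since the case-substitution touches only proof-term variables while $[a:=m]$ touches only $\wit{a}{\alpha}{P}$ occurrences), and then item 4 of this very lemma finishes the case. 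The neutral case of \cref{definition-reducibility}.4.(iii) is immediate by \crtre. Item 6 is the literal analogue with $(\alpha,x).u$ in place of $[x_0.u_0,x_1.u_1]$ and \cref{definition-reducibility}.6, and item 4 is precisely the ``$\E{a}{-}{-}$ pushed through a case-split'' step packaged so that \crquattro\ applies after the permutation conversion fires.

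\textbf{Main obstacle.} The delicate point is the interaction between the two kinds of substitution in clauses 4--6: the statements are phrased with $w_1[x_0.u_0,x_1.u_1][a:=m]$ or $u[n/\alpha][v/x]$, and one must check that witness substitution $[a:=m]$ commutes with ordinary proof-term substitution and with $[n/\alpha]$, and that it does not create or destroy the $\hyp{a}{\alpha}{P}n$ occurrences on which the atomic-realizability clause \cref{definition-reducibility}.1 hinges. This is where the restriction built into the grammar of $\E{a}{u}{v}$ terms (free occurrences of $a$ in $u$ are all $\hyp{a}{\alpha}{P}$, those in $v$ all $\wit{a}{\alpha}{P}$) is essential: it guarantees $[a:=m]$ acts on $v$ only, disjointly from the other substitutions. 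A secondary bookkeeping subtlety is keeping the nested inductions honest — the outer one on the structure/realizability-derivation of the principal term, the inner one on the sum of reduction-tree heights of all the terms in play — so that each appeal to \crtre\ is to a strictly smaller instance. Once these commutation facts are isolated as one-line observations, the rest is the routine redex-by-redex verification and I would not spell it out in full.
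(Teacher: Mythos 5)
Your proposal follows essentially the same route as the paper's proof: items 1--3 deferred to the standard Girard-style arguments, item 4 by an inner induction on reduction-tree heights using \crtre\ with the permutation case discharged by \crquattro, item 5 by induction on the definition of $t\real A_0\lor A_1$ with the $\E{a}{w_0}{w_1}$ case reduced to item 4, and item 6 as the analogue. The observation that one may assume $a$ does not occur in $u_0,u_1$ (so the two substitutions commute) is exactly the paper's opening remark in case 4, so the proposal is correct and matches the paper.
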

\begin{proof}[Proof of \cref{proposition-somecases}]
\mbox{}
\begin{enumerate}

\item As in \cite{Girard89}.
\item As in \cite{Girard89}.
\item As in \cite{Girard89}.\\
\item  We may assume $a$ does not occur in $u_{0}, u_{1}$. By hypothesis, $w_{0}[x_{0}.u_{0}, x_{1}.u_{1}]\real C$ and for every numeral $n$, $w_{1}[x_{0}.u_{0}, x_{1}.u_{1}][a:=n]\real C$. By \cruno, in order to show $\E{a}{w_{0}}{w_{1}}[x_{0}.u_{0}, x_{1}.u_{1}]\real C$, we may proceed by induction on the sum of the sizes of the reduction trees of $w_{0}, w_{1}, u_{0}, u_{1}$. By \crtre,\ it then suffices to assume that $\E{a}{w_{0}}{w_{1}}[x_{0}.u_{0}, x_{1}.u_{1}]\mapsto z$ and show $z\real C$. If $z=w_{0}[x_{0}.u_{0}, x_{1}.u_{1}]$ or $w_{1}[a:=n][x_{0}.u_{0}, x_{1}.u_{1}]$ for some numeral $n$, we are done.
If $z=\E{a}{w_{0}'}{w_{1}}[x_{0}.u_{0}, x_{1}.u_{1}]$ or $z=\E{a}{w_{0}}{w_{1}'}[x_{0}.u_{0}, x_{1}.u_{1}]$ or $z=\E{a}{w_{0}}{w_{1}}[x_{0}.u_{0}', x_{1}.u_{1}]$ or $z=\E{a}{w_{0}}{w_{1}}[x_{0}.u_{0}, x_{1}.u_{1}']$, with $w_{i}\mapsto w_{i}'$ and $u_{i}\mapsto u_{i}'$, then by \crdue\ we can apply the induction hypothesis and obtain $z\real C$.  If
$$z=\E{a}{(w_{0}[x_{0}.u_{0}, x_{1}.u_{1}])}{(w_{1}[x_{0}.u_{0}, x_{1}.u_{1}])}$$
then $z\real C$ by \crquattro.
\\

\item Suppose $t\real A_{0}\lor A_{1}$ and for every $t_{i}\real A_{i}$ it holds $u_{i}[t_{i}/x_{i}]\real C$. In order to show $t[x_{0}.u_{0}, x_{1}.u_{1}]\real C$, we reason by induction of the definition of $t\real A_{0}\lor A_{1}$.
Since by \crcinque\  there are $v_{0},v_{1}$ such that $v_{i}\real A_{i}$, we have $u_{i}[v_{i}/x_{i}]\real A_{i}$, and thus by \cruno,\ $u_{i}[v_{i}/x_{i}]\in\sn$ and $t\in\sn$. We have three cases:\\
\begin{itemize}
\item $t=\inj_{i}(u)$. Then $u\real A_{i}$. We want to show that for every $u'\real A_{i}$, $\inj_{0}(u')[x_{0}.u_{0}, x_{1}.u_{1}]\real C$. By \crtre,\ it suffices to assume that $\inj_{0}(u)[x_{0}.u_{0}, x_{1}.u_{1}]\mapsto z$ and show $z\real C$.
We reason by induction on the sum of the sizes of the reduction trees of $u, u_{0}, u_{1}$. If $z=\inj_{i}(u')[x_{0}.u_{0}, x_{1}.u_{1}]$ or $z=t[x_{0}.u_{0}', x_{1}.u_{1}]$ or $z=t[x_{0}.u_{0}, x_{1}.u_{1}']$, with $u\mapsto u'$ and $u_{i}\mapsto u_{i}'$, then by \crdue\ we can apply the induction hypothesis and obtain $z\real C$. If $z=u_{i}[u/x_{i}]$, since $u\real A_{i}$, we obtain $z\real C$.\\

\item $t=\E{a}{w_{0}}{w_{1}}$. By induction hypothesis $w_{0}[x_{0}.u_{0}, x_{1}.u_{1}]\real C$ and for all numerals $n$, $w_{1}[a:=n][x_{0}.u_{0}, x_{1}.u_{1}]\real C$.  By 4., $\E{a}{w_{0}}{w_{1}}[x_{0}.u_{0}, x_{1}.u_{1}]\real C$.\\

\item $t\notin \nf$ is neutral.  We reason by  induction on the sum of the sizes of the reduction trees of $ u_{0}, u_{1}$. By \crtre,\ it suffices to assume that $t[x_{0}.u_{0}, x_{1}.u_{1}]\mapsto z$ and show $z\real C$. If $z=t'[x_{0}.u_{0}, x_{1}.u_{1}]$, we apply the (main) induction hypothesis and obtain $z\real C$.  If $z=t[x_{0}.u_{0}', x_{1}.u_{1}]$ or $z=t[x_{0}.u_{0}, x_{1}.u_{1}']$, with $u\mapsto u'$ and $u_{i}\mapsto u_{i}'$, then by \crdue\ we can apply the induction hypothesis and obtain $z\real C$.\\ \end{itemize}

\item Analogous to 5.
\end{enumerate}
\end{proof}

\begin{proof}{Proof of the Adequacy Theorem}
\newcommand{\substitution} [1]         { {\overline{#1}} }

Notation: for any term $v$ and formula $B$, we denote
\[v[t_1/x_1\cdots t_n/x_n\ {r}_1/\alpha_1\cdots {r}_k/\alpha_k\ a_{1}:=i_{1}\cdots a_{m}:=i_{m}  ]\]
with $\substitution{v}$ and
 \[B[{r}_1/\alpha_1\cdots {r}_k/\alpha_k]\]
 with $\substitution{B}$. We proceed by induction on $w$ 
%and we cover only some significant cases
. Consider the last rule in the derivation of $\Gamma\vdash w: A$:

\begin{enumerate}

\item If it is the rule $\Gamma \vdash \hyp{b_{j}}{\alpha_{j}}{P_{j}}:  \forall\alpha_{j}^{\Nat} \emp{j}$, then $w=\hyp{b_{j}}{\alpha_{j}}{P_{j}}$ and $A= \forall\alpha_{j}^{\Nat} \emp{j}$. So $\substitution{w}=\hyp{b_{j}}{\alpha_{j}}{\substitution{P}_{j}} $. Let $n$ be any closed term of $\Language$. We must show that $\substitution{w}n\real \substitution{\emp{j}}[n/\alpha_{j}]$. We have  $\hyp{b_{j}}{\alpha_{j}}{\substitution{P}_{j}}n\in \sn$; moreover, if $\hyp{b_{j}}{\alpha_{j}}{\substitution{P}_{j}}n\mapsto z$, then $z$ is $\True$ and $\substitution{\emp{j}}[n/\alpha_{j}]\evaluates \True$, and thus $z\real \ \substitution{\emp{j}}[n/\alpha_{j}]$;
 if $\Hyp{\substitution{P}_{j}}{\alpha_{j}}n\in\nf$, then $\substitution{\emp{j}}[n/\alpha_{j}]\evaluates \False$. We conclude $\hyp{b_{j}}{\alpha_{j}}{\substitution{P}_{j}} \real \forall\alpha_{j}^{\Nat} \substitution{\emp{j}}=\substitution{A}$.\\

\item If it is the rule $ \Gamma \vdash \wit{a_{j}}{\alpha_{j}}{P_{j}}:  \exists\alpha_{j}^{\Nat} \emp{}_{j}^{\bot}$, then $w=\wit{a_{j}}{\alpha_{j}}{P_{j}}$ and $A= \exists \alpha_{j}^{\Nat} \emp{}_{j}^{\bot}$. We have two possibilities. i) $\substitution{w}=(i_{j},\True)$  and $\substitution{\emp{j}}[i_{j}/\alpha_{j}]\evaluates \False$. But this means that $\substitution{w}\real \exists \alpha_{j}^{\Nat}  \substitution{\emp{j}}^{\bot}$. ii) $\substitution{w}=(i_{j}, \hyp{a_{j}}{\alpha}{\,\alpha=0}\suc 0)$. Again, $\substitution{w}\real \exists \alpha_{j}^{\Nat} \substitution{\emp{j}}^{\bot}$.\\

\item
If it is a $\lor$-I rule, say left (the other case is symmetric), then $w=\inj_{0}(u)$, $A=B\vee C$ and $\Gamma \vdash u: B$. So, $\substitution{w}=\inj_{0}(\substitution{u})$. By induction hypothesis $\substitution{u}\real \substitution{B}$ and thus $\substitution{u}\in\sn$.  We conclude $\inj_{0}(\substitution{u}) \real \substitution{B}\lor\substitution{C}= \substitution{A}$.
\\

\item If it is a $\vee$-E rule, then
\[w=  u [x.w_1, y.w_2] \]
 and  $\Gamma \vdash u: B\vee C$, $\Gamma, x: B \vdash w_1: D$, $\Gamma, y: C \vdash w_2: D$, $A=D$.  By induction hypothesis, we have $\substitution{u}\real \substitution{B}\lor \substitution{C}$; moreover,  for every $t\real \substitution{B}$, we have $\substitution{w}_{1}[t/x]\real \substitution{B}$ and for every $t\real \substitution{C}$, we have $\substitution{w}_{2}[t/y]\real \substitution{C}$.  By \cref{proposition-somecases}, we obtain $\substitution{w}=\substitution{u} [x.\substitution{w}_1, y.\substitution{w}_2]\real \substitution{D}$.
\\
\item The cases $\exists$-I and $\exists$-E are similar respectively to $\lor$-I and $\lor$-E.\\
\item If it is the  $\forall$-E rule, then $w=ut$, $A=B[t/\alpha]$
and $\Gamma \vdash u: \forall \alpha^{\Nat} B$. So,
$\substitution{w}=\substitution{u}\substitution{t}$.  By inductive hypothesis  $\substitution{u}\real
\forall\alpha^{\Nat} \substitution{B}$ and so $\substitution{u}\substitution{t}\real \substitution{B}[\substitution{t}/\alpha]$. \\

\item
If it is the  $\forall$-I rule, then $w=\lambda \alpha u$, $A=\forall \alpha^{\Nat} B$ and $\Gamma \vdash u: B$ (with $\alpha$ not occurring free in the formulas of $\Gamma$). So, $\substitution{w}=\lambda \alpha \substitution{u}$, since we may assume $\alpha\neq \alpha_1, \ldots, \alpha_k$. Let $t$ be any closed term of $\Language$; by \cref{proposition-somecases}), it is enough to prove that $\substitution{u}[t/\alpha]\real \substitution{B}[{t}/\alpha]$, which amounts to show that the induction hypothesis can be applied to $u$. For this purpose, we observe that, since $\alpha\neq \alpha_1, \ldots, \alpha_k$, for $i=1, \ldots, n$ we have
\[t_i\real \substitution{A}_i=\substitution{A}_i[t/\alpha]\]

%   \item If it is the  $\exists$-E rule, then \[w=(\lambda
% \alpha^{\tau}\lambda x^{|B|} t)(\pi_0u)( \pi_1u)\] $\Gamma, x:B \vdash t: A$ and
% $\Gamma \vdash u: \exists \alpha^{\tau}. B$. Assume ${v} = \pi_0
% \substitution{u}[s]$. Then
% \[\substitution{t}[{v}/\alpha^{\tau},\pi_1\substitution{u}/
% x^{|{B}|}]\real
% \substitution{A}[{v}/\alpha^{\tau}]=\substitution{A}\] by
% inductive hypothesis, whose application being justified by the
% fact, also by induction, that $\substitution{u}\real \exists
% \alpha^{\Nat}.
% \substitution{B}$ and hence $\pi_1\substitution{u}\real
% \substitution{B}[{v}/\alpha^{\tau}]$. We thus obtain by
% %% probabile prova vecchia, aspetto sorgenti nuovi
% %$ \substitution{w} [s] =\substitution{t}[\pi_0 \substitution{u}/\alpha^{\tau}\ \pi_1\substitution{u}/x^{|B|}] [s] $ and saturation (\cref{proposition-saturation}) that
% \[\substitution{w} \real \substitution{A}\]

%   \item If it is the $\exists$-I rule, then $w=\langle  t,u\rangle
% $, $A=\exists
% \alpha^{\tau}
% B$, $\Gamma \vdash u: B[t/\alpha^{\tau}]$. So, $\substitution{w}=\langle
% \substitution{t},\substitution{u}\rangle $; and, indeed, $\pi_1
% \substitution{w}=\substitution{u}\real \substitution{B}[\pi_0\substitution{w}/\alpha^{\tau}]=\substitution{B}[\substitution{t}/\alpha^{\tau}]$ since by induction hypothesis
% %% probabile prova vecchia, aspetto sorgenti nuovi
% %$\substitution{u}\real \substitution{B}[\substitution{t}/\alpha^{\tau}]$. By saturation we conclude the thesis. \\

  \item If it is the induction rule, then $w=
\rec u v t$, $A=B(t)$, $\Gamma \vdash u: B(0)$ and $\Gamma \vdash v:
\forall \alpha^{\Nat}. B(\alpha)\rightarrow B(\suc(\alpha))$. So,
$\substitution{w}=
\rec \substitution{u}\substitution{v}l$, for some numeral $l=\substitution{t}$.

We  prove that for all numerals $n$, $\rec \substitution{u}\substitution{v} n \real \substitution{B}({n})$. By \crtre,\ it is enough to suppose that  $\rec \substitution{u}\substitution{v} n \mapsto w$ and show that $w\real \substitution{B}({n})$. By induction hypothesis $\substitution{u}\real
\substitution{B}(0)$ and $\substitution{v}{m}\real
\substitution{B}({m})\rightarrow
\substitution{B}({\suc(m)})$ for all closed terms $m$ of $\Language$.  So by \cruno,\ we can reason by induction on the sum of the sizes of reduction trees of $\substitution{u}$ and $\substitution{v}$ and the size of $m$. If $n=0$ and $w=\substitution{u}$, then we are done. If $n=\suc(m)$ and $w=\substitution{v}m(\rec \substitution{u}\substitution{v}m)$, by induction hypothesis $\rec \substitution{u}\substitution{v}m\real \substitution{B}({m})$; therefore, $w\real \substitution{B}(\suc{m})$. If $w=\rec u' \substitution{v}m$, with $\substitution{u}\mapsto u'$, by induction hypothesis $w\real\substitution{B}(m)$. We conclude the same if  $w=\rec \substitution{u} {v}'m$, with $\substitution{v}\mapsto v'$.

We thus obtain that  $\substitution{w}\real \substitution{B}(l)=\substitution{B(t)}$.
\\

\item If it is the $\EM_{1}$ rule, then $w= \E{a}{u}{v}$, $\Gamma, a: \forall \alpha^{\Nat} \emp{} \vdash u: C$ and $\Gamma, a:\exists \alpha^{\Nat}\emp{}^{\bot} \vdash v:
C$ and $A=C$. By induction hypothesis, $\substitution{u}\real \substitution{C}$ and for all numerals $m$, $\substitution{v}[a:=m]\real \substitution{C}$. By \crquattro,\ we conclude $\substitution{w}=\E{a}{\substitution{u}}{\substitution{v}}\real \substitution{C}$. \\

  \item If it is a Post rule, the case $w$ is $\True$ is trivial, so we may assume $w=\mathsf{r}t_{1}\ldots t_{n}$, $A=\emp{}$ and $\Gamma\vdash t_{1}: \emp{1}, \ldots, \Gamma \vdash t_{n}: \emp{n}$. By induction hypothesis, for $i=1,\ldots, n$, we have $\substitution{t}_{i}\real \substitution{\emp{i}}$. By \cruno,\ we can argue by induction on the size of the reduction tree of $\substitution{w}$. We have two cases. i) $\substitution{w}\in\nf$. For $i=1,\ldots, n$, by \cref{theorem-extraction}, we obtain $\substitution{t}_{i}\in\postnf$. Therefore, also $\substitution{w}\in\postnf$. Assume now $\substitution{\emp{}}\evaluates \False$. Then, for some $i$, $\substitution{\emp{i}}\evaluates \False$. Therefore, $\substitution{t}_{i}$ contains a subterm  $[a]\Hyp{Q}{\alpha}n$ with $\mathsf{Q}[n/\alpha]\evaluates \False$ and thus also $\substitution{w}$. We conclude $\substitution{w}\real \substitution{\emp{}}$. ii) $\substitution{w}\notin\nf$.   By \crtre,\ it is enough to suppose $\substitution{w}\mapsto z$ and show $z\real \substitution{\emp{}}$. We have $z=\mathsf{r}\substitution{t}_{1}\ldots \substitution{t}_{i}'\ldots \substitution{t}_{n}$, with $\substitution{t}_{i}\mapsto \substitution{t}_{i}'$, and by \crdue,\ $\substitution{t}_{i}'\real \substitution{\emp{i}}$. By induction hypothesis, $z\real \substitution{\emp{}}$.
\end{enumerate}
\end{proof}

%The full proof can be found in \cite{Aschieri13}. 
As an easy corollary, we get strong normalization of the system

\begin{corollary}[Strong Normalization of $\HA+\EM_1$] All terms of $\HA+\EM_{1}$ are strongly normalizing.
\end{corollary}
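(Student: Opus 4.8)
The plan is to read off strong normalization from the Adequacy Theorem (\cref{AdequacyTheorem}) together with property \cruno{} of \cref{proposition-candidates}. The idea is standard: any typed term can be closed off by substituting realizers for its free variables; the Adequacy Theorem guarantees that the closed-off term is itself a realizer, hence strongly normalizing by \cruno{}; and a simple lifting argument pushes this back to the original term.

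In detail, suppose $\Gamma \vdash w : A$ in $\HA+\EM_1$, with $\Gamma$ of the shape required by \cref{AdequacyTheorem} and the free first-order variables of the formulas of $\Gamma$ and of $A$ among $\alpha_1,\dots,\alpha_k$. First I would pick closed terms $r_1,\dots,r_k$ of $\Language$ (these exist, e.g.\ each $r_i=\num 0$), and for each hypothesis $A_i$ of $\Gamma$ apply the inhabitation property \crcinque{} to the closed formula $A_i[r_1/\alpha_1\cdots r_k/\alpha_k]$ to obtain a realizer $t_i$ of it; finally pick arbitrary numerals $i_1,\dots,i_m$. Writing $\overline{w}$ for
\[ w[t_1/x_1\cdots t_n/x_n\ r_1/\alpha_1\cdots r_k/\alpha_k\ a_1:=i_1\cdots a_m:=i_m], \]
the Adequacy Theorem yields $\overline{w}\real A[r_1/\alpha_1\cdots r_k/\alpha_k]$, whence $\overline{w}\in\sn$ by \cruno{}.

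It then remains to infer $w\in\sn$ from $\overline{w}\in\sn$, and for this I would use that the substitution forming $\overline{w}$ commutes with reduction in the favourable direction: whenever $w\mapsto w'$ one has $\overline{w}\mapsto^{+}\overline{w'}$. The substitution replaces only free proof-term variables, free first-order variables and free $\EM_1$-hypothesis variables, none of which is bound at the contracted redex once bound variables are renamed apart, so each redex of $w$ is still a redex of $\overline{w}$ contracting to the matching instance; moreover the substitution can only enable extra $\EM_1$-reductions — e.g.\ by turning an open formula $\emp{}[n/\alpha]$ appearing in a side condition into a closed one — never disable the redex being contracted. Hence an infinite reduction sequence out of $w$ would lift to an infinite one out of $\overline{w}$, contradicting $\overline{w}\in\sn$; therefore $w\in\sn$.

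The only point requiring real attention is this commutation property, specifically the interplay of the witness substitutions $a_j:=i_j$ and the first-order substitutions $r_j/\alpha_j$ with the $\EM_1$ reduction and permutation rules (for instance, that applying the outer substitution after an internal witness substitution $v[a:=n]$ agrees with performing the internal witness substitution after the outer one, given that the bound hypothesis variable $a$ may be renamed away from $a_1,\dots,a_m$). This is a mechanical substitution-lemma check of exactly the kind already implicit in the proof of \cref{AdequacyTheorem}, and once it is in place the corollary is immediate.
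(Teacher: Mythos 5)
Your proof is correct and is exactly the argument the paper intends: the paper states this result without proof, as ``an easy corollary'' of the Adequacy Theorem, and the standard derivation is precisely your combination of \crcinque{} (to inhabit the hypotheses with realizers), the Adequacy Theorem, \cruno{}, and the lifting of reduction steps through the closing substitution. The one point the paper glosses over --- that every redex of $w$ survives the substitution, so an infinite reduction sequence out of $w$ would induce one out of the closed-off term --- is the point you correctly identify and justify, including the check that the side conditions of the $\EM_1$ reduction rules are not invalidated by substituting closed terms and numerals.
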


%%% Local Variables:
%%% mode: latex
%%% TeX-master: "tesi"
%%% End:

\chapter{Markov's principle in $\HA + \EM_1$}
\label{cha:markovs-principle-ha}
In the previous chapters we have developed the basic tools for understanding Curry-Howard systems and realizability semantics. In this chapter, we will perform a deeper analysis of the system $\HA+\EM_1$, and propose a restricted version (that we will call $\HA+\EM_1^-$) that gains more properties. In particular, we will prove a subject reduction theorem and then use it in order to show that the restricted system satisfies the requirements of constructive logic: whenever we prove a disjunction we are able to prove one of the disjuncts, and whenever we prove a simply existential statement, we are able to exhibit a witness. Finally, we will show that Markov's principle is provable in this restricted system and that it has a realizer that exhibits its computational content; moreover, we will show that Markov's principle is equivalent to the restricted form of excluded middle we have introduced.

Consider the system \haemuno of \cite{Aschieri13} presented in \cref{sec:system-ha+em_1}. We modify the rule \emuno by restricting it to the case where the conclusion is of the form $\exists x C$ with $C$ an atomic formula:
 
\begin{prooftree}
  \AxiomC{$\Gamma,\forall x \emp{} \vdash \exists x C$}
  \AxiomC{$\Gamma,\exists x \emp{}^\bot \vdash \exists x C$}
  \RightLabel{\emeno}
  \BinaryInfC{$\Gamma \vdash \exists x C$}
\end{prooftree}

We call this new rule \emeno.

\section{Subject reduction for \haemeno}
The subject reduction property asserts that whenever a proof term has a certain type, and it gets reduced a certain number of times, the reduced term will have the same type. When types are taken to correspond to formulas, subject reduction gives us two very important facts:
\begin{itemize}
\item From the paradigmatic point of view, it connects the concepts of \emph{proof normalization} and \emph{computation}. Reduction rules for the proof terms are usually direct simulations of proof normalization steps. If the system does enjoy the subject reduction property, we can effectively identify these two notions.
\item From a proof-theoretic point of view, when it is added to an adequate realizability interpretation it enables one to draw conclusions on the logical system based on the behaviour of the proof terms. A crucial example of this is given in \cref{sec:disj-exist-prop}.
\end{itemize}

More formally, we can write
\begin{definition}[Subject reduction]
\label{subred}
  A system enjoys subject reduction if whenever $\Gamma \vdash M : \tau$ and $M \mapsto^* N$, then also $\Gamma \vdash N : \tau$
\end{definition}
In \cite{Aschieri13} it is mentioned that system \haemuno has the subject reduction property, however the result is not proved. Moreover, classic textbooks such as \cite{Sorensen06} only offer a full proof for simply typed systems (i.e. where the only set of rules is $\to$-I and $\to$-E). We shall now give a detailed proof for the system \haemeno.

We first need two preliminary lemmas, similar to the ones presented in \cite{Sorensen06} but extended for our new rules. The main one is the \emph{Generation Lemma}, that given a typed term will allow us to talk about the terms and types used in its type derivation. Then we will need to make sure that substitutions (both ordinary and the witness substitution we have previously defined) do not affect typing of a term.

\begin{lemma}[Generation Lemma]
  \label{lemma:gen}
  Suppose $\Gamma \vdash t: \tau$. 
  \begin{enumerate}[(i)]
  \item If $t$ is of the form $\lambda x.u$ and $x \not \in dom(\Gamma)$, then $\tau = \tau_1 \to \tau_2$ and $\Gamma, x:\tau_1 \vdash u: \tau_2$
  \item If $t$ is of the form $uv$, then $\Gamma \vdash u: \sigma \to \tau$ and $\Gamma \vdash v : \sigma$ for some $\sigma$
  \item If $t$ is of the form $\lambda \alpha.u$ and $\alpha$ is not free in $\Gamma$ then $\tau = \forall \alpha^{\Nat} \sigma$ and $\Gamma \vdash u: \sigma$ 
  \item If $t$ is of the form $um$, where $m$ is a term in $\mathcal{L}$, then $\tau = \sigma[m/\alpha]$, and $\Gamma \vdash u : \forall \alpha^{\Nat} \sigma$.
  \item If $t$ is of the form $u[x.w_1,x.w_2]$, then there are $\tau_1$,$\tau_2$ such that $\Gamma \vdash u: \tau_1 \lor \tau_2$, $\Gamma,x:\tau_1 \vdash w_1 : \tau$, $\Gamma,x:\tau_2 \vdash w_2 : \tau$

  \item If $t$ is of the form $\inj_i(u)$, then $\tau = \tau_1 \lor \tau_2$ and $\Gamma \vdash u : \tau_i$
  \item If $t$ is of the form $\pair{u}{v}$, then $\tau = \tau_1 \land \tau_2$ and $\Gamma \vdash u:\tau_1$, $\Gamma \vdash v:\tau_2$
  \item If $t$ is of the form $\pi_i(u)$, then $\Gamma \vdash u : \tau \land \sigma$ or $\Gamma \vdash u:\sigma \land \tau$ (resp. if $i=1$ or 2)
  \item If $t$ is of the form $u[(\alpha,x).v]$, where $\alpha$ is not free in $\tau$ and $\Gamma$, then there is $\sigma$ such that $\Gamma, x: \sigma \vdash v : \tau$ and $\Gamma \vdash u : \exists \alpha^{\Nat}. \sigma $
  \item If $t$ is of the form $(m,u)$, then $\tau = \exists \alpha^{\Nat}. \tau_1$ and $\Gamma \vdash u: \tau_1[m/\alpha]$
  \item If $t$ is of the form $\rec u v m$, then $\tau = \sigma(m)$, $\Gamma \vdash u : \sigma (0)$, $\Gamma \vdash v : \forall \alpha^{\Nat}.\sigma (\alpha) \to \sigma(\suc \alpha)$
  \item If $t$ is of the form $[a]\Hyp{P}{\alpha} $, then $\Gamma \vdash [a]\Hyp{P}{\alpha} : \forall \alpha^{\Nat} \emp{}$ and $\Gamma \vdash a: \forall \alpha^{\Nat} \emp{}$
  \item If $t$ is of the form $\E{a}{u}{v}$, then  $\Gamma, a: \forall \alpha^{\Nat} \emp{} \vdash u : \tau$ and $\Gamma, a: \exists \alpha^{\Nat} \emp{}^{\bot} \vdash v : \tau$. Moreover,  $\tau=\exists \alpha \emp{}$.
  \end{enumerate}
\end{lemma}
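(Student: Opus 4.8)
# Proof Proposal for the Generation Lemma

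The plan is to prove all thirteen cases uniformly by a single inspection of the typing rules in Figure~\ref{fig:termassignment}, exploiting the fact that the term assignment is essentially \emph{syntax-directed}: each term constructor is introduced by exactly one typing rule (with the only subtlety being that contexts can be weakened, so an axiom may appear with an arbitrary $\Gamma$ around it). Concretely, I would argue: since $\Gamma \vdash t : \tau$ holds, fix a derivation $\mathcal{D}$ of it and look at its last rule. For a term of a given shape — say $t = \lambda x.u$ — the only rule that could have produced a term of that shape in the conclusion is the corresponding introduction rule ($\to$-I in this case). Reading off the premise of that rule gives exactly the claimed decomposition: $\tau = \tau_1 \to \tau_2$ and $\Gamma, x:\tau_1 \vdash u:\tau_2$. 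The same one-line argument discharges cases (i)--(xii); each is just ``the last rule must be the unique rule whose conclusion has a term of this form, and its premises are precisely what is asserted.''

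For the side conditions I would be slightly careful. In (i) the hypothesis $x \notin \mathrm{dom}(\Gamma)$ is what lets us identify the bound variable of the $\lambda$ with the one named in the $\to$-I rule (up to $\alpha$-renaming, which we may always perform on the derivation); similarly in (iii) and (ix) the freeness conditions on $\alpha$ match the variable conditions attached to $\forall$-I and $\exists$-E, so the rule genuinely applies and the premise is well-formed. In (xii), note that a term $[a]\Hyp{P}{\alpha}$ can only be typed by the hypothesis axiom $\Gamma, a:\forall\alpha^{\Nat}\emp{}\vdash \hyp{a}{\alpha}{P}:\forall\alpha^{\Nat}\emp{}$, which forces both $\tau = \forall\alpha^{\Nat}\emp{}$ and $a:\forall\alpha^{\Nat}\emp{}$ to be in the context; and similarly for $\wit{a}{\alpha}{P}$ terms, which we would mention if needed.

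The one case that carries genuine content is (xiii), the \emeno\ rule: here I need not only the shape of the premises but also the extra assertion that $\tau$ must be of the form $\exists\alpha\emp{}$ with $\emp{}$ atomic. This is exactly the point where the \emph{restriction} built into \emeno\ (as opposed to the unrestricted \emuno\ of $\HA+\EM_1$) is used: the conclusion of the \emeno\ rule is syntactically constrained to be $\exists x C$ with $C$ atomic, so any derivation ending in a term $\E{a}{u}{v}$ must have been produced by \emeno, whence $\tau = \exists x\, C$ for some atomic $C$, and the two premises $\Gamma, a:\forall\alpha^{\Nat}\emp{}\vdash u:\tau$ and $\Gamma, a:\exists\alpha^{\Nat}\emp{}^{\bot}\vdash v:\tau$ are read off directly. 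The subtlety I expect to be the main obstacle — if anything here is an obstacle at all — is making sure the grammar-level constraint on terms $\E{a}{u}{v}$ (that free occurrences of $a$ in $u$ are of the form $\hyp{a}{\alpha}{P}$ and those in $v$ of the form $\wit{a}{\alpha}{P}$, for a \emph{fixed} $\emp{}$) is consistent with the typing rule, so that the $\emp{}$ appearing in the premises is uniquely determined; this is immediate from the side condition in the grammar of untyped terms but worth stating explicitly. Since there is no cut/composition rule and no subsumption, no induction on the derivation is actually needed — inspection of the last rule suffices — so the proof is short once the case analysis is set up.
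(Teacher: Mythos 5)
Your proposal is correct and takes essentially the same route as the paper, which likewise argues that the term assignment is syntax-directed, spells out only the $\lambda x.u$ case (``the type derivation must end with the $\to$-introduction rule'') and declares the rest similar. You in fact supply more detail than the paper does, in particular on the side conditions and on why case (xiii) forces $\tau = \exists \alpha\, \emp{}$ via the restriction built into the \emeno\ rule.
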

\begin{proof}
  Consider for example the case of $t = \lambda x.u$. Then since the term has a type, the type derivation must end with the $\to$-introduction rule. Then it follows that $\tau = \tau_1 \to \tau_2$ and  $\Gamma, x:\tau_1 \vdash u: \tau_2$. The other cases are similar.
\end{proof}

\begin{lemma}[Substitution preserves types]
\label{lemma:subs}
\mbox{}
\begin{enumerate}[(i)]
\item If $\Gamma \vdash u: \tau$ and $\Gamma (x) = \Gamma^{\prime} (x)$ for all $x$ free in $u$, then $\Gamma^{\prime} \vdash u : \tau$ %%TODO Forse non mi serve
\item If $\Gamma, x : \sigma \vdash u : \tau$ and $\Gamma \vdash t : \sigma$, then $\Gamma \vdash u[t/x] : \tau$
%%\item If $\Gamma \vdash u : \tau$, $m \in \mathcal{L}$ and $\alpha$ is not free in $\Gamma$ then $\Gamma \vdash u[m/\alpha] : \tau[m/\alpha]$ 
\item If $\Gamma \vdash u : \tau$, $m \in \mathcal{L}$, then $\Gamma[m/\alpha] \vdash u[m/\alpha] : \tau[m/\alpha]$ 
\end{enumerate}
\end{lemma}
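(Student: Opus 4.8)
The plan is to prove the three items by induction, using the Generation Lemma (\cref{lemma:gen}) to read off the shape of the last rule of each derivation, and recalling that substitution commutes with every term constructor of \cref{fig:termassignment} by definition.

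For (i) I would induct on the structure of $u$ (equivalently, on the typing derivation). The base cases are the three axioms: for $x:A\vdash x:A$ and for the special axioms typing $\hyp{a}{\alpha}{P}$ and $\wit{a}{\alpha}{P}$, the conclusion depends only on a single declaration of the context, and this declaration agrees between $\Gamma$ and $\Gamma'$ by hypothesis since the relevant variable is free in $u$. In each inductive case the premises of the last rule are typing judgements for subterms of $u$, whose free variables are contained in those of $u$, so the inductive hypothesis applies; for the binding rules ($\to$-I, $\forall$-I, $\vee$-E, $\exists$-E, $\EM_1$) one first renames the bound proof variable, first-order variable, or $\EM_1$-hypothesis variable so that it lies outside $dom(\Gamma)\cup dom(\Gamma')$, and then the extended contexts $\Gamma,y:\rho$ and $\Gamma',y:\rho$ still agree on all variables free in the premise.

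For (ii) I would induct on the derivation of $\Gamma,x:\sigma\vdash u:\tau$. The only non-trivial base case is $u=x$: by \cref{lemma:gen} then $\tau=\sigma$, and since $u[t/x]=t$ the conclusion is the hypothesis $\Gamma\vdash t:\sigma$; if $u$ is any other axiom (a variable $y\neq x$, or $\hyp{a}{\alpha}{P}$, or $\wit{a}{\alpha}{P}$) then $u[t/x]=u$ and the superfluous declaration $x:\sigma$ is discarded by (i). In each inductive case, after $\alpha$-renaming every variable bound in $u$ so that it differs from $x$ and does not occur free in $t$, substitution distributes over the term constructor exactly as the rule requires, so we apply the inductive hypothesis to the premises, using (i) to transport $\Gamma\vdash t:\sigma$ to the locally extended context; the eigenvariable side conditions of $\forall$-I, $\exists$-E and $\EM_1$ are untouched by the renaming.

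For (iii) the induction is again on the derivation, but now the care goes into the rules where a first-order term is substituted \emph{into a formula} --- $\forall$-I/$\forall$-E, $\exists$-I/$\exists$-E, the induction rule, and the $\EM_1$ axioms and rule, where the atomic predicate $\emp{}$ and the quantified premises are affected by $[m/\alpha]$. Renaming the first-order bound variable $\beta$ of $u$ apart from $\alpha$ and from the variables of $m$, the standard identity letting two first-order substitutions commute reassembles the rule on $u[m/\alpha]$ with type $\tau[m/\alpha]$ and context $\Gamma[m/\alpha]$, and the conditions ``$\beta$ not free in $\Gamma$'' persist because $m$ does not bring $\beta$ back in. The only thing that needs attention anywhere in this lemma is precisely this bookkeeping around binders and eigenvariable conditions in the three rules that go beyond the simply typed fragment --- $\vee$-E (whose two branches share the bound proof variable), $\exists$-E, and $\EM_1$; once the bound variables are renamed apart from the substituted data, every case collapses to ``apply the inductive hypothesis, then weaken via (i)''. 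I expect no genuine difficulty, only the need to be systematic, and these three statements are exactly what the subject reduction argument of \cref{subred} will consume.
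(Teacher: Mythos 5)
Your proposal is correct and follows essentially the same route as the paper: induction on the structure of $u$ (equivalently the derivation), reading off the last rule via the Generation Lemma, $\alpha$-renaming bound variables apart from the context and the substituted data, and reassembling the rule after applying the inductive hypothesis, with part (i) used for weakening in part (ii) and the commutation of first-order substitutions handling part (iii). The only difference is that you are somewhat more explicit than the paper about the eigenvariable bookkeeping, which the paper largely dismisses as ``the other cases are similar.''
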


\begin{proof} \mbox{}

\begin{enumerate}[(i)]
  \item By induction on the structure of $u$. The base case is straightforward.

    Consider $u$ of the form $\lambda y v$. We can rename variable $y$ in a way such that it is not free in $\Gamma \cup \Gamma^\prime$. Then, $\tau = \tau_1 \to \tau_2$ by \cref{lemma:gen} and $\Gamma, y: \tau_1 \vdash v: \tau_2 $. From the induction hypothesis, $\Gamma^\prime, y: \tau_1 \vdash v: \tau_2 $ and using an implication introduction $\Gamma^\prime \vdash v : \tau$. Other cases are analogous.
  \item By induction on the structure of $u$.

    \begin{itemize}
    \item Base case: assume $u = y$ is a variable. Then if $y=x$, $\tau=\sigma$ and $u[t/x]=t$; if $y \not = x$, then the thesis follows from the first point.

      If $u$ is a \emeno hypothesis, the thesis follows from the first point.

    \item If $u = \lambda y v$, then we can assume, by (i), that  $y \not = x$ and $y$ does not occur in $\Gamma$. By the generation lemma we have $\tau= \tau_1 \to \tau_2 $ and $\Gamma, x: \sigma, y: \tau_1  \vdash v: \tau_2$. By the induction hypothesis $\Gamma, y : \tau_1 \vdash v[t/x]: \tau_2$ and applying implication introduction $\Gamma \vdash \lambda y v[t/x]: \tau_1 \to \tau_2 = \tau$

    \item If $u = vw$, then by the generation lemma $\Gamma \vdash v: \sigma \to \tau$ and $\Gamma \vdash w : \sigma$ for some $\sigma$. Then by the induction hypothesis $\Gamma \vdash v[t/x]: \sigma \to \tau$ and $\Gamma \vdash w[t/x] : \sigma$ and applying the implication elimination rule $\Gamma \vdash v[t/x]w[t/x] : \tau$. By the definition of substitution this also means $\Gamma \vdash vw[t/x] : \tau$.

    \item If $u = \inj_i(v)$, then  by \cref{lemma:gen} $\tau = \tau_1 \lor \tau_2$ and $\Gamma \vdash v : \tau_i$. By the induction hypothesis, $\Gamma \vdash v[t/x] : \tau_i$ and using the disjunction introduction rule $\Gamma \vdash \inj_i(v[t/x]) : \tau_i$. By definition of substitution this also means $\Gamma \vdash \inj_i(v)[t/x] : \tau_i$

    \item If $u = v[y.w_1,y.w_2]$, then by \cref{lemma:gen} there are $\tau_1$,$\tau_2$ such that $\Gamma, x:\sigma \vdash v: \tau_1 \lor \tau_2$, $\Gamma,x:\sigma,y:\tau_1 \vdash w_1 : \tau$, $\Gamma,x:\sigma,y:\tau_2 \vdash w_2 : \tau$. We can apply the induction hypothesis on all these terms and get $\Gamma \vdash v[t/x]: \tau_1 \lor \tau_2$, $\Gamma,y:\tau_1 \vdash w_1[t/x] : \tau$, $\Gamma,y:\tau_2 \vdash w_2[t/x] : \tau$. Then, using the disjunction elimination rule we obtain $\Gamma \vdash v[t/x][y.w_1[t/x],y.w_2[t/x]]:\tau_1 \lor \tau_2$, which by definition of substitution is the same as $\Gamma \vdash (v[y.w_1,y.w_2])[t/x]:\tau_1 \lor \tau_2$
    \end{itemize}
    The other cases are similar.

    \item Again by induction on the structure of $u$.
      \begin{itemize}
      \item Base case: if $u=x$ is a variable, if judgement $x: \tau$ is in $\Gamma$ we have the judgement $x: \tau[m/\alpha]$ in $\Gamma [m/\alpha]$. Similarly for the cases of $\hyp{a}{\alpha}{P}$ and $\wit{a}{\alpha}{P}$ 

      \item If $u = vn$, then by \cref{lemma:gen} $\tau = \sigma[n/\beta]$ and $\Gamma \vdash v: \forall \beta^{\Nat} \sigma$. By induction hypothesis $\Gamma[m/\alpha] \vdash v[m/\alpha]: \forall \beta^{\Nat} \sigma[m/\alpha]$. If $\alpha=\beta$, then $\forall \beta^{\Nat} \sigma[m/\alpha] = \forall \alpha^{\Nat} \sigma$; by using universal elimination we have 
$\Gamma [m/\alpha] \vdash v [m/\alpha] (n[m/\alpha]) : \sigma [n [m/\alpha] /\alpha] = \sigma [n/\alpha] [m/\alpha]$

If $\alpha \not = \beta$, then note that $\forall \beta^{\Nat} \sigma[m/\alpha] = \forall \beta^{\Nat} (\sigma[m/\alpha])$, and again using universal elimination  $\Gamma[m/\alpha] \vdash v[m/\alpha](n[m/\alpha]): \sigma[n[m/\alpha]/\beta] = \sigma[n/\beta][m/\alpha]$.

\item If $u = \lambda \beta v$ then by \cref{lemma:gen} $\beta$ is not free in $\Gamma$, $\tau = \forall \beta^\Nat \sigma$ and $\Gamma \vdash v: \sigma$.

Consider first $\alpha \not = \beta$. By induction hypothesis, $\Gamma[m/\alpha] \vdash v[m/\alpha]: \sigma[m/\alpha]$. Using universal introduction (since by renaming of bound variable $\beta$ is never free in $\Gamma [m/\alpha]$) then $\Gamma[m/\alpha] \vdash \lambda \beta v[m/\alpha] : \sigma[m/\alpha][n/\beta]$, and $\sigma[m/\alpha][n/\beta] = \sigma[n/\beta][m/\alpha]$ since $\alpha \not = \beta$.

Otherwise, if $\alpha = \beta$, since $\beta$ is not free in $\Gamma, v$ and $\sigma$ the result holds vacuosly.
      \end{itemize}
\end{enumerate}
The other cases are similar.
\end{proof}

\begin{lemma}[Witness substitution preserves type]
  \label{lemma:wsubs}
  If $\Gamma \vdash u : \tau$, then  $\Gamma \vdash u[a:=n] : \tau$
\end{lemma}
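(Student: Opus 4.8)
The plan is to argue by induction on the structure of $u$, in the same style as \cref{lemma:subs}: at each step I invert the last rule of the derivation $\Gamma \vdash u : \tau$ with the Generation Lemma (\cref{lemma:gen}), observe that the witness substitution $[a:=n]$ commutes with the outermost term constructor, apply the induction hypothesis to the immediate subterms, and close with the same typing rule. Two preliminary remarks dispose of most of the work. First, $[a:=n]$ rewrites only subterms of the form $\wit{a}{\alpha}{P}$, so it leaves untouched every $\hyp{b}{\alpha}{Q}$ (in particular $\hyp{a}{\alpha}{Q}$) and every free occurrence of a hypothesis variable $b \neq a$; in particular, if $a$ is not free in $u$ then $u[a:=n] = u$ and there is nothing to prove (this already settles $u = \E{a}{u_1}{u_2}$, in which $a$ is bound). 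Second, since $n$ is a \emph{closed} term of $\Language$, the substitution introduces no new free proof-variables, no new free $\Language$-variables, and does not alter $\Gamma$; hence every eigenvariable condition (on $\forall$-introduction and $\exists$-elimination) and the syntactic side condition on terms of the form $\E{b}{u_1}{u_2}$ (that the free occurrences of $b$ in the two branches be respectively of the form $\hyp{b}{\alpha}{Q}$ and $\wit{b}{\alpha}{Q}$) remain valid after applying $[a:=n]$.

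Granting these remarks, every inductive case is mechanical. For instance, if $u = \E{b}{u_1}{u_2}$ we may take $b \neq a$, and the last clause of \cref{lemma:gen} gives $\tau = \exists \beta^{\Nat}\mathsf{Q}$ together with $\Gamma, b : \forall \beta^{\Nat}\mathsf{Q} \vdash u_1 : \tau$ and $\Gamma, b : \exists \beta^{\Nat}\mathsf{Q}^{\bot} \vdash u_2 : \tau$; the induction hypothesis applied to $u_1$ and $u_2$ in these enlarged contexts, followed by the \emeno rule, gives $\Gamma \vdash u[a:=n] : \tau$, using $u[a:=n] = \E{b}{u_1[a:=n]}{u_2[a:=n]}$. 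The cases of application, the two abstractions, pairing, the two projections, the two injections, the disjunction and existential eliminations, the numeral pairing $(m,u')$, the recursor, and the Post rules all share this shape — invert with the Generation Lemma, commute $[a:=n]$, apply the induction hypothesis, re-apply the rule — so I would write out only one or two of them in full and treat the rest as routine.

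The only genuinely non-routine case, and the one I expect to be the main obstacle, is the base case $u = \wit{a}{\alpha}{P}$. Here the last rule must be the corresponding axiom, so $a : \exists \alpha^{\Nat}\emp{}^{\bot}$ occurs in $\Gamma$ and $\tau = \exists \alpha^{\Nat}\emp{}^{\bot}$. Since $n$ is closed and the atomic predicates of $\HA$ are decidable, the closed atom $\emp{}[n/\alpha]$ evaluates to $\True$ or to $\False$. If $\emp{}[n/\alpha] \evaluates \False$, then $u[a:=n] = (n,\True)$ and $\emp{}^{\bot}[n/\alpha]$ evaluates to $\True$, so a Post rule with no premises gives $\Gamma \vdash \True : \emp{}^{\bot}[n/\alpha]$ and $\exists$-introduction gives $\Gamma \vdash (n,\True) : \tau$; note that this rewriting removes $a$ from the free variables of the term entirely, in agreement with the way the clause is used by the reduction rules. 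The other branch ($\emp{}[n/\alpha]\evaluates\True$, hence $\emp{}^{\bot}[n/\alpha]\evaluates\False$) produces the term $(n,\hyp{a}{\alpha}{\alpha=0}\suc 0)$, the ``junk'' case which, as remarked right after the definition of witness substitution, no actual reduction ever creates; for this term one must still produce a derivation of $\exists \alpha^{\Nat}\emp{}^{\bot}$, and the subtlety is that $\hyp{a}{\alpha}{\alpha=0}\suc 0$ reintroduces $a$ — at the type $\forall \alpha^{\Nat}(\alpha=0)$ rather than the type $\exists \alpha^{\Nat}\emp{}^{\bot}$ it has in $\Gamma$ — while carrying the ``false'' type $\suc 0 = 0$ that must be reconciled with $\emp{}^{\bot}[n/\alpha]$. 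Sorting out this subcase — or, more honestly, observing that the subject reduction proof to come needs only the restricted statement with the extra hypothesis $\emp{}[n/\alpha]\evaluates\False$, which is exactly what the reduction rules enforce — is, I expect, the hardest point of the whole argument.
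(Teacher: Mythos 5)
Your overall strategy --- structural induction on $u$ with the Generation Lemma, commuting $[a:=n]$ past each constructor --- is a faithful expansion of what the paper actually writes for this lemma, namely the single sentence ``Direct consequence of \cref{lemma:subs}~(ii)'': every occurrence of $\wit{a}{\alpha}{P}$ carries the type $\exists\alpha^{\Nat}\emp{}^{\bot}$ and is replaced by a term meant to have that same type, so the paper treats the whole thing as an instance of ordinary substitution preserving types. Your inductive cases, your remark that $\E{a}{u_1}{u_2}$ is trivial because $a$ is bound there, and your treatment of the subcase $\emp{}[n/\alpha]\evaluates\False$ (Post rule with no premises giving $\Gamma\vdash\True:\emp{}^{\bot}[n/\alpha]$, then $\exists$-introduction) all match this and are correct.

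The difficulty you flag in the other subcase is genuine, and the paper simply glosses over it. When $\emp{}[n/\alpha]\evaluates\True$ the replacement term is $(n,\hyp{a}{\alpha}{\,\alpha=0}\suc 0)$; to type $\hyp{a}{\alpha}{\,\alpha=0}$ the axiom rule demands $a:\forall\alpha^{\Nat}(\alpha=0)$ in the context, whereas $\Gamma$ declares $a:\exists\alpha^{\Nat}\emp{}^{\bot}$, and even setting that aside the component has type $\suc 0=0$ rather than $\emp{}^{\bot}[n/\alpha]$ (one would need at least a Post-rule wrapper $\mathsf{r}(\cdots)$, which the definition of $v[a:=n]$ does not insert). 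So in that branch the lemma does not in fact follow from \cref{lemma:subs}~(ii), and your proposed repair --- proving the statement under the extra hypothesis $\emp{}[n/\alpha]\evaluates\False$ --- is precisely what \cref{thm:subj} uses, since the reduction $\E{a}{u}{v}\mapsto v[a:=n]$ fires only under that side condition; the second branch of the definition is needed only for the normalization/realizability argument, where typing is not at stake. In short, your proof is more careful than the paper's one-line citation, and the one case you leave open is one the paper does not close either.
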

\begin{proof}
  Direct consequence of \cref{lemma:subs} (ii)
\end{proof}

We are now ready to state the main result for this section:

\begin{thm}
  \label{thm:subj}
  \haemeno has the subject reduction property
\end{thm}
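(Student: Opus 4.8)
The plan is to first reduce the statement to its one-step form — if $\Gamma\vdash M:\tau$ and $M\mapsto N$, then $\Gamma\vdash N:\tau$ — since the general case for $\mapsto^{*}$ follows at once by induction on the length of the reduction sequence. The one-step statement is then proved by induction on the derivation of $M\mapsto N$: there is one base case for each reduction rule of \cref{fig:F} contracting a redex at the root of $M$, and one congruence case for each term constructor, according to the position of the contracted redex. In every case the engine is the same: use the Generation Lemma (\cref{lemma:gen}) to take apart the typing derivation of $M$ down to the relevant premises, and then use the substitution lemmas (\cref{lemma:subs} and \cref{lemma:wsubs}) to reassemble a typing derivation for the contractum $N$.

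The congruence cases are routine: \cref{lemma:gen} exposes a typing judgement for the subterm where the reduction happens, the induction hypothesis replaces it by a judgement of the same type for its reduct, and the typing rule that $M$'s derivation ended with is reapplied verbatim, since none of its types change and the side conditions on bound first-order and $\EM_{1}$ variables are preserved up to the usual innocuous renaming justified by \cref{lemma:subs}(i). The $\HA$ reduction rules are handled uniformly. For $(\lambda x.u)t\mapsto u[t/x]$, \cref{lemma:gen} gives $\Gamma,x{:}\sigma\vdash u:\tau$ and $\Gamma\vdash t:\sigma$, so \cref{lemma:subs}(ii) yields $\Gamma\vdash u[t/x]:\tau$; the rule $(\lambda\alpha.u)t\mapsto u[t/\alpha]$ is identical with \cref{lemma:subs}(iii) in place of (ii). The rule $\pi_{i}\pair{u_0}{u_1}\mapsto u_i$ is immediate from \cref{lemma:gen}; $\inj_{i}(u)[x_1.t_1,x_2.t_2]\mapsto t_i[u/x_i]$ uses \cref{lemma:gen} and then \cref{lemma:subs}(ii); $(n,u)[(\alpha,x).v]\mapsto v[n/\alpha][u/x]$ chains \cref{lemma:subs}(iii) and (ii); and for $\rec u v 0\mapsto u$ and $\rec u v (\suc n)\mapsto v n (\rec u v n)$ one reads the premises of the induction rule off \cref{lemma:gen} and recombines them by $\forall$-E and $\to$-E.

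The cases involving $\E{a}{u}{v}$ are where the restriction built into the \emeno rule earns its keep. By \cref{lemma:gen}(xiii), every typed term of the form $\E{a}{u}{v}$ carries a type of the shape $\exists\alpha C$ with $C$ atomic; hence the permutation rules $(\E{a}{u}{v})w$, $\pi_{i}(\E{a}{u}{v})$ and $(\E{a}{u}{v})[x.w_1,y.w_2]$ can never fire on a typed term, since they would force $\E{a}{u}{v}$ to carry an arrow, product or sum type. This leaves, besides the permutation $(\E{a}{u}{v})[(\alpha,x).w]$ discussed below, only the genuine \emeno steps. For $\E{a}{u}{v}\mapsto u$ (when $a$ does not occur free in $u$), \cref{lemma:gen}(xiii) gives $\Gamma,a{:}\forall\alpha^{\Nat}\emp{}\vdash u:\tau$, so \cref{lemma:subs}(i) gives $\Gamma\vdash u:\tau$. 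For $\E{a}{u}{v}\mapsto v[a:=n]$, \cref{lemma:gen}(xiii) gives $\Gamma,a{:}\exists\alpha^{\Nat}\emp{}^{\bot}\vdash v:\tau$; the side condition of this rule forces $\emp{}[n/\alpha]\evaluates\False$, so the witness substitution replaces every $\wit{a}{\alpha}{P}$ by $(n,\True)$, which has type $\exists\alpha^{\Nat}\emp{}^{\bot}$, and \cref{lemma:wsubs} gives $\Gamma\vdash v[a:=n]:\tau$. For $(\hyp{a}{\alpha}{P})n\mapsto\True$, \cref{lemma:gen}(iv) and (xii) give that the redex has type $\emp{}[n/\alpha]$, a closed atomic formula which evaluates to $\True$ and is therefore derivable by the nullary Post rule with proof term $\True$.

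I expect the remaining permutation $(\E{a}{u}{v})[(\alpha,x).w]\mapsto\E{a}{u[(\alpha,x).w]}{v[(\alpha,x).w]}$ to be the main obstacle. Here \cref{lemma:gen}(ix) together with (xiii) recovers $\Gamma,x{:}C\vdash w:\tau$ and typings of $u$, $v$ against $\exists\alpha C$ in the two $\EM_{1}$ contexts, and applying $\exists$-E inside each branch produces $\Gamma,a{:}\forall\alpha^{\Nat}\emp{}\vdash u[(\alpha,x).w]:\tau$ and $\Gamma,a{:}\exists\alpha^{\Nat}\emp{}^{\bot}\vdash v[(\alpha,x).w]:\tau$. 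The delicate point is that reforming $\E{a}{u[(\alpha,x).w]}{v[(\alpha,x).w]}$ needs a fresh application of the \emeno rule, whose side condition demands that $\tau$ again be of the form $\exists y C'$ with $C'$ atomic; so this is precisely where one has to verify that the restricted reduction system never confronts us with this permutation redex at a type outside the permitted shape. Everything else is careful but mechanical bookkeeping of contexts and variable conditions.
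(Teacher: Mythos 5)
Your overall strategy is the one the paper uses: induction on the reduction step, with the Generation Lemma (\cref{lemma:gen}) to disassemble the typing of the redex and the substitution lemmas (\cref{lemma:subs}, \cref{lemma:wsubs}) to reassemble a typing of the contractum; your treatment of the $\HA$ rules, the induction rules, and the genuine \emeno reductions matches the paper's. Your observation that \cref{lemma:gen}(xiii) makes the permutation redexes $(\E{a}{u}{v})w$, $\pi_i(\E{a}{u}{v})$ and $(\E{a}{u}{v})[x.w_1,y.w_2]$ untypable in the restricted system is in fact sharper than the paper, which works through these cases by assigning $\E{a}{u}{v}$ an arrow or sum type that item (xiii) forbids.

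The gap is the case you flag and then leave open: $(\E{a}{u}{v})[(\alpha, x).w]\mapsto\E{a}{u[(\alpha, x).w]}{v[(\alpha, x).w]}$. This is the one permutation redex that genuinely is typable in \haemeno: by \cref{lemma:gen}(xiii) the term $\E{a}{u}{v}$ has type $\exists\alpha^{\Nat} C$ with $C$ atomic, so it can legitimately stand as the major premise of an $\exists$-elimination whose conclusion $\tau$ is arbitrary (take, say, $\Gamma, x{:}C\vdash w: A\to B$). Retyping the contractum requires a fresh instance of \emeno with conclusion $\tau$, and the side condition of \emeno demands that $\tau$ be simply existential; nothing in the typing of the redex guarantees this. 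So the worry you raise is real, and ``one has to verify'' is not a verification: one must either restrict this permutation rule to simply existential conclusions or argue that such redexes never arise in the reductions that matter, and your proposal does neither. For what it is worth, the paper's own proof disposes of this case with ``similar to the previous points,'' which does not address the issue either — you have put your finger on precisely the point where the published argument is also incomplete — but as it stands your proof does not close this case, and the theorem needs it.
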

\begin{proof}
  Assume $\Gamma \vdash t : \tau$ and $t \mapsto_{\beta} t^{\prime}$. Proceed by structural induction on the beta reduction. 

  Reduction rules for $\HA$:
  
  \begin{itemize}
  \item $t = (\lambda x. u)v : \tau$ and $t \mapsto u[v/x]$. By the generation lemma, $\Gamma \vdash (\lambda x. u) : \sigma \to \tau$ and $\Gamma \vdash v : \sigma$ for some $\sigma$. Again by generation lemma, $\Gamma, x: \sigma \vdash u : \tau$. Therefore by \cref{lemma:subs}, $\Gamma \vdash u[v/x] : \tau$.

  \item $t = (\lambda \alpha. u)v$ and $t \mapsto u[v/\alpha]$. By the generation lemma, $\tau = \tau_1[v/\alpha]$, and $\Gamma \vdash u : \forall \alpha^{\Nat}\tau_1$. Again by Generation, $\Gamma \vdash u:\tau_1$, and by \cref{lemma:subs}, $\Gamma \vdash u[v/\alpha] : \tau_1[v/\alpha]$

\item $t = \pi_{i}\pair{u_0}{u_1}$ and $t\mapsto u_i$. Then by \cref{lemma:gen} $\Gamma \vdash \pair{u_0}{u_1} : \tau_{i} \land \tau_{1-i}$ (with $\tau_0 = \tau$), and again by \cref{lemma:gen} $\Gamma \vdash u_0 : \tau_i$ and $\Gamma \vdash u_1 : \tau_{1-i}$. Then for $i=0,1$ we have $\Gamma \vdash u_i : \tau_0=\tau$
\item $t = \inj_{i}(u)[x_{1}.t_{1}, x_{2}.t_{2}]$ and $ t \mapsto t_{i}[u/x_{i}]$. By \cref{lemma:gen} there are $\tau_1$ and $\tau_2$ such that $\Gamma \vdash \inj_{i}(u) : \tau_1 \lor \tau_2$, $\Gamma, x: \tau_1 \vdash t_1 : \tau$ and $\Gamma, x: \tau_2 \vdash t_2 : \tau$. Again by \cref{lemma:gen}, $\Gamma \vdash u: \tau_i$, and by \cref{lemma:subs} $\Gamma \vdash t_{i}[u/x_{i}] : \tau$
\item $t = (n, u)[(\alpha,x).v]$ and $t \mapsto v[n/\alpha][u/x]$, where $\alpha$ is not free in $\Gamma \cup \{t:\tau\}$. By \cref{lemma:gen}, there is a $\sigma$ such that $\Gamma, x: \sigma \vdash v : \tau$ and $\Gamma \vdash (n,u) : \exists \alpha^{\Nat}. \sigma $. Again by \cref{lemma:gen}, $\Gamma \vdash u: \sigma[n/\alpha]$. Using \cref{lemma:subs} and the fact that $\alpha$ is not free in $\Gamma$ and $\tau$, we can write $\Gamma, x: \sigma[n/\alpha] \vdash v[n/\alpha] : \tau$; finally, again by \cref{lemma:subs}, $\Gamma \vdash v[n/\alpha][u/x] : \tau$
\end{itemize}

Rules for induction

\begin{itemize}
\item $t = \rec u v 0$ and $t \mapsto u$. By \cref{lemma:gen}, $\tau = \sigma (0)$ and $\Gamma \vdash u : \sigma (0)$.
\item $t = \rec u v (\suc n)$ and $t \mapsto v n (\rec u v n)$. By \cref{lemma:gen}, $\tau = \sigma (\suc n)$, $\Gamma \vdash u: \sigma (0)$ and $\Gamma \vdash v : \forall \alpha^{\Nat}.\sigma (\alpha) \to \sigma(\suc \alpha)$. In addition, by generation lemma on the term $\rec u v n$ we have $\Gamma \vdash \rec u v n : \sigma_1(n)$ and $\Gamma \vdash u : \sigma_1(0)$. Therefore $\sigma_1 = \sigma$. Using the universal quantification rule on $v$ we get $\Gamma \vdash vn : \sigma (n) \to \sigma (\suc n)$. Using the implication elimination rule on this and $\rec u v n$, we get $\Gamma \vdash vn (\rec u v n) : \sigma(\suc n)$
\end{itemize}

  Reduction rules for \emeno (there is no difference with the case of \emuno):
  \begin{itemize}
  \item $\Gamma \vdash ([a]\Hyp{P}{\alpha}) n : \tau $ and $([a]\Hyp{P}{\alpha}) n \mapsto \True$. By the generation lemma, $\Gamma \vdash [a]\Hyp{P}{\alpha} : \forall \alpha^{\Nat} \emp{} $ and also $\Gamma \vdash [a]\Hyp{P}{\alpha} : \forall \alpha^{\Nat} \tau_1$ and $\tau = \tau_1[m/\alpha]$. Therefore $\emp{} = \tau_1$, and by the condition of the rewrite rule $\tau = \emp{}[m/\alpha] = \True$.
  \item $\Gamma \vdash \E{a}{u}{v} : \tau$ and  $\E{a}{u}{v}  \mapsto u$. Then by the generation lemma we have $\Gamma, a: \forall \alpha^\Nat P \vdash u : \tau$. But $a$ is not free in $u$ by definition of the reduction rule, and so $\Gamma \vdash u : \tau$
  \item $\Gamma \vdash \E{a}{u}{v} : \tau$ and $\E{a}{u}{v}\mapsto v[a:=n]$. From \cref{lemma:gen} $\Gamma, a : \exists \alpha^\Nat\neg P \vdash  v:\tau$. From \cref{lemma:wsubs}, $\Gamma, a : \exists \alpha^\Nat\neg P \vdash v[a:=n] :\tau$. Since there are no free occurences of $a$ in $v[a:=n]$, $\Gamma \vdash  v[a:=n] :\tau$. 
  \end{itemize}

  Permutation rules for \emeno:

  \begin{itemize}
  \item $t = (\E{a}{u}{v}) w$ and $t \mapsto \E{a}{uw}{vw}$, where $a$ does not occur free in $w$. From the generation lemma, $\Gamma \vdash \E{a}{u}{v} : \sigma \to \tau$ and $\Gamma \vdash w : \sigma$ for some $\sigma$. Again by generation, $\Gamma, a: \forall \alpha^{\Nat} \emp{} \vdash u : \sigma \to \tau$ and $\Gamma, a: \exists \alpha^{\Nat} \emp{}^{\bot} \vdash v : \sigma \to \tau$. Applying implication elimination rule to both terms, and then \emeno, we get $\Gamma \vdash \E{a}{uw}{vw} : \tau$
  \item $t=(\E{a}{u}{v})[x.w_{1}, y.w_{2}]$ and $t \mapsto \E{a}{u[x.w_{1}, y.w_{2}]}{v[x.w_{1}, y.w_{2}]}$ . From \cref{lemma:gen} there are $\tau_1$, $\tau_2$ s.t $\Gamma \vdash \E{a}{u}{v} : \tau_1 \lor \tau_2$ and $\Gamma, x: \tau_1 \vdash w_1 :\tau$, $\Gamma, x: \tau_2 \vdash w_2 :\tau$. From \cref{lemma:gen} again, $\Gamma, a: \forall \alpha^{\Nat} \emp{} \vdash u : \tau_1 \lor \tau_2$ and $\Gamma, a: \exists \alpha^{\Nat} \emp{}^{\bot} \vdash v : \tau_1 \lor \tau_2$. Using disjunction elimination on both terms, followed by \emeno, we get $\Gamma \vdash \E{a}{u[x.w_{1}, y.w_{2}]}{v[x.w_{1}, y.w_{2}]} : \tau$.
  \item Cases $\pi_{i}(\E{a}{u}{v})  \mapsto \E{a}{\pi_{i}u}{\pi_{i}v}$ and $(\E{a}{u}{v})[(\alpha, x).w] \mapsto \E{a}{u[(\alpha, x).w]}{v[(\alpha, x).w]}$ are similar to the previous points.

  \end{itemize}

\end{proof}

\section{Disjunction and existential properties}
\label{sec:disj-exist-prop}
The subject reduction theorem we have just proved ensures that, whenever we reduce a proof term with one of the reduction rules, we will obtain another proof term of the same type. This, combined with ~\cref{AdequacyTheorem} (the adequacy theorem), allows us to draw conclusions on the behaviour of the logical system based on the behaviour of the proof terms. Such tools will be employed now to prove two important constructive properties of the system \haemeno.

Let's first recall the two main theorems we have seen in \cref{sec:real-interpr-ha+em_1} (the proofs can easily adapted to the new system $\HA+\EM_1^-$.) %%TODO

\begingroup
\def\thethm{\ref{theorem-extraction}}
\begin{thm}[Existential Witness Extraction]
Suppose $t$ is closed, $t\real \exists \alpha^{\Nat} \emp{}$ and $t\mapsto^{*} t'\in\nf$. Then $t'=(n,u)$ for some numeral $n$ such that $\emp{}[n/\alpha]\evaluates \True$.
\end{thm}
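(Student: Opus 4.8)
The plan is to reuse, essentially verbatim, the argument already given for this statement in \cref{sec:real-interpr-ha+em_1}, after checking that nothing it relies on is lost in the passage to $\HA+\EM_1^{-}$. The realizability relation of \cref{definition-reducibility}, the grammar of untyped proof terms and the reduction rules are literally the same in the two systems, and $\EM_1^{-}$ is a special case of $\EM_1$; hence the Adequacy Theorem (\cref{AdequacyTheorem}) and strong normalization still hold, the $\EM_1^{-}$ rule being validated by \crquattro\ exactly as the $\EM_1$ rule was. In particular \cref{proposition-disj} applies unchanged, so from $t\real \exists \alpha^{\Nat} \emp{}$ and $t\mapsto^{*} t'\in\nf$ I obtain a numeral $n$ and a term $u$ with $u\real \emp{}[n/\alpha]$ and
\[ t' = \E{a_{m}}{\E{a_{2}}{\E{a_{1}}{(n,u)}{v_{1}}}{v_{2}}\ldots}{v_{m}}. \]

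Next I would peel off the $\parallel$-layers and the open universal hypotheses. Since $t'$ is closed, $u$ is quasi-closed with free variables among $a_{1},\ldots,a_{m}$, each such occurrence being of the form $\hyp{a_i}{\alpha}{Q}$. If some $a_i$ were free in $u$, then, $u$ being normal, this occurrence would be applied, $\hyp{a_i}{\alpha}{Q}k$ for a closed $k$, and necessarily $\mathsf{Q}[k/\alpha]\evaluates\False$ (otherwise $\hyp{a_i}{\alpha}{Q}k\mapsto\True$ would be a redex inside $u$); but then the $\parallel$-layer of $t'$ that binds $a_i$ is itself a redex, contradicting $t'\in\nf$. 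Hence $u$ is closed. Moreover $m=0$: otherwise $\E{a_{1}}{(n,u)}{v_{1}}$ occurs in $t'$ and reduces to $(n,u)$, since $a_{1}$ is not free in the closed term $(n,u)$, again contradicting $t'\in\nf$. Therefore $t'=(n,u)$ with $u$ closed and normal.

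Finally I would apply clause 1.(i) of \cref{definition-reducibility} to $u\real \emp{}[n/\alpha]$. Since $u$ is normal it cannot satisfy clause (ii), so $u\in\postnf$; and a closed term of $\postnf$ is built only from $\True$ and Post applications, hence contains no subterm of the form $\hyp{a}{\alpha}{Q}k$. Since $\emp{}[n/\alpha]$ is a closed atomic formula it evaluates to $\True$ or to $\False$, and the second conjunct of clause 1.(i) can then hold only if $\emp{}[n/\alpha]\evaluates\True$, which is the claim. The main obstacle is the bookkeeping in the middle paragraph: one must be sure that $t'\in\nf$ genuinely excludes every residual $\parallel$-layer and every unapplied or falsely-applied universal hypothesis, which is exactly where the precise form of the reduction rules for $\EM_1^{-}$ and of the witness substitution is used. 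None of this is sensitive to the restriction of $\EM_1$ to $\EM_1^{-}$, so once the argument of \cref{sec:real-interpr-ha+em_1} is in place the adaptation is routine.
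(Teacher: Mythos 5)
Your proof is correct and follows essentially the same route as the paper's: \cref{proposition-disj} to get $t'=\econt{(n,u)}$ with $u\real\emp{}[n/\alpha]$, then closedness of $u$, then $m=0$, then clause 1.(i) of \cref{definition-reducibility}. One small ordering issue: the claim that a free occurrence of $a_i$ in the normal term $u$ ``would be applied'' is not justified by normality alone, but by first deriving $u\in\postnf$ from clause 1.(i) (whose definition only admits hypotheses in applied form $\hyp{a}{\alpha}{Q}k$) --- a step you carry out only in your final paragraph, so it should be moved before the closedness argument.
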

\addtocounter{thm}{-1}
\endgroup

Although this theorem only talks about $\Sigma_1^0$ formulas, this is enough for the purpose of proving the constructivity of \haemeno. Indeed, this is the only kind of existential statement that we are allowed to prove with our rule. In order to use the properties of the realizers to talk about the logic system, we will need the adequacy theorem:

\begingroup
\def\thethm{\ref{AdequacyTheorem}}
\begin{thm}[Adequacy Theorem]
Suppose that $\Gamma\vdash w: A$ in
the system $\HA + \EM_1$, with
$$\Gamma=x_1: {A_1},\ldots,x_n:{A_n}, a_{1}: \exists \alpha_{1}^{\Nat} \emp{}_{1}^{\bot},\ldots, a_{m}: \exists \alpha_{m}^{\Nat} \emp{}_{m}^{\bot}, b_{1}: \forall \alpha_{1}^{\Nat}\mathsf{Q}_{1},\ldots, b_{l}:\forall \alpha_{l}^{\Nat}\mathsf{Q}_{l}$$
and that the free variables of the formulas occurring in $\Gamma $ and $A$ are among
$\alpha_1,\ldots,\alpha_k$. For all closed terms $r_1,\ldots,r_k$ of $\Language$, if there are terms $t_1, \ldots, t_n$ such that
\[\text{ for  $i=1,\ldots, n$, }t_i\real A_i[{r}_1/\alpha_1\cdots
{r}_k/\alpha_k]\]
 then
\[w[t_1/x_1\cdots
t_n/x_n\  {r}_1/\alpha_1\cdots
{r}_k/\alpha_k\ a_{1}:=i_{1}\cdots a_{m}:=i_{m}  ]\real A[{r}_1/\alpha_1\cdots
{r}_k/\alpha_k]\]
for every numerals $i_{1}, \ldots, i_{m}$.
\end{thm}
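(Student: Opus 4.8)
The plan is to argue by structural induction on the proof term $w$ — equivalently, by induction on the derivation of $\Gamma\vdash w:A$ — splitting on the last inference rule. Throughout, write $\overline{v}$ for the term $v[t_1/x_1\cdots t_n/x_n\ r_1/\alpha_1\cdots r_k/\alpha_k\ a_1:=i_1\cdots a_m:=i_m]$ and $\overline{B}$ for $B[r_1/\alpha_1\cdots r_k/\alpha_k]$, so that in each case the goal is $\overline{w}\real\overline{A}$. One first observes that $\overline{w}$ is indeed quasi-closed: the proof-variable substitutions insert realizers (quasi-closed by assumption), the $\alpha_i$ are replaced by closed terms of $\Language$, and each witness substitution $a_j:=i_j$ removes exactly the free occurrences of $\wit{a_j}{\alpha_j}{P_j}$; the only free $\EM_1$ variables that survive are the universal ones $b_1,\dots,b_l$, and they occur only in the permitted form $\hyp{b}{\alpha}{Q}$. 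All the genuinely combinatorial work has been isolated into \cref{proposition-candidates} (the reducibility-candidate properties \cruno--\crcinque) and \cref{proposition-somecases}, which I invoke freely.

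The axiom cases are short. For $w=x_i$ we have $\overline{w}=t_i$ and $t_i\real\overline{A_i}$ by hypothesis. For $w=\hyp{b_j}{\alpha_j}{P_j}$ one checks, for every closed $n$ of $\Language$, that $\overline{w}\,n\real\overline{\emp{j}}[n/\alpha_j]$: this term is in $\sn$; it reduces only to $\True$, which happens precisely when $\overline{\emp{j}}[n/\alpha_j]\evaluates\True$; and otherwise $\overline{w}\,n$ is already a $\postnf$ term containing itself as the required subterm — so clause 1 of \cref{definition-reducibility} applies. For $w=\wit{a_j}{\alpha_j}{P_j}$, the witness substitution turns $\overline{w}$ into $(i_j,\True)$ when $\overline{\emp{j}}[i_j/\alpha_j]\evaluates\False$ and into $(i_j,\hyp{a_j}{\alpha}{\,\alpha=0}\suc 0)$ otherwise, and both pairs are immediately seen to realize $\exists\alpha_j^{\Nat}\overline{\emp{j}}^{\bot}$.

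For the purely logical rules the recipe is uniform: apply the induction hypothesis to the premises and then close with the matching clause of \cref{proposition-somecases} — part 1 for $\to$-I, part 2 for $\forall$-I, part 3 for $\land$-I, parts 5 and 6 for $\vee$-E and $\exists$-E — while the elimination rules $\to$-E, $\forall$-E, $\land$-E follow directly from the definition of $\real$, and $\vee$-I, $\exists$-I follow from it together with \cruno\ (strong normalization of the immediate subterm). In $\forall$-I and $\exists$-E the bound first-order variable may be assumed distinct from $\alpha_1,\dots,\alpha_k$, so that the $t_i$ still realize $\overline{A_i}=\overline{A_i}[t/\alpha]$ and the induction hypothesis applies to the subderivation with the extended first-order substitution. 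For the induction rule $w=\rec u v t$ one shows by a nested induction on the numeral $n$ that $\rec\overline{u}\,\overline{v}\,n\real\overline{B}(n)$: by \crtre\ it suffices to follow a single reduction step, the case $n=0$ is the induction hypothesis on $u$, and the case $\rec\overline{u}\,\overline{v}(\suc m)\mapsto\overline{v}\,m(\rec\overline{u}\,\overline{v}\,m)$ uses the induction hypothesis on $v$ and the inner induction, run on the (finite, by \cruno) reduction trees. The $\EM_1$ rule is where the new candidate property pays off: if $w=\E{a}{u}{v}$, the induction hypothesis gives $\overline{u}\real\overline{C}$ and $\overline{v}[a:=m]\real\overline{C}$ for every numeral $m$, whence $\overline{w}=\E{a}{\overline{u}}{\overline{v}}\real\overline{C}$ by \crquattro.

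The case I expect to be the most laborious is the Post rule (and, next to it, keeping the quasi-closedness and variable-condition bookkeeping fully precise). For $w=\mathsf{r}t_1\cdots t_n$ with $A=\emp{}$ atomic, the induction hypothesis gives $\overline{t_i}\real\overline{\emp{i}}$, and by \cruno\ one inducts on the size of the reduction tree of $\overline{w}$: if $\overline{w}\notin\nf$, a single step together with \crdue\ and \crtre\ finishes; if $\overline{w}\in\nf$, then each $\overline{t_i}\in\nf$, so by clause 1 of \cref{definition-reducibility} each $\overline{t_i}\in\postnf$ and hence $\overline{w}\in\postnf$, and if moreover $\overline{\emp{}}\evaluates\False$ then some $\overline{\emp{i}}\evaluates\False$, forcing $\overline{t_i}$ — and therefore $\overline{w}$ — to contain a subterm $\hyp{a}{\alpha}{Q}n$ with $\mathsf{Q}[n/\alpha]\evaluates\False$, which is exactly what clause 1.(i) demands. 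The same argument transfers verbatim to \haemeno, since restricting the $\EM_1$ rule to conclusions of the form $\exists x C$ with $C$ atomic only removes cases from the induction.
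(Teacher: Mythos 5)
Your proposal is correct and follows essentially the same route as the paper's own proof: structural induction on $w$ with the same case split, the same use of \cref{proposition-somecases} for the logical rules, \crquattro\ for the $\EM_1$ rule, and the same reduction-tree induction with the $\postnf$ argument for the Post rule (where your appeal to clause 1.(i) of \cref{definition-reducibility} is in fact the cleaner justification than the paper's citation of the extraction theorem). No gaps to report.
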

\addtocounter{thm}{-1}
\endgroup

Combining these theorems with the new subject reduction theorem, we can now state

\begin{thm}[Disjunction property]
  Suppose $\vdash t : A \lor B$ in the system \haemeno where $t$ and $A \lor B$ are closed. Then there exists a term $u$ s.t. $\vdash u : A$ or a term $v$ s.t. $\vdash v : A$ 
\end{thm}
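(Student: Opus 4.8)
The plan is to feed the hypothesis into the Adequacy Theorem, extract a realizer-level witness with the Weak Disjunction Property, and then use subject reduction together with the Generation Lemma to strip off the $\EM_1$-elimination wrapper that the Weak Disjunction Property leaves behind. First I would apply \cref{AdequacyTheorem} to $\vdash t : A\lor B$. Since the context is empty there are no proof-term variables, no witness hypotheses $a_j$ and no universal hypotheses $b_j$ to substitute for, so every substitution occurring in the conclusion of that theorem is vacuous and we obtain directly $t \real A\lor B$. By \cref{proposition-disj} there are then a term $u$ and an index $i\in\{0,1\}$ with $t\mapsto^{*}\econt{\inj_{i}(u)}$, where $u\real A$ if $i=0$ and $u\real B$ if $i=1$.

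The crucial step is to show that the $\econt{\cdot}$-wrapper is trivial, i.e.\ that in fact $t\mapsto^{*}\inj_{i}(u)$. By \cref{thm:subj} (subject reduction), from $\vdash t:A\lor B$ and $t\mapsto^{*}\econt{\inj_{i}(u)}$ we get $\vdash\econt{\inj_{i}(u)}:A\lor B$. Suppose toward a contradiction that the wrapper is nonempty; then $\econt{\inj_{i}(u)}$ is a term of the form $\E{a}{w_1}{w_2}$. By \cref{lemma:gen}(xiv), every well-typed term of this shape has a type of the form $\exists\alpha\emp{}$ with $\emp{}$ atomic; but $A\lor B$ is a disjunction, not an existential formula, so this is impossible. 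Hence the wrapper is empty and $t\mapsto^{*}\inj_{i}(u)$.

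It then remains to read off the type of $u$. Applying \cref{thm:subj} once more gives $\vdash\inj_{i}(u):A\lor B$, and \cref{lemma:gen}(vi) yields $\vdash u:A$ when $i=0$ and $\vdash u:B$ when $i=1$ (so the second alternative of the statement actually provides a term of type $B$). In either case the derived judgement is in the empty context, hence the term produced is closed; equivalently, $u$ is closed because reduction introduces no new free variables and $u$ is a subterm of a reduct of the closed term $t$.

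The main obstacle — and the point where the argument genuinely needs the restricted rule \emeno rather than the full \emuno — is the wrapper-elimination step. In $\HA+\EM_1$ an $\EM_1$-elimination $\E{a}{w_1}{w_2}$ may be assigned an arbitrary conclusion $C$, so a realizer of $A\lor B$ may legitimately have the shape $\econt{\inj_{i}(u)}$ with a nontrivial wrapper, and nothing forces one to commit to a single disjunct. The restriction built into \emeno forces the conclusion of every $\EM_1$-elimination to be a $\Sigma^0_1$ formula $\exists x\emp{}$, and \cref{lemma:gen}(xiv) is exactly the syntactic lemma that turns this restriction into the obstruction exploited above.
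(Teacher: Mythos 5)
Your proof is correct and follows essentially the same route as the paper's: adequacy, subject reduction, and the Generation Lemma clause for $\E{a}{u}{v}$-terms (which forces the type $\exists\alpha\emp{}$ and thereby kills any nontrivial $\econt{\cdot}$-wrapper) are exactly the ingredients the paper uses, the only cosmetic difference being that the paper first normalizes $t$ and case-analyzes the realizability definition of $A\lor B$ directly instead of detouring through \cref{proposition-disj}. Two trivial slips: the relevant Generation Lemma clause is (xiii), not (xiv), and you rightly observe that the second alternative in the statement should read $\vdash v:B$.
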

\begin{proof}
  If $t$ is not in normal form take $t^\prime$ such that $t \mapsto^* t^\prime$, and $t^\prime$ is in normal form. By \cref{thm:subj}, $\vdash t^\prime : A \lor B$, and then by the adequacy theorem $t^\prime \real A \lor B$. Consider now the possible cases by the definition of realizer:
  \begin{itemize}
  \item If $t^\prime = \inj_i(u)$, from \cref{lemma:gen} we have that $\vdash u : A$ or $\vdash u : B$ resp. when $i=0,1$.
  \item If $t^\prime= \E{a}{u}{v}$, then by \cref{lemma:gen} we would have $\vdash t^\prime : \exists \alpha \emp{}$ for some atomic $\emp{}$, but this contradicts the fact that $\vdash t^\prime : A \lor B$; so this case cannot be possible.
  \item Since $t^\prime$ is already in normal form, the third case cannot be possible.
  \end{itemize}
\end{proof}

With a very similar argument, we have also

\begin{thm}[Existential property]
Suppose $\vdash t : \exists \alpha A$ in the system \haemeno where $t$ and $\exists \alpha A$ are closed. Then there exists a numeral $n$ and a term $u$ s.t. $\vdash u : A[n/\alpha]$
\end{thm}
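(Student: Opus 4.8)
The plan is to argue exactly as in the proof of the disjunction property, now using the clauses of \cref{definition-reducibility} and \cref{lemma:gen} for the existential quantifier in place of those for disjunction. First I would invoke strong normalization of \haemeno (whose typable terms are a subset of those of \haemuno, so the corollary to \cref{AdequacyTheorem} applies) to choose $t'$ with $t\mapsto^{*}t'$ and $t'\in\nf$; since reduction introduces no free variables, $t'$ is again closed, and by \cref{thm:subj} we have $\vdash t':\exists\alpha A$. Applying \cref{AdequacyTheorem} with empty context then yields $t'\real\exists\alpha A$.

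Next I would unfold the definition of realizability of an existential statement (\cref{definition-reducibility}). Since $t'$ is in normal form, the alternative ``$t'$ neutral and not normal'' is impossible, so $t'$ is either of the form $(n,u)$ with $u\real A[n/\alpha]$, or of the form $\E{a}{w_0}{w_1}$. In the first case, the clause of \cref{lemma:gen} for pairs gives $\vdash u:A[n/\alpha]$, and this $n$ and $u$ are the objects required by the statement.

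It remains to exclude $t'=\E{a}{w_0}{w_1}$, and this is the only step where the restriction to \haemeno matters. By the clause of \cref{lemma:gen} for terms of the form $\E{}{}{}$, such a term can be typed only by a formula $\exists\alpha\emp{}$ with $\emp{}$ atomic, so in particular $A\equiv\emp{}$; but then \cref{theorem-extraction} (Existential Witness Extraction) applies to the closed term $t$, which satisfies $t\real\exists\alpha\emp{}$ and $t\mapsto^{*}t'\in\nf$, and forces $t'$ to have the form $(n,u)$, contradicting $t'=\E{a}{w_0}{w_1}$. Hence this case cannot arise and the proof is finished.

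I expect the genuinely delicate point to be precisely this last case: it is where the same property would fail for the unrestricted system \haemuno, and it is ruled out here only because \emeno discharges $\EM_1$-hypotheses solely in order to conclude an atomic existential, so that \cref{theorem-extraction} becomes available. The remaining steps are a line-by-line transcription of the disjunction-property argument, with the pairing clause of \cref{lemma:gen} playing the role that the injection clause played there.
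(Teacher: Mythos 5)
Your proof is correct and follows essentially the same route as the paper's: adequacy, a case split on the realizability clauses for $\exists$, the generation lemma for the pair case, and Existential Witness Extraction (\cref{theorem-extraction}) to dispose of the $\E{a}{u}{v}$ case. If anything, your version is slightly more careful, since by normalizing first and invoking subject reduction you explicitly rule out the neutral non-normal clause, which the paper's case analysis silently omits.
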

\begin{proof}
By the adequacy theorem, $t \real \exists \alpha A$. Distinguish cases on the definition of the realizability relation:
\begin{itemize}

\item If $t = (n, u)$, then by the generation lemma $\vdash u : A[n/\alpha]$
\item If $t = \E{a}{u}{v}$, then by \cref{lemma:gen} $A$ is atomic. Let $t^\prime$ be such that $t \mapsto^* t^\prime \in \nf$; then, by \cref{theorem-extraction}, $t = (n, t^\prime )$. By \cref{thm:subj} $\vdash (n, t^\prime ) : \exists \alpha A$, and by \cref{lemma:gen} we have $\vdash t^\prime : A[n/\alpha]$.
\end{itemize}
\end{proof}

\section{Rule \emeno is equivalent to Markov's principle}
The fundamental reason behind the constructive analysis of the system \haemeno was its resemblance with Markov's principle. The discussion we have done so far does not depend directly on this; however, the fact that our system is indeed constructive (in the broader sense we have used so far) provides even stronger evidence that the \emeno rule should be equivalent to Markov's principle.

Consider the usual system \haemeno, and state Markov's principle as the axiom \mrk: $\neg \forall \alpha \emp{} \to \exists \alpha \emp{}^{\bot}$. This gives a proof of the axiom:

\begin{prooftree}
\AxiomC{$[\neg \forall \alpha \emp{}]_{(1)}$}

\AxiomC{$[\forall \alpha \emp{}]_{\textsc{em}_1^{-}}$}

\BinaryInfC{$\bot$}

\UnaryInfC{$\exists \alpha \emp{}^{\bot}$}

\AxiomC{$[\exists \alpha \emp{}^{\bot}]_{\textsc{em}_1^-}$}

\RightLabel{\emeno}
\BinaryInfC{$\exists \alpha \emp{}^{\bot}$}
\RightLabel{(1)}
\UnaryInfC{$\neg \forall \alpha \emp{} \to \exists \alpha \emp{}^{\bot}$}
\end{prooftree}

Conversely, consider the system $\HA$ plus the axiom \mrk. We can obtain rule \emeno as follows: assuming we have proofs
\AxiomC{$\forall \alpha \emp{}$}
\noLine
\UnaryInfC{\vdots}
\noLine
\UnaryInfC{$\exists \alpha C$}
\DisplayProof
and 
\AxiomC{$\exists \alpha \emp{}^{\bot}$}
\noLine
\UnaryInfC{\vdots}
\noLine
\UnaryInfC{$\exists \alpha C$}
\DisplayProof
build this proof of $\exists \alpha C$:

\begin{prooftree}
  \AxiomC{$[\forall \alpha C^{\bot}]_{(1)} $}

  \AxiomC{$[\forall \alpha \emp{}]_{(2)}$}
  \noLine
  \UnaryInfC{\vdots}
  \noLine
  \UnaryInfC{$\exists \alpha C$}

  \BinaryInfC{$\mathcal{D}_1$}
  \UnaryInfC{$\bot$}
  \RightLabel{(2)}
  \UnaryInfC{$\neg \forall \alpha \emp{}$}

  \AxiomC{$[\forall \alpha C^{\bot}]_{(1)}$}

  \AxiomC{$[\exists \alpha \emp{}^{\bot}]_{(3)}$}
  \noLine
  \UnaryInfC{\vdots}
  \noLine
  \UnaryInfC{$\exists \alpha C$}

  \BinaryInfC{$\mathcal{D}_1$}
  \UnaryInfC{$\bot$}
  \RightLabel{(3)}
  \UnaryInfC{$\neg \exists \alpha \emp{}^{\bot}$}
  \UnaryInfC{$\mathcal{D}_2$}
  \UnaryInfC{$\forall \alpha \emp{}$}

  \BinaryInfC{$\bot$}
  \RightLabel{(1)}
%  \UnaryInfC{$\neg \neg \exists \alpha C $}
  \UnaryInfC{$\neg \forall \alpha C^{\bot}$}
  \AxiomC{\mrk}
  \noLine
  \UnaryInfC{$\neg \forall \alpha C^{\bot} \to \exists \alpha C$}
  \BinaryInfC{$\exists \alpha C$}

\end{prooftree}

Where $\mathcal{D}_1$ is given by

\begin{prooftree}
  \AxiomC{$\forall \alpha C^{\bot}$}
  \UnaryInfC{$C^\bot(\alpha)$}
  \AxiomC{$[C(\alpha)]_{\exists}$}
  \BinaryInfC{$\bot$}
  \AxiomC{$\exists \alpha C(\alpha)$}
  \RightLabel{$\exists$}
  \BinaryInfC{$\bot$}
\end{prooftree}

And $\mathcal{D}_2$ is given by

\begin{prooftree}
  \AxiomC{$\emp{}(\alpha) \lor \emp{}^\bot(\alpha)$}
  \AxiomC{$[\emp{}(\alpha)]_{\lor\mbox{-E}}$}
  \AxiomC{$\neg \exists \alpha \emp{}^{\bot}(\alpha)$}
  \AxiomC{$[\emp{}^\bot (\alpha)]_{(1)}$}
  \UnaryInfC{$\exists \alpha \emp{}^{\bot} (\alpha)$}
  \BinaryInfC{$\bot$}
  \RightLabel{(1)}
  \UnaryInfC{$\neg \emp{}^\bot(\alpha)$}
  \AxiomC{$[\emp{}^\bot(\alpha)]_{\lor\mbox{-E}}$}
  \BinaryInfC{$\bot$}
  \UnaryInfC{$\emp{}(\alpha)$}
  \RightLabel{$\lor \mbox{-E}$}
  \TrinaryInfC{$\emp{}(\alpha)$}
  \UnaryInfC{$\forall \alpha \emp{} (\alpha)$}
\end{prooftree}
Note that in the last proof we used the axiom $\emp{}(\alpha) \lor \emp{}^\bot(\alpha)$ since $\emp{}$ is atomic and thus decidable in $\HA$.

\section{A realizer for Markov's principle}
\label{sec:real-mark-princ}
Now that we have a proof tree for Markov's principle in \haemeno, we can decorate it in order to get a realizer of the principle:
\begin{prooftree}

\AxiomC{$[x : \neg \forall \alpha B]_{(2)}$}
\AxiomC{$[\hyp{a}{\alpha}{B} : \forall \alpha B]_{\textsc{em}_1^{-}}$}
\BinaryInfC{$x \hyp{a}{\alpha}{B} : \bot$}

\UnaryInfC{$\mathsf{r}x \hyp{a}{\alpha}{B} : B^\bot[0/\alpha]$}
\UnaryInfC{$(0,\mathsf{r}x \hyp{a}{\alpha}{B}) : \exists \alpha B^\bot$}

\AxiomC{$[\wit{a}{\alpha}{B} : \exists \alpha B^{\bot}]_{\textsc{em}_1^-}$}

\RightLabel{\emeno}
\BinaryInfC{$\E{a}{(0, \mathsf{r} x \hyp{a}{\alpha}{B})}{\wit{a}{\alpha}{B}}  : \exists \alpha B^{\bot}$}
\RightLabel{(1)}
\UnaryInfC{$\lambda x.(\E{a}{(0, \mathsf{r} x \hyp{a}{\alpha}{B})}{\wit{a}{\alpha}{B}}): \neg \forall \alpha B \to \exists \alpha B^{\bot}$}
\end{prooftree}

The extracted term fully exploits the properties of the system in order to get a more precise computational meaning for Markov's principle. When a realizer for $\neg \forall \alpha B$ is given, it is applied to the hypotetical term. Thus, the computation can proceed by using this assumption and reducing inside the left hand side of the proof term. At some point however, we are guaranteed that the program will use the hypotesis on a term $m$ for which $B[m/\alpha]$ does not hold. At this point, an exception is raised and we gets the witness we were waiting for.

%%% Local Variables:
%%% mode: latex
%%% TeX-master: "tesi.tex"
%%% End:

\chapter{Further generalizations}
\label{cha:furth-gener}
In the previous section it was shown that Markov's principle is equivalent to the excluded middle restricted to $\Sigma_1^0$ formulas and $\Sigma_1^0$ conclusions. However when taking a closer look at the formal proof we gave, it can be noticed that the crucial use of Markov's principle in proving this form of the excluded middle is only done on the conclusion. The assumptions are only used in order to obtain a contradiction and then do an \emph{ex falso} reasoning. On the other side, we need $\Sigma_1^0$ assumptions in order to be able to prove Markov's principle.

%Based on this observation we shall now prove that the principle is indeed equivalent to full classical reasoning restricted to $\Sigma_1^0$ conclusions. In order to do this, we consider a system allowing unrestricted use of the excluded middle, and we introduce a set of permutation rules that transform proofs into having uses of the excluded middle only as last inferences. Thanks to this, we will transform proofs of any simply existential statement into proofs only using the excluded middle to conclude on existential statements. Finally, in order to give a computational interpretation to this, we will introduce a further system that extends the ideas of the previous chapter, and allows a Curry-Howard correspondence.

Starting from this observation, we will prove that Markov's principle is equivalent to a rule allowing arbitrary excluded middle with $\Sigma_1^0$ conclusions. After introducing this new rule, we will try to use it in order to get a direct translation from proofs of classical arithmetic. We will first introduce the well established tool of negative translations and show how they succeed in embedding classical reasoning inside intuitionistic systems. Then we will introduce a new translation that, although missing the usual properties of negative translations, will be useful for our scope when coupled with a set of proof transformation rules. Thanks to these two tools, we will provide a way to transform classical proofs of simply existential formulas into proofs in $\HA + \EM_1^-$.

We will consider again the system $\HA$ of natural deduction for intuitionistic arithmetic, and the $\HA + \EM_1^-$ extension we have already studied. When referring to \emph{classical} proofs, or proofs in Peano Arithmetic $\PA$, we mean proofs in $\HA+\EM$ where we add to $\HA$ the rule of full excluded middle:

  \begin{prooftree}
    \AxiomC{$\Gamma, A \vdash C$}
    \AxiomC{$\Gamma, \neg A \vdash C$}
    \RightLabel{\emme}
    \BinaryInfC{$\Gamma \vdash C$}
  \end{prooftree}

\section{Full excluded middle with restricted conclusions}
\label{sec:full-excluded-middle}
Consider a system of natural deduction for intuitionistic arithmetic, to which we add restricted classical reasoning in the form of rule \emmeno:

  \begin{prooftree}
    \AxiomC{$\Gamma, A \vdash \exists x \emp{}$}
    \AxiomC{$\Gamma, \neg A \vdash \exists x \emp{}$}
    \RightLabel{\emmeno}
    \BinaryInfC{$\Gamma \vdash \exists x \emp{}$}
  \end{prooftree}

That is, we allow to eliminate instances of the excluded middle for arbitrary formulas $A$, but only if the conclusion is a $\Sigma_1^0$ formula.

This deduction, similar to the one of the previous chapter, gives a proof of Markov's principle by using the excluded middle rule on the formula $\exists \alpha \emp{}^\bot$

\begin{prooftree}
\AxiomC{$[\exists \alpha \emp{}^{\bot}]_{\textsc{em}^-}$}

\AxiomC{$[\neg \forall \alpha \emp{}]_{(1)}$}

\AxiomC{$[\neg \exists \alpha \emp{}^\bot]_{\textsc{em}^{-}}$}
\noLine
\UnaryInfC{$\mathcal{D}$}
\noLine
\UnaryInfC{$\forall \alpha \emp{}$}

\BinaryInfC{$\bot$}

\UnaryInfC{$\exists \alpha \emp{}^{\bot}$}

\RightLabel{\emmeno}
\BinaryInfC{$\exists \alpha \emp{}^{\bot}$}
\RightLabel{(1)}
\UnaryInfC{$\neg \forall \alpha \emp{} \to \exists \alpha \emp{}^{\bot}$}
\end{prooftree}

Where $\mathcal{D}$ is, as in the previous section,

\begin{prooftree}
  \AxiomC{$\emp{}(\alpha) \lor \emp{}^\bot(\alpha)$}
  \AxiomC{$[\emp{}(\alpha)]_{\lor\mbox{-E}}$}
  \AxiomC{$\neg \exists \alpha \emp{}^{\bot}(\alpha)$}
  \AxiomC{$[\emp{}^\bot (\alpha)]_{(1)}$}
  \UnaryInfC{$\exists \alpha \emp{}^{\bot} (\alpha)$}
  \BinaryInfC{$\bot$}
  \RightLabel{(1)}
  \UnaryInfC{$\neg \emp{}^\bot(\alpha)$}
  \AxiomC{$[\emp{}^\bot(\alpha)]_{\lor\mbox{-E}}$}
  \BinaryInfC{$\bot$}
  \UnaryInfC{$\emp{}(\alpha)$}
  \RightLabel{$\lor \mbox{-E}$}
  \TrinaryInfC{$\emp{}(\alpha)$}
  \UnaryInfC{$\forall \alpha \emp{} (\alpha)$}
\end{prooftree}

Conversely, given a system of intuitionistic arithmetic $\HA$ with Markov's principle as axiom \mrk : $\neg \forall \alpha \emp{} \to \exists \alpha \emp{}^{\bot}$  we can obtain rule \emmeno as follows: assuming we have proofs
\AxiomC{$A$}
\noLine
\UnaryInfC{\vdots}
\noLine
\UnaryInfC{$\exists \alpha \emp{}$}
\DisplayProof
and 
\AxiomC{$\neg A$}
\noLine
\UnaryInfC{\vdots}
\noLine
\UnaryInfC{$\exists \alpha \emp{}$}
\DisplayProof
build this proof of $\exists \alpha \emp{}$:

\begin{prooftree}
  \AxiomC{$[\forall \alpha \emp{}^{\bot}]_{(1)} $}

  \AxiomC{$[\neg A]_{(2)}$}
  \noLine
  \UnaryInfC{\vdots}
  \noLine
  \UnaryInfC{$\exists \alpha \emp{}$}

  \noLine
  \BinaryInfC{$\mathcal{D}$}
  \noLine
  \UnaryInfC{$\bot$}
  \RightLabel{(2)}
  \UnaryInfC{$\neg \neg A$}

  \AxiomC{$[\forall \alpha \emp{}^{\bot}]_{(1)}$}

  \AxiomC{$[A]_{(3)}$}
  \noLine
  \UnaryInfC{\vdots}
  \noLine
  \UnaryInfC{$\exists \alpha \emp{}$}

  \noLine
  \BinaryInfC{$\mathcal{D}$}
  \noLine
  \UnaryInfC{$\bot$}
  \RightLabel{(3)}
  \UnaryInfC{$\neg A$}

  \BinaryInfC{$\bot$}
  \RightLabel{(1)}
  \UnaryInfC{$\neg \forall \alpha \emp{}^{\bot}$}
  \AxiomC{\textsc{mrk}}
  \noLine
  \UnaryInfC{$\neg \forall \alpha \emp{}^{\bot} \to \exists \alpha \emp{}$}
  \BinaryInfC{$\exists \alpha \emp{}$}

\end{prooftree}

Where $\mathcal{D}$ is given by

\begin{prooftree}
  \AxiomC{$\exists \alpha \emp{}(\alpha)$}

  \AxiomC{$\forall \alpha \emp{}^{\bot}$}
  \UnaryInfC{$\emp{}^\bot(\alpha)$}
  \AxiomC{$[\emp{}(\alpha)]_{\exists}$}
  \BinaryInfC{$\bot$}
  \RightLabel{$\exists$}
  \BinaryInfC{$\bot$}
\end{prooftree}

We have now a more general result than the one we had in the previous chapter: Markov's principle is equivalent to allowing instances of the excluded middle to be used as axioms if and only if the conclusion of the $\lor$-elimination rule is a $\Sigma_1^0$ formula. In one sense this tells us that when conclusions are restricted to be $\Sigma_1^0$, allowing premises of arbitrary complexity does not allow us to prove more than what we could prove already with simply existential premises. 
%However, we can hope to write shorter proofs once we are only constrained in the conclusion of the uses of the excluded middle; moreover, the equivalence with Markov's principle means that the system $\HA + \EM^-$ is constructive, and this suggests that we might be able to build a direct computational interpretation of this without going through a translation to previously studied systems.

\section{A new negative translation}
\label{sec:new-negat-transl}
\subsubsection{Negative translations}
\label{sec:negat-transl}
Negative translations have been known for long time as a tool to embed classical reasoning into intuitionistic logic. Essentially, they consist in a method to transform every formula provable in a classical theory in another formula that is equivalent in classical logic and that, although it is not intuitionistically equivalent, is provable from the translated  theory. The most prominent example is probably the so called \emph{G\"odel-Gentzen} translation \cite{Gödel33}, or also \emph{double negation} translation. It assigns to every formula $F$ a formula $F^N$ defined by induction on its structure:
\begin{itemize}
\item If $F$ is atomic, $F^N = \neg \neg \ F$
\item $(F_1 \land F_2)^N$ is $F_1^N \land F_2^N$
\item $(F_1 \lor F_2)^N$ is $\neg (\neg F_1^N \land \neg F_2^N)$
\item $(F_1 \to F_2)^N$ is $F_1^N \to F_2^N$
\item $(\neg F)^N$ is $\neg F^N$
\item $(\forall x \ F)^N$ is $\forall x \ \ F^N$
\item $(\exists x \ F)^N$ is $\neg \forall x \ \neg \ F^N$
\end{itemize}

The following theorem states the result we anticipated informally:

\begin{thm}[G\"odel-Gentzen translation]
  \label{thm:negat-transl}
  Let $\Gamma = A_0,\dots A_n$ be a set of formulas. Then $A_1,\dots A_n \vdash A_0$ is classically derivable if and only if $A_1^N,\dots A_n^N \vdash A_0^N$ is intuitionistically derivable.
\end{thm}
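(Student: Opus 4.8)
The plan is to prove the two directions separately. The direction ``intuitionistically derivable $\Rightarrow$ classically derivable'' is essentially free: intuitionistic derivations are classical derivations, so from $A_1^N,\dots,A_n^N\vdash A_0^N$ intuitionistically we obtain the same sequent classically, and it then suffices to know that $F\leftrightarrow F^N$ is classically provable for every $F$. This last fact follows by a routine induction on $F$ using only classical tautologies: $F\leftrightarrow\neg\neg F$ for atomic $F$, the De Morgan equivalence $F_1\lor F_2\leftrightarrow\neg(\neg F_1\land\neg F_2)$, the equivalence $\exists x\,F\leftrightarrow\neg\forall x\,\neg F$, and the induction hypothesis propagated homomorphically through $\land,\to,\neg,\forall$. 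Chaining $A_i\vdash A_i^N$ and $A_0^N\vdash A_0$ (both classical) around the translated derivation yields $A_1,\dots,A_n\vdash A_0$ classically.

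The other direction is the substantial one, and its engine is a \textbf{Stability Lemma}. Say a formula is \emph{negative} if it contains no occurrence of $\lor$ or $\exists$ and every atom occurs in the scope of a negation; an easy induction on $F$ shows that $F^N$ is always negative, since the translation deletes $\lor$ and $\exists$ (replacing them by negated conjunctions and universals) and doubly negates atoms. The Stability Lemma asserts that for every negative formula $G$ we have $\vdash\neg\neg G\to G$ intuitionistically, hence in particular $\vdash\neg\neg F^N\to F^N$ for every $F$. This is proved by induction on $G$: the base case and the cases where $G$ is a negation, a translated disjunction, or a translated existential are all of the shape $\neg H$ and use the intuitionistic validity $\vdash\neg\neg\neg H\to\neg H$; the conjunction case uses $\vdash\neg\neg(G_1\land G_2)\to\neg\neg G_1\land\neg\neg G_2$ and the induction hypothesis; the universal case uses $\vdash\neg\neg\forall x\, G\to\forall x\,\neg\neg G$; and the implication case uses $\neg\neg(G_1\to G_2),G_1\vdash\neg\neg G_2$ together with the induction hypothesis on $G_2$.

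I would then prove ``classically derivable $\Rightarrow$ intuitionistically derivable'' by induction on a classical natural-deduction derivation of $A_1,\dots,A_n\vdash A_0$, taking the classical calculus to be intuitionistic natural deduction (\cref{fig:natded}) plus the \emph{reductio} rule $\bot_c$ (equivalently the rule \emme). One first records the routine substitution identity $(F[t/x])^N=F^N[t/x]$ and the fact that the translation preserves free variables, which is needed for the quantifier rules. The introduction and elimination rules for $\land$, $\to$, $\forall$ and $\bot$, and the introductions for $\lor$ and $\exists$, all translate directly: the translation commutes with $\land,\to,\forall,\neg$, and, for instance, from $A^N$ one derives $(A\lor B)^N=\neg(\neg A^N\land\neg B^N)$ by assuming $\neg A^N\land\neg B^N$ and using the first conjunct against $A^N$. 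The real work is in $\lor$-E, $\exists$-E and $\bot_c$. For $\lor$-E, given translated premises $\Gamma^N\vdash\neg(\neg A^N\land\neg B^N)$, $\Gamma^N,A^N\vdash C^N$ and $\Gamma^N,B^N\vdash C^N$, one takes contrapositives to obtain $\Gamma^N,\neg C^N\vdash\neg A^N$ and $\Gamma^N,\neg C^N\vdash\neg B^N$, hence $\Gamma^N,\neg C^N\vdash\neg A^N\land\neg B^N$, a contradiction; discharging gives $\Gamma^N\vdash\neg\neg C^N$, and the Stability Lemma (applicable since $C^N$ is negative) yields $\Gamma^N\vdash C^N$. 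The case $\exists$-E is analogous, deriving $\Gamma^N,\neg C^N\vdash\forall x\,\neg A^N$ by a $\forall$-introduction (legitimate because $x$ is not free in $\Gamma^N$ or $\neg C^N$) and contradicting $\neg\forall x\,\neg A^N=(\exists x\,A)^N$. For $\bot_c$, from $\Gamma^N,\neg A^N\vdash\bot$ one gets $\Gamma^N\vdash\neg\neg A^N$ and again applies stability.

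The main obstacle is exactly this last group of cases, and the whole argument hinges on the observation that the conclusions of $\lor$-E, $\exists$-E and $\bot_c$ become negative after translation, so that the Stability Lemma applies to them; carrying out the inductive proof of that lemma through the implication clause, and keeping the eigenvariable side-condition of $\exists$-E correct under translation, are the points that require genuine care.
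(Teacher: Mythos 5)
Your proof is correct. Note that the paper does not actually carry out a proof of this theorem: it states it and defers to the reference \cite{Troelstra73}, so there is nothing in the text to compare against; what you have written is exactly the standard argument found there, namely the easy direction via the classical equivalence $F\leftrightarrow F^N$, and the substantive direction by induction on the classical derivation with the stability lemma $\vdash\neg\neg F^N\to F^N$ (intuitionistically) absorbing the $\lor$-E, $\exists$-E and $\bot_c$ cases. The points that usually go wrong — the implication clause of the stability induction, the substitution identity $(F[t/x])^N=F^N[t/x]$, and the eigenvariable condition when re-deriving $\forall x\,\neg A^N$ in the $\exists$-E case — are all handled correctly in your write-up.
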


For a complete discussion and a proof of this result, one may refer to \cite{Troelstra73}. The translation proves especially useful in the case of arithmetic, thanks to the following theorem

\begin{thm}
\label{thm:neg-trans-ha}
  For any formula $A$ in the language of arithmetic, if $\PA \vdash A$ then $\HA \vdash A^N$
\end{thm}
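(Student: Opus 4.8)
The plan is to reduce the statement to \cref{thm:negat-transl} together with a verification that the negative translation sends each axiom of $\PA$ to a theorem of $\HA$. Suppose $\PA \vdash A$. A derivation being finite, there is a finite list $B_1,\dots,B_k$ of instances of the axioms of $\PA$ such that $B_1,\dots,B_k \vdash A$ is derivable in classical first-order logic. By \cref{thm:negat-transl} this yields an intuitionistic derivation of $B_1^N,\dots,B_k^N \vdash A^N$. Hence it suffices to show that $\HA \vdash B^N$ for every axiom $B$ of $\PA$; composing that intuitionistic derivation with these proofs (discharging each assumption $B_i^N$) then gives $\HA \vdash A^N$.

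For the axioms other than induction — the equality axioms, the two successor axioms, the zero axiom $\forall x(\mathbf{s}x = 0 \to \bot)$, and the defining equations for $+$ and $\cdot$ — each such $B$ is the universal closure of a formula built from atomic formulas (including $\bot$) using only $\to$ and $\forall$; call such formulas \emph{negative}. First I would record the standard fact that $\HA$ proves $P \lor \neg P$, hence $\neg\neg P \leftrightarrow P$, for every atomic $P$, by induction over terms. A routine induction on the structure of a negative formula $B$ — using that $(-)^N$ commutes with $\to$ and $\forall$, that $\HA \vdash F \to \neg\neg F$ intuitionistically, and that $\neg\neg$ distributes over $\to$ and $\forall$ intuitionistically — then gives $\HA \vdash B^N \leftrightarrow B$. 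Since each of these $B$ is already an axiom of $\HA$, we obtain $\HA \vdash B^N$.

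The induction scheme needs a short separate argument. For a formula $\varphi$ the corresponding axiom is $\forall x(\varphi(x) \to \varphi(\mathbf{s}x)) \to \varphi(0) \to \forall x\,\varphi(x)$, and since $(-)^N$ commutes with $\to$ and $\forall$ its translation is (intuitionistically equivalent to) $\forall x(\varphi^N(x) \to \varphi^N(\mathbf{s}x)) \to \varphi^N(0) \to \forall x\,\varphi^N(x)$, which is precisely the instance of the $\HA$ induction scheme for the formula $\varphi^N$. Thus $\HA$ proves the translation of every instance of induction, and together with the previous paragraph this establishes $\HA \vdash B^N$ for all axioms $B$ of $\PA$, completing the proof.

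I expect the main obstacle to be the bookkeeping in the two structural inductions — in particular, checking that the intuitionistic equivalences $B^N \leftrightarrow B$ for negative $B$, and the reduction of the translated induction instance to $\text{Ind}_{\varphi^N}$, really go through with only intuitionistic logic plus decidability of atomic formulas, so that no step secretly invokes a classical principle; one must also make sure the treatment of $\bot$ and of negation (via $\neg F := F \to \bot$) under the translation is consistent with the clauses defining $(-)^N$. The logical content is light, but this consistency is where care is required; a full account can be found in \cite{Troelstra73}.
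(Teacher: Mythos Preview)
Your proposal is correct and follows essentially the same route as the paper: reduce via \cref{thm:negat-transl} to showing $\HA \vdash B^N$ for each $\PA$-axiom $B$, handle the non-induction axioms by observing they are built from atomic formulas with only $\to$ and $\forall$ and invoking decidability of atoms to get $B^N\leftrightarrow B$, and handle induction by noting that its translation is literally the $\HA$-induction instance for $\varphi^N$. If anything, your write-up is more careful about the structural induction on negative formulas than the paper's own sketch.
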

\begin{proof}
  Thanks to \cref{thm:negat-transl}, we already know that $\PA \vdash A$ if and only if $\HA^N \vdash A^N$. What we need to show is that if $\HA^N \vdash A^N$, then $\HA \vdash A^N$. In order to do so, we need to prove the translated axioms in $\HA$. We know that $\HA \vdash ((s=t) \to \neg \neg (s=t)) \land (\neg \neg (s=t) \to (s=t))$, and therefore since the axioms for equality only use $\forall$ and $\to$, their translation is easily equivalent to the original axiom.

Consider the translation of an instance of the induction axiom:
\[(\forall x (F(x) \to F (\mathbf{s}x)) \to F (0) \to \forall x F(x))^N = \forall x (F^N(x) \to F^N (\mathbf{s}x)) \to F^N (0) \to \forall x F^N(x)\]
Since the second formula is just the instance of the axiom of induction for the formula $F^N$, it is provable in $\HA$. Therefore, we can conclude that $\HA \vdash \HA^N$, and thus $\HA \vdash A^N$
\end{proof}

The negative translation allows to embed all of classical arithmetic inside intuitionistic arithmetic. However, the resulting statements often do not provide a clear computational interpretation: consider for example the translation of an existential statement: we obtain something of the form $\neg \forall x \ \neg F$, and it is not clear how one could exctract a witness. For addressing this issues, one needs another translation such as the A-translation of Friedman \cite{Friedman78}. Essentially, it consists in replacing every atomic predicate $\emp{}$ with $\emp{} \lor A$ for an arbitrary formula $A$. When we combine it with the G\"odel translation, we obtain the following definition: given formulas $F_1,F_2,A$, where no free variable of $A$ is quantified in $F_1$ or $F_2$

\begin{itemize}
\item $\neg_A F_1 = F_1 \to A$, $\bot^A = A$
\item $F_1^A = \neg_A \neg_A F_1$ if $F_1$ is atomic
\item $(F_1 \land F_2)^A = F_1^A \land F_2^A$
\item $(F_1 \lor F_2)^A = \neg_A(\neg_A F_1^A \land \neg_A F_2^A)$
\item $(F_1 \to F_2)^A = F_1^A \to F_2^A$
\item $(\forall x \ F_1)^A = \forall x \ F_1^A$
\item $(\exists x \ F_1)^A = \neg_A \forall x \ \neg_A F_1^A$
\end{itemize}

We can see that it behaves very similarly to the usual G\"odel-Gentzen translation, but with the addition that negation is parametrized by the formula $A$. With thechniques very similar to those of \cref{thm:negat-transl} and \ref{thm:neg-trans-ha} we have that if $\Gamma \vdash F$ in $\PA$, then $\Gamma^A \vdash F^A$ in $\HA$. However, the new translation also allows for a major result for the constructive interpretation of some statements of classical arithmetic:

\begin{thm}[Friedman]
  Let $\emp{}$ be an atomic predicate. Then $\PA \vdash \exists x \ \emp{}(x)$ if and only if $\HA \vdash  \exists x \ \emp{}(x)$
\end{thm}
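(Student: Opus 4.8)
The plan is to prove the substantial direction — that $\PA \vdash \exists x\, \emp{}(x)$ implies $\HA \vdash \exists x\, \emp{}(x)$ — by \emph{Friedman's trick}: applying the $A$-translation with $A$ taken to be the very formula $\exists x\, \emp{}(x)$ that we are trying to prove. The converse implication is immediate, since every derivation in $\HA$ is \emph{a fortiori} a derivation in $\PA$.

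So suppose $\PA \vdash \exists x\, \emp{}(x)$. Observe first that $A := \exists x\, \emp{}(x)$ is a closed formula, so the side condition of the $A$-translation (no free variable of $A$ is quantified in the formulas being translated) is vacuously satisfied, after $\alpha$-renaming the bound variables occurring in the given classical proof where necessary. By the preservation property of the $A$-translation stated just above (if $\Gamma \vdash F$ in $\PA$ then $\Gamma^A \vdash F^A$ in $\HA$, proved by the techniques of \cref{thm:negat-transl} and \cref{thm:neg-trans-ha}, the translated axioms of $\PA$ remaining provable in $\HA$), we obtain $\HA \vdash (\exists x\, \emp{}(x))^A$. Unfolding the translation, and using that $\emp{}(x)$ is atomic, this formula is
\[(\exists x\, \emp{}(x))^A \;=\; \bigl(\forall x\,(\emp{}(x)^A \to A)\bigr)\to A, \qquad\text{where}\qquad \emp{}(x)^A \;=\; (\emp{}(x)\to A)\to A,\]
and $A$ is $\exists x\, \emp{}(x)$ throughout.

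The key observation is that $\HA \vdash \emp{}(x)^A \to A$. Indeed $\HA \vdash \emp{}(x)\to A$, since this is just existential introduction with the term $x$ as witness (modulo $\alpha$-renaming the bound variable of $A$), so from a hypothetical $\emp{}(x)^A = (\emp{}(x)\to A)\to A$ one derives $A$ by a single modus ponens. Generalizing, $\HA \vdash \forall x\,(\emp{}(x)^A \to A)$, and feeding this to the displayed implication gives $\HA \vdash A$, i.e. $\HA \vdash \exists x\, \emp{}(x)$, as desired.

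I expect the mathematical content to be concentrated entirely in the self-referential choice $A := \exists x\, \emp{}(x)$ and in the one-line verification that $\emp{}(x)^A$ intuitionistically entails $A$; everything else is bookkeeping. The one point that deserves genuine care is checking that the $A$-translation transfers $\PA$-provability to $\HA$-provability for this particular $A$ with no free-variable clashes — which is precisely why closedness of $A$, together with $\alpha$-conversion of the bound variables in the classical proof, is the right thing to invoke.
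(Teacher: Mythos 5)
Your proposal is correct and follows essentially the same route as the paper: Friedman's trick with the self-referential choice $A := \exists x\,\emp{}(x)$, applied to the $A$-translation of the classical proof. The only cosmetic difference is that you verify the antecedent $\forall x\,(\emp{}(x)^A \to A)$ directly via existential introduction, whereas the paper first invokes the general collapse $\neg_A\neg_A\neg_A F \dashv\vdash \neg_A F$ and then notes that $\forall x\,(\emp{}(x)\to\exists x\,\emp{}(x))$ is provable — the two reductions amount to the same one-line argument.
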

\begin{proof}
  Suppose $\PA \vdash \exists x \ \emp{}(x)$; then $\HA \vdash (\exists x \ \emp{}(x))^A$, i.e. $\HA \vdash (\forall x \ \neg_A \neg_A \neg_A \emp{}(x)) \to A$. Since it can be seen that $\neg_A \neg_A \neg_A F \dashv \vdash \neg_A F$ in $\HA$ for all $F$, $\HA \vdash (\forall x \ \neg_A \emp{}(x)) \to A$. Now, since we can use any formula for $A$, we use $\exists x \ \emp{}(x)$: in this way we get $\HA \vdash \forall x \ (\emp{}(x) \to \exists x \ \emp{}(x)) \to \exists x \ \emp{}(x)$. Since the antecedent of the formula is provable, we get $\HA \vdash \exists x \ \emp{}(x)$.
\end{proof}

%TODO Several different negative translation exists (for a compendium, see for example \cite{Ferreira11}), which are 
\subsubsection{The $\exists$-translation}
We will now introduce a new translation and consider it for statements of arithmetic. Like the usual negative translations, it will have the property that translated formulas are classically equivalent to the original ones, and that the translated axioms of arithmetic are intuitionistically provable in $\HA$. However, we will not immediately present a result linking classical provability and intuitionistic provability as we did before; indeed, the synctactic translation method presented here will be used in the next section together with a more proof-theoretic technique in order to provide a new interpretation of the simply existential statements of classical arithmetic. %% TODO qui nel caso riciterei Friedman

Our translation is particularly simple when compared with the usual ones. It leaves all logical connectives untouched, except for the case of $\forall$, which is substituted by $\neg \exists \neg$. Formally, we define the translation $\cdot^\exists$ by induction on the structure of the formula:

\begin{itemize}
\item If $F$ is atomic, $F^\exists = F$
\item $(F_1 \land F_2)^\exists$ is $F_1^\exists \land F_2^\exists$
\item $(F_1 \lor F_2)^\exists$ is $F_1^\exists \lor F_2^\exists$
\item $(F_1 \to F_2)^\exists$ is $F_1^\exists \to F_2^\exists$
\item $(\forall x \ F)^\exists$ is $\neg \exists x \ \neg F^\exists$
\item $(\exists x \ F)^\exists$ is $\exists x  \ F^\exists$
\end{itemize}

We know that $\forall x \ A(x)$ is classically equivalent to $\neg \exists x \ \neg A(x)$ regardless of $A$, and thus we can easily state that $\PA \vdash \PA^\exists$ and $\PA^\exists \vdash \PA$. So it is also easy to see

\begin{proposition}
  \label{prop:pa-to-pa-e}
  $\PA \vdash F$ if and only if $\PA^\exists \vdash F^\exists$
\end{proposition}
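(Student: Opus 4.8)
The plan is to derive everything from one purely logical fact, namely that for every formula $F$ the biconditional $F\leftrightarrow F^{\exists}$ is provable in classical first-order logic alone; write $\vdash_c$ for this relation. Granting this, the proposition follows by a routine composition-of-derivations argument, using that both $\PA$ and $\PA^{\exists}$ are classical theories.

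First I would prove $\vdash_c F\leftrightarrow F^{\exists}$ by induction on the structure of $F$. The atomic case is immediate because $F^{\exists}=F$. The cases $F=F_1\wedge F_2$, $F_1\vee F_2$, $F_1\to F_2$, $\exists x\,F_1$ follow from the induction hypotheses together with the fact that provable equivalence is a congruence for the connectives and quantifiers (e.g.\ from $\vdash_c F_i\leftrightarrow F_i^{\exists}$ one obtains $\vdash_c (F_1\wedge F_2)\leftrightarrow(F_1^{\exists}\wedge F_2^{\exists})$). The only case that uses classicality is $F=\forall x\,F_1$: from the induction hypothesis and congruence one gets $\vdash_c\forall x\,F_1\leftrightarrow\forall x\,F_1^{\exists}$, and since $\forall x\,A\leftrightarrow\neg\exists x\,\neg A$ is a theorem of classical first-order logic for every $A$, this yields $\vdash_c\forall x\,F_1\leftrightarrow\neg\exists x\,\neg F_1^{\exists}$, which is exactly $(\forall x\,F_1)^{\exists}$.

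Applying this lemma to the axioms is what makes precise the two assertions $\PA\vdash\PA^{\exists}$ and $\PA^{\exists}\vdash\PA$ recorded above: if $B$ is a $\PA$-axiom then $\PA\vdash B$, hence $\PA\vdash B^{\exists}$; and each $\PA^{\exists}$-axiom is some $B^{\exists}$ with $B$ a $\PA$-axiom, so $\PA^{\exists}\vdash B^{\exists}$, and since $\PA^{\exists}$ is again classical the lemma gives $\PA^{\exists}\vdash B^{\exists}\leftrightarrow B$, hence $\PA^{\exists}\vdash B$. With these in hand the two implications are bookkeeping: if $\PA\vdash F$ then $\PA\vdash F^{\exists}$ by the lemma, and replacing in this derivation the finitely many $\PA$-axioms it uses by their $\PA^{\exists}$-derivations gives $\PA^{\exists}\vdash F^{\exists}$; conversely, if $\PA^{\exists}\vdash F^{\exists}$ then the same replacement of the finitely many $\PA^{\exists}$-axioms by their $\PA$-derivations gives $\PA\vdash F^{\exists}$, whence $\PA\vdash F$ by the lemma again. (Here one also uses the trivial fact that $\cdot^{\exists}$ commutes with substitution and introduces no fresh bound variables, so it respects the side conditions on the quantifier rules.)

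The one point that requires a moment's attention is the behaviour of the induction scheme under the translation in the step above: the $\exists$-translation of an instance of induction for $\varphi$ is not literally an instance of induction for $\varphi^{\exists}$, since the two outer universal quantifiers become $\neg\exists\neg$; but it is classically — hence both $\PA$- and $\PA^{\exists}$-provably — equivalent to the induction instance for $\varphi^{\exists}$, which is precisely what the congruence lemma of Step 1 delivers. Everything else is routine.
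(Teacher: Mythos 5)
Your proposal is correct and matches the paper's intent: the paper likewise rests on the classical equivalence $\forall x\,A \leftrightarrow \neg\exists x\,\neg A$ to get $\PA \vdash \PA^{\exists}$ and $\PA^{\exists} \vdash \PA$, and then dismisses the rest as a straightforward induction on the derivation. Your version merely packages that induction differently (a structural induction on $F$ establishing $\vdash_c F \leftrightarrow F^{\exists}$, followed by composition of derivations), which is a clean and complete way to fill in the details the paper leaves implicit.
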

\begin{proof}
  By a straightforward induction on the derivation.
\end{proof}

The question is a bit more complicated for intuitionistic arithmetic: in general, the translated formula is not intuitionistically equivalent to the original one. Nevertheless, we have the following result:

\begin{thm}
  $\HA \vdash \HA^\exists$. So, every formula provable in $\HA^\exists$ is provable in $\HA$
  \label{thm:exists-translation}
\end{thm}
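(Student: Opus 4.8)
The plan is to follow the same strategy as the proof of \cref{thm:neg-trans-ha}: since $\HA^\exists$ is obtained from $\HA$ by replacing each non-logical axiom $B$ by its translation $B^\exists$ while keeping the same (intuitionistic) logic, it is enough to verify that $\HA\vdash B^\exists$ for every axiom $B$ of $\HA$. Once this is done, every axiom of $\HA^\exists$ is a theorem of $\HA$, and since both theories are closed under the same rules, every $\HA^\exists$-theorem is an $\HA$-theorem. The axioms of $\HA$ fall into two groups: the finitely many equational axioms, each of the form $\forall x_1\cdots\forall x_k\, M$ with $M$ quantifier-free, and the instances of the induction scheme.

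For the first group I would first note that if $M$ is quantifier-free then $M^\exists=M$, since $(\cdot)^\exists$ is the identity on atoms and commutes with $\land$, $\lor$ and $\to$; moreover $(\cdot)^\exists$ commutes with substitution of $\Language$-terms. Then I would prove, by induction on $k$, the auxiliary fact
\[\HA\vdash \forall x_1\cdots\forall x_k\, M \;\to\; (\forall x_1\cdots\forall x_k\, M)^\exists ,\]
the only logical ingredient being that $\forall x\, G\to\neg\exists x\,\neg G$ is intuitionistically derivable for any $G$ (given $\exists x\,\neg G$, instantiate $\forall x\, G$ at a witness for $\neg G$ to reach $\bot$); in the inductive step one generalises the induction hypothesis over $x_1$ to obtain $\forall x_1\cdots\forall x_k M\to\forall x_1(\forall x_2\cdots\forall x_k M)^\exists$ and then prefixes $\neg\exists x_1\neg$. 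Since each axiom $B$ of this group is of that shape and is itself provable in $\HA$, modus ponens gives $\HA\vdash B^\exists$.

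The genuine obstacle is the induction scheme. Writing $\psi:=\varphi^\exists$, the translation of the instance for $\varphi$ is
\[\neg\exists x\,\neg\bigl(\psi(x)\to\psi(\suc x)\bigr)\;\to\;\psi(0)\;\to\;\neg\exists x\,\neg\psi(x),\]
whose step hypothesis is $\neg\exists x\,\neg(\psi(x)\to\psi(\suc x))$ rather than $\forall x(\psi(x)\to\psi(\suc x))$, so the $\HA$-instance of induction for $\psi$ does not apply directly. I would bridge this gap with two standard intuitionistic facts: $\HA\vdash \neg\exists x\,\neg\theta(x)\leftrightarrow\forall x\,\neg\neg\theta(x)$ for every $\theta$, and $\HA\vdash \neg\neg(A\to B)\to(\neg\neg A\to\neg\neg B)$. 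By the former, the translated axiom is $\HA$-equivalent to $\forall x\,\neg\neg(\psi(x)\to\psi(\suc x))\to\psi(0)\to\forall x\,\neg\neg\psi(x)$; by the latter, $\forall x\,\neg\neg(\psi(x)\to\psi(\suc x))$ yields $\forall x(\neg\neg\psi(x)\to\neg\neg\psi(\suc x))$. Together with $\neg\neg\psi(0)$ (immediate from $\psi(0)$), this is exactly what the $\HA$-instance of the induction scheme for the formula $\neg\neg\psi(x)$ requires in order to derive $\forall x\,\neg\neg\psi(x)$, which is $\HA$-equivalent to $\neg\exists x\,\neg\psi(x)=(\forall x\,\varphi(x))^\exists$. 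Hence $\HA\vdash B^\exists$ for every induction instance $B$ as well, and the theorem follows. The whole proof is essentially this reshuffling around $\neg\neg$; everything else is bookkeeping.
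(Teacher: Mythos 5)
Your proposal is correct and follows essentially the same route as the paper: the crux in both is to reduce the translated induction instance to the $\HA$ induction scheme applied to $\neg\neg\psi$ (your lemma $\neg\neg(A\to B)\to(\neg\neg A\to\neg\neg B)$ is exactly the content of the explicit proof term $\lambda v\lambda u.\,\pi_1(p)(\langle\alpha,\lambda y\, v(\lambda z. u(yz))\rangle)$ in the paper's derivation), and then to pass between $\forall x\,\neg\neg\psi$ and $\neg\exists x\,\neg\psi$. You are in fact a bit more careful than the paper on the remaining axioms, which it dismisses as ``left untouched'' even though their universal closures do get translated; your derivation of $\forall\vec{x}\,M\to(\forall\vec{x}\,M)^\exists$ fills that small gap.
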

\begin{proof}
  The axioms for equality and the definition of the successor are left untouched by the translation. Consider now the translation of the axiom for induction for an arbitrary formula $P$:

$$ (Ind)^\exists = (P(0) \land (\forall \alpha \ (P(\alpha)  \to P(\alpha+1))) \to \forall \alpha \ P(\alpha))^\exists = $$
$$P(0) \land (\neg  \exists \alpha. \ \neg (P(\alpha)  \to P(\alpha+1))) \to \neg \exists \alpha \neg P(\alpha)$$
\noindent
The formal derivation in \cref{fig:ind-exists} gives a proof of this formula in $\HA$. Therefore, we have that $\HA \vdash \HA^\exists$, and so also whenever $\HA^\exists \vdash F$ $\HA \vdash F$
\end{proof}

\begin{sidewaysfigure}
  \small{
    \begin{prooftree}
      \AxiomC{$[p]_{(1)} : P(0) \land (\neg \exists \alpha. \ \neg
        (P(\alpha) \to P(\alpha+1)))$}
      \UnaryInfC{$\pi_1(p) : \neg \exists \alpha. \ \neg (P(\alpha)
        \to P(\alpha+1))$}
      
      \AxiomC{$[v]_{(5)} : \neg \neg P (\alpha)$} %\neg P(\alpha+1)

      \AxiomC{$[u]_{(6)} : \neg P(\alpha + 1)$} %P(\alpha)$}
      \AxiomC{$[y]_{(7)} : P(\alpha) \to P(\alpha+1)$} \AxiomC{$[z]_8
        : P(\alpha)$} \BinaryInfC{$yz : P(\alpha+1)$}

      \BinaryInfC{$u(yz) : \bot$} \RightLabel{(8)} \UnaryInfC{$\lambda
        z. u(yz) : \neg P(\alpha)$}

      \BinaryInfC{$v(\lambda z . u(yz)) : \bot$} \RightLabel{$(7)$}
      \UnaryInfC{$ \lambda y \ v(\lambda z . u(yz)) : \neg (P(\alpha)
        \to P(\alpha+1))$}

      \UnaryInfC{$ \langle \alpha, \lambda y \ v(\lambda z . u(yz))
        \rangle : \exists \alpha. \ \neg (P(\alpha) \to P(\alpha+1))$}

      \BinaryInfC{$\pi_1(p)( \langle \alpha, \lambda y \ v(\lambda z
        . u(yz)) \rangle) : \bot$} \RightLabel{$(6)$}
      \UnaryInfC{$ \lambda u. \pi_1(p)( \langle \alpha, \lambda y \
        v(\lambda z . u(yz)) \rangle) : \neg \neg P(\alpha + 1)$}
      \RightLabel{$(5)$}
      \UnaryInfC{$ \lambda v \lambda u. \pi_1(p)( \langle \alpha,
        \lambda y \ v(\lambda z . u(yz)) \rangle) : \neg \neg P
        (\alpha) \to \neg \neg P(\alpha + 1)$}
      \UnaryInfC{$ \lambda \alpha \lambda v \lambda u. \pi_1(p)(
        \langle \alpha, \lambda y \ v(\lambda z . u(yz)) \rangle) :
        \forall \alpha (\neg \neg P (\alpha) \to \neg \neg P(\alpha +
        1))$}
    \end{prooftree}
    \caption{Proof of the inductive step ($\mathcal{D}_1$)}
    \begin{prooftree}
      \AxiomC{$[q]_{(2)} : \exists \alpha \neg P(\alpha)$}

      \AxiomC{$[x]_{(4)} : \neg P(0)$}

      \AxiomC{$[p]_{(1)} : P(0) \land (\neg \exists \alpha. \ \neg
        (P(\alpha) \to P(\alpha+1)))$} \UnaryInfC{$\pi_0(p) : P(0)$}

      \BinaryInfC{$ x \pi_0(p) : \bot$}

      \RightLabel{$(4)$}
      \UnaryInfC{$ \lambda x.x \pi_0(p) : \neg \neg P(0)$}

      \AxiomC{$\mathcal{D}_1$} \noLine
      \UnaryInfC{$ \lambda \alpha \lambda v \lambda u. \pi_1(p)(
        \langle \alpha, \lambda y \ v(\lambda z . u(yz)) \rangle) :
        \forall \alpha (\neg \neg P (\alpha) \to \neg \neg P(\alpha +
        1))$}
      \insertBetweenHyps{\hskip -40pt}
      \RightLabel{Ind}
      \BinaryInfC{$\lambda \alpha \mathbf{R} (\alpha \ \lambda x.x
        \pi_0(p) \ \lambda \alpha \lambda v \lambda u. \pi_1(p)(
        \langle \alpha, \lambda y \ v(\lambda z . u(yz)) \rangle)) :
        \forall \alpha \neg \neg P(\alpha)$}

      \UnaryInfC{$\lambda \alpha \mathbf{R} (\alpha \ \lambda x.x \pi_0(p) \ \lambda
        \alpha \lambda v \lambda u. \pi_1(p)( \langle \alpha, \lambda
        y \ v(\lambda z . u(yz)) \rangle)) \beta: \neg \neg P(\beta)$}
      \AxiomC{$[t]_{(3) (\exists)} : \neg P(\beta)$}
      \insertBetweenHyps{\hskip -100pt}
      \BinaryInfC{$(\lambda \alpha \ (\mathbf{R} (\alpha \ \lambda x.x
        \pi_0(p) \ \lambda \alpha \lambda v \lambda u. \pi_1(p)(
        \langle \alpha, \lambda y \ v(\lambda z . u(yz))
        \rangle)))\beta ) t : \bot$}

      \insertBetweenHyps{\hskip -120pt}

      \RightLabel{(3) ($\exists$-E)}
      \BinaryInfC{$q [(\beta,t).((\lambda \alpha \ (\mathbf{R} (\alpha
        \ \lambda x.x \pi_0(p) \ \lambda \alpha \lambda v \lambda
        u. \pi_1(p)( \langle \alpha, \lambda y \ v(\lambda z . u(yz))
        \rangle))) \beta) t)] : \bot$} \RightLabel{(2)}
      \UnaryInfC{$\lambda q \ q [(\beta,t).((\lambda \alpha \
        (\mathbf{R} (\alpha \ \lambda x.x \pi_0(p) \ \lambda \alpha
        \lambda v \lambda u. \pi_1(p)( \langle \alpha, \lambda y \
        v(\lambda z . u(yz)) \rangle))) \beta) t)] : \neg \exists
        \alpha \neg P(\alpha)$} \RightLabel{(1)}
      \UnaryInfC{$\lambda p \lambda q \ q [(\beta,t).((\lambda \alpha
        \ (\mathbf{R} (\alpha \ \lambda x.x \pi_0(p) \ \lambda \alpha
        \lambda v \lambda u. \pi_1(p)( \langle \alpha, \lambda y \
        v(\lambda z . u(yz)) \rangle))) \beta) t)] : P(0) \land (\neg
        \exists \alpha. \ \neg (P(\alpha) \to P(\alpha+1))) \to \neg
        \exists \alpha \neg P(\alpha) $}
    \end{prooftree}
  }
  \label{fig:ind-exists}
  \caption{Proof of $(Ind)^\exists$ in $\HA$}
\end{sidewaysfigure}

\section{Embedding classical proofs in $\HA + \EM^-$}
%While giving an interesting result at the logical level, the previous section is hardly useful at the computational level: since the assumptions in the two branches of the new rule \emmeno correspond to an arbitrarily complex formula, we cannot reason as in the previous chapter, with a branch doing an assumption and the other waiting for a witness. We still need some further property of system $\HA + \EM^-$ in order to be able to effectively study an equivalent system that allows a Curry-Howard correspondence.

We go back now to the system $\HA+\EM^-$ defined in \cref{sec:full-excluded-middle}. Since in this new system instances of the excluded middle are allowed on arbitrary formulas, we might be tempted to investigate more on how much of a classical proof we can reconstruct in it. A first approach can be the following: in the case the statement to be proved is itself simply existential, we could allow occurrences of the excluded middle rule whenever we are sure they are the lowermost infecences. More formally, we introduce the notation

\begin{prooftree}
  \AxiomC{$\mathcal{D}_1$}
  \noLine
  \UnaryInfC{$\exists \alpha \emp{}$}
  \AxiomC{$\mathcal{D}_2$}
  \noLine
  \UnaryInfC{$\exists \alpha \emp{}$}
  \AxiomC{\dots}
  \AxiomC{$\mathcal{D}_n$}
  \noLine
  \UnaryInfC{$\exists \alpha \emp{}$}
  \doubleLine
  \RightLabel{\emmeno}
  \QuaternaryInfC{$\exists \alpha \emp{}$}
\end{prooftree}

to indicate that $\mathcal{D}_1$, $\mathcal{D}_2$ \dots $\mathcal{D}_n$ are proofs of $\exists \alpha \emp{}$ not using \emmeno, possibly with open assumptions, and the conclusion is obtained by repeated usage of the \emmeno rule on them (note that \emmeno is indeed used only on a $\Sigma_1^0$ formula). Similarly define the same notation for \emme.

Then clearly the new construct for \emmeno can be directly replaced by instances of Markov's principle using the proof tree from \cref{sec:full-excluded-middle}. Our task for this section is thus to show that any proof (in $\PA$, i.e. $\HA+\EM$) of a simply existential statement can be rewritten into a proof in $\HA+\EM^-$ of the above form.

In order to do so, we employ new permutation rules extending the ones defined in \cite{Aschieri16} to move the use of classical reasoning below purely intuitionistic proofs. In general, we could have an unrestricted use of the excluded middle, in the form of the rule $\emme$. For every intuitionistic rule, one needs to move the classical rule below it:

$\to$-introduction:
$$\begin{aligned}
\AxiomC{$[A]$}
\AxiomC{$[B]_{(1)}$}
\noLine
\BinaryInfC{$\vdots$}
\noLine
\UnaryInfC{$C$}
\AxiomC{$[\neg A]$}
\AxiomC{$[B]_{(1)}$}
\noLine
\BinaryInfC{$\vdots$}
\noLine
\UnaryInfC{$C$}
\RightLabel{\emme}
\BinaryInfC{$C$}
\RightLabel{(1)}
\UnaryInfC{$B\to C$}
\DisplayProof
  &
  \mbox{ $\leadsto$ } 
  &
\AxiomC{$[A]$}
\AxiomC{$[B]_{(1)}$}
\noLine
\BinaryInfC{$\vdots$}
\noLine
\UnaryInfC{$C$}
\RightLabel{(1)}
\UnaryInfC{$B \to C$}
\AxiomC{$[\neg A]$}
\AxiomC{$[B]_{(2)}$}
\noLine
\BinaryInfC{$\vdots$}
\noLine
\UnaryInfC{$C$}
\RightLabel{(2)}
\UnaryInfC{$B \to C$}
\RightLabel{\emme}
\BinaryInfC{$B\to C$}
\DisplayProof
 \\
\end{aligned}
$$

$\to$-elimination/1:
$$\begin{aligned}
\AxiomC{$[A]$}
\noLine
\UnaryInfC{$\vdots$}
\noLine
\UnaryInfC{$B \to C$}
\AxiomC{$[\neg A]$}
\noLine
\UnaryInfC{$\vdots$}
\noLine
\UnaryInfC{$B \to C$}
\RightLabel{\emme}
\BinaryInfC{$B\to C$}
\AxiomC{$\vdots$}
\noLine
\UnaryInfC{$B$}
\BinaryInfC{$C$}
\DisplayProof
  &
  \qquad \mbox{$\leadsto$ } \qquad
  &
\AxiomC{$[A]$}
\noLine
\UnaryInfC{$\vdots$}
\noLine
\UnaryInfC{$B \to C$}
\AxiomC{$\vdots$}
\noLine
\UnaryInfC{$B$}
\BinaryInfC{$C$}
\AxiomC{$[\neg A]$}
\noLine
\UnaryInfC{$\vdots$}
\noLine
\UnaryInfC{$B \to C$}
\AxiomC{$\vdots$}
\noLine
\UnaryInfC{$B$}
\BinaryInfC{$C$}
\RightLabel{\emme}
\BinaryInfC{$C$}
\DisplayProof
 \\
\end{aligned}
$$

$\to$-elimination/2:
$$\begin{aligned}
\AxiomC{$\vdots$}
\noLine
\UnaryInfC{$B \to C$}
\AxiomC{$[A]$}
\noLine
\UnaryInfC{$\vdots$}
\noLine
\UnaryInfC{$B$}
\AxiomC{$[\neg A]$}
\noLine
\UnaryInfC{$\vdots$}
\noLine
\UnaryInfC{$B$}
\RightLabel{\emme}
\BinaryInfC{$B$}
\BinaryInfC{$C$}
\DisplayProof
  &
  \qquad \mbox{$\leadsto$ } \qquad
  &
\AxiomC{$\vdots$}
\noLine
\UnaryInfC{$B \to C$}
\AxiomC{$[A]$}
\noLine
\UnaryInfC{$\vdots$}
\noLine
\UnaryInfC{$B$}
\BinaryInfC{$C$}
\AxiomC{$\vdots$}
\noLine
\UnaryInfC{$B \to C$}
\AxiomC{$[\neg A]$}
\noLine
\UnaryInfC{$\vdots$}
\noLine
\UnaryInfC{$B$}
\BinaryInfC{$C$}
\RightLabel{\emme}
\BinaryInfC{$C$}
\DisplayProof
 \\
\end{aligned}
$$

$\land$-introduction/1:
$$\begin{aligned}
\AxiomC{$[A]$}
\noLine
\UnaryInfC{$\vdots$}
\noLine
\UnaryInfC{$B$}
\AxiomC{$[\neg A]$}
\noLine
\UnaryInfC{$\vdots$}
\noLine
\UnaryInfC{$B$}
\RightLabel{\emme}
\BinaryInfC{$B$}
\AxiomC{$\vdots$}
\noLine
\UnaryInfC{$C$}
\BinaryInfC{$B \land C$}
\DisplayProof
  &
  \qquad \mbox{$\leadsto$ } \qquad
  &
\AxiomC{$[A]$}
\noLine
\UnaryInfC{$\vdots$}
\noLine
\UnaryInfC{$B$}
\AxiomC{$\vdots$}
\noLine
\UnaryInfC{$C$}
\BinaryInfC{$B \land C$}
\AxiomC{$[\neg A]$}
\noLine
\UnaryInfC{$\vdots$}
\noLine
\UnaryInfC{$B$}
\AxiomC{$\vdots$}
\noLine
\UnaryInfC{$C$}
\BinaryInfC{$B \land C$}
\RightLabel{\emme}
\BinaryInfC{$B \land C$}
\DisplayProof
 \\
\end{aligned}
$$

$\land$-introduction/2:
$$\begin{aligned}
\AxiomC{$\vdots$}
\noLine
\UnaryInfC{$B$}
\AxiomC{$[A]$}
\noLine
\UnaryInfC{$\vdots$}
\noLine
\UnaryInfC{$C$}
\AxiomC{$[\neg A]$}
\noLine
\UnaryInfC{$\vdots$}
\noLine
\UnaryInfC{$C$}
\RightLabel{\emme}
\BinaryInfC{$C$}
\BinaryInfC{$B \land C$}
\DisplayProof
  &
  \qquad \mbox{$\leadsto$ } \qquad
  &
\AxiomC{$\vdots$}
\noLine
\UnaryInfC{$B$}
\AxiomC{$[A]$}
\noLine
\UnaryInfC{$\vdots$}
\noLine
\UnaryInfC{$C$}
\BinaryInfC{$B \land C$}
\AxiomC{$\vdots$}
\noLine
\UnaryInfC{$B$}
\AxiomC{$[\neg A]$}
\noLine
\UnaryInfC{$\vdots$}
\noLine
\UnaryInfC{$C$}
\BinaryInfC{$B \land C$}
\RightLabel{\emme}
\BinaryInfC{$B \land C$}
\DisplayProof
 \\
\end{aligned}
$$

$\land$-elimination/1, $\land$-elimination/2:
$$\begin{aligned}
\AxiomC{$[A]$}
\noLine
\UnaryInfC{$\vdots$}
\noLine
\UnaryInfC{$A_1 \land A_2$}
\AxiomC{$[\neg A]$}
\noLine
\UnaryInfC{$\vdots$}
\noLine
\UnaryInfC{$A_1 \land A_2$}
\RightLabel{\emme}
\BinaryInfC{$A_1 \land A_2$}
\UnaryInfC{$A_i$}
\DisplayProof
  &
  \qquad \mbox{$\leadsto$ } \qquad
  &
\AxiomC{$[A]$}
\noLine
\UnaryInfC{$\vdots$}
\noLine
\UnaryInfC{$A_1 \land A_2$}
\UnaryInfC{$A_i$}
\AxiomC{$[\neg A]$}
\noLine
\UnaryInfC{$\vdots$}
\noLine
\UnaryInfC{$A_1 \land A_2$}
\UnaryInfC{$A_i$}
\RightLabel{\emme}
\BinaryInfC{$A_i$}
\DisplayProof
 \\
\end{aligned}
$$

And similarly for $\lor$-introduction, $\lor$-elimination.

$\exists$-introduction:
$$\begin{aligned}
\AxiomC{$[A]$}
\noLine
\UnaryInfC{$\vdots$}
\noLine
\UnaryInfC{$B[m/\alpha]$}
\AxiomC{$[\neg A]$}
\noLine
\UnaryInfC{$\vdots$}
\noLine
\UnaryInfC{$B[m/\alpha]$}
\RightLabel{\emme}
\BinaryInfC{$B[m/\alpha]$}
\UnaryInfC{$\exists \alpha B$}
\DisplayProof
  &
  \mbox{ $\leadsto$ } 
  &
\AxiomC{$[A]$}
\noLine
\UnaryInfC{$\vdots$}
\noLine
\UnaryInfC{$B[m/\alpha]$}
\UnaryInfC{$\exists \alpha B$}
\AxiomC{$[\neg A]$}
\noLine
\UnaryInfC{$\vdots$}
\noLine
\UnaryInfC{$B[m/\alpha]$}
\UnaryInfC{$\exists \alpha B$}
\RightLabel{\emme}
\BinaryInfC{$\exists \alpha B$}
\DisplayProof
 \\
\end{aligned}
$$

$\exists$-elimination/1:
$$\begin{aligned}
\AxiomC{$[A]$}
\noLine
\UnaryInfC{$\vdots$}
\noLine
\UnaryInfC{$\exists \alpha B$}
\AxiomC{$[\neg A]$}
\noLine
\UnaryInfC{$\vdots$}
\noLine
\UnaryInfC{$\exists \alpha B$}
\RightLabel{\emme}
\BinaryInfC{$\exists \alpha B$}
\AxiomC{$[B]$}
\noLine
\UnaryInfC{$\vdots$}
\noLine
\UnaryInfC{$C$}
\BinaryInfC{$C$}
\DisplayProof
  &
  \qquad \mbox{$\leadsto$ } \qquad
  &
\AxiomC{$[A]$}
\noLine
\UnaryInfC{$\vdots$}
\noLine
\UnaryInfC{$\exists \alpha B$}
\AxiomC{$[B]$}
\noLine
\UnaryInfC{$\vdots$}
\noLine
\UnaryInfC{$C$}
\BinaryInfC{$C$}
\AxiomC{$[\neg A]$}
\noLine
\UnaryInfC{$\vdots$}
\noLine
\UnaryInfC{$\exists \alpha B$}
\AxiomC{$[B]$}
\noLine
\UnaryInfC{$\vdots$}
\noLine
\UnaryInfC{$C$}
\BinaryInfC{$C$}
\RightLabel{\emme}
\BinaryInfC{$C$}
\DisplayProof
 \\
\end{aligned}
$$

$\exists$-elimination/2:
$$\begin{aligned}
\centerfloat
\AxiomC{\vdots}
\noLine
\UnaryInfC{$\exists \alpha B$}
\AxiomC{$[A]$}
\AxiomC{$[B]$}
\noLine
\BinaryInfC{$\vdots$}
\noLine
\UnaryInfC{$C$}
\AxiomC{$[\neg A]$}
\AxiomC{$[B]$}
\noLine
\BinaryInfC{$\vdots$}
\noLine
\UnaryInfC{$C$}
\RightLabel{\emme}
\BinaryInfC{$C$}
\BinaryInfC{$C$}
\DisplayProof
  &
  \mbox{$\leadsto$ }
  &
\AxiomC{\vdots}
\noLine
\UnaryInfC{$\exists \alpha B$}
\AxiomC{$[A]$}
\AxiomC{$[B]$}
\noLine
\BinaryInfC{$\vdots$}
\noLine
\UnaryInfC{$C$}
\BinaryInfC{$C$}
\AxiomC{\vdots}
\noLine
\UnaryInfC{$\exists \alpha B$}
\AxiomC{$[\neg A]$}
\AxiomC{$[B]$}
\noLine
\BinaryInfC{$\vdots$}
\noLine
\UnaryInfC{$C$}
\BinaryInfC{$C$}
\RightLabel{\emme}
\BinaryInfC{$C$}
\DisplayProof
 \\
\end{aligned}
$$

Now, we would like to define a permutation for the case of the universal quantifier. However, it turns out that this is not possible: for the case of $\forall$-I we have no general way of defining one. Consider for example the proof

\begin{prooftree}
  \AxiomC{$[P(x)]_\textsc{em}$}
  \UnaryInfC{$(P(x) \lor \neg P(x))$}
  \AxiomC{$[\neg P(x)]_\textsc{em}$} 
  \UnaryInfC{$(P(x) \lor \neg P(x))$}
  \RightLabel{\emme}
  \BinaryInfC{$(P(x) \lor \neg P(x))$}
  \RightLabel{$\forall$-I}
  \UnaryInfC{$\forall x \ (P(x) \lor \neg P(x))$}
\end{prooftree}

\noindent
Here clearly we have no way of moving the the excluded middle below universal introduction, since the variable $x$ is free before \emme lets us discharge the assumptions. This is where the translation from \cref{sec:new-negat-transl} comes to the rescue: clearly, proofs in $\PA^\exists$ will not contain applications of rules for the universal quantifier, and are thus suitable for our transformations. Therefore, the last rule for which we should give a permutation is the translated rule of induction $(Ind)^\exists$ for $\PA^\exists$:

\begin{prooftree}
  \AxiomC{$\Gamma \vdash A(0)$}
  \AxiomC{$\Gamma \vdash \neg \exists \alpha \neg \ (A(\alpha) \to A(\suc(\alpha)))$}
  \RightLabel{$Ind^\exists$}
  \BinaryInfC{$\Gamma \vdash \neg \exists \alpha \ \neg A(\alpha)$}
\end{prooftree}

The permutations for $Ind^\exists$ will be:

\begin{prooftree}
  \AxiomC{\vdots}
  \noLine
  \UnaryInfC{$B(0)$}
  \AxiomC{[$A$]} 
  \noLine
  \UnaryInfC{$\vdots$}
  \noLine
  \UnaryInfC{$\neg \exists \alpha \ \neg B((\alpha)\rightarrow B(\suc(\alpha)))$} 
  \AxiomC{[$\neg A$]} 
  \noLine
  \UnaryInfC{$\vdots$} 
  \noLine
  \UnaryInfC{$\neg \exists \alpha \ \neg (B(\alpha)\rightarrow B(\suc(\alpha)))$}
  \RightLabel{\emme}
  \BinaryInfC{$\neg \exists \alpha \ \neg (B(\alpha)\rightarrow B(\suc(\alpha)))$} 
  \BinaryInfC{$\neg \exists \alpha \ \neg B$}
\end{prooftree}

converts to:

\begin{prooftree}
  \AxiomC{\vdots} \noLine \UnaryInfC{$B(0)$}
  \AxiomC{[$A$]} \noLine \UnaryInfC{$\vdots$} \noLine
  \UnaryInfC{$\neg \exists \alpha \ \neg (B(\alpha)\rightarrow B(\suc(\alpha)))$}
  \BinaryInfC{$\neg \exists \alpha \ \neg B$}
  \AxiomC{\vdots} \noLine \UnaryInfC{$B(0)$}
  \AxiomC{[$\neg A$]} \noLine \UnaryInfC{$\vdots$}
  \noLine
  \UnaryInfC{$\neg \exists \alpha \ \neg (B(\alpha) \rightarrow B(\suc(\alpha)))$}
  \BinaryInfC{$\neg \exists \alpha \ \neg B$}
  \RightLabel{\emme}
  \BinaryInfC{$\neg \exists \alpha \ \neg B$}
\end{prooftree}

and

\begin{prooftree}
  \AxiomC{[$A$]} 
  \noLine
  \UnaryInfC{$\vdots$}
  \noLine
  \UnaryInfC{$B(0)$}
  \AxiomC{[$\neg A$]} 
  \noLine
  \UnaryInfC{$\vdots$} 
  \noLine
  \UnaryInfC{$B(0)$}
  \RightLabel{\emme}
  \BinaryInfC{$B(0)$} 
  \AxiomC{\vdots}
  \noLine
  \UnaryInfC{$\neg \exists \alpha \ \neg (B(\alpha)\rightarrow B(\suc(\alpha)))$} 
  \BinaryInfC{$\neg \exists \alpha \ \neg B$}
\end{prooftree}

converts to:

\begin{prooftree}
  \AxiomC{[$A$]} 
  \noLine
  \UnaryInfC{$\vdots$}
  \noLine
  \UnaryInfC{$B(0)$}
  \AxiomC{\vdots}
  \noLine
  \UnaryInfC{$\neg \exists \alpha \ \neg (B(\alpha)\rightarrow B(\suc(\alpha)))$} 
  \BinaryInfC{$\neg \exists \alpha \ \neg B$}
  \AxiomC{[$\neg A$]} 
  \noLine
  \UnaryInfC{$\vdots$} 
  \noLine
  \UnaryInfC{$B(0)$}
  \AxiomC{\vdots}
  \noLine
  \UnaryInfC{$\neg \exists \alpha \ \neg (B(\alpha)\rightarrow B(\suc(\alpha)))$} 
  \BinaryInfC{$\neg \exists \alpha \ \neg B$}
  \RightLabel{\emme}
  \BinaryInfC{$\neg \exists \alpha \ \neg B$}
\end{prooftree}
\noindent
By employing the just defined permutation rules, we can state

\begin{proposition}
  \label{lemma:normal-pa-exists}
  Every proof of a formula $F$ in $\PA^\exists$ can be transformed into a proof
\begin{prooftree}
  \AxiomC{$\mathcal{D}_1$}
  \noLine
  \UnaryInfC{$F$}
  \AxiomC{$\mathcal{D}_2$}
  \noLine
  \UnaryInfC{$F$}
  \AxiomC{\dots}
  \AxiomC{$\mathcal{D}_n$}
  \noLine
  \UnaryInfC{$F$}
  \doubleLine
  \RightLabel{\emme}
  \QuaternaryInfC{ $F$}
\end{prooftree}

Where $\mathcal{D}_1, \mathcal{D}_2 \dots \mathcal{D}_n$ are purely intuitionistic proofs.
\end{proposition}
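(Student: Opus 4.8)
The statement is that any $\PA^\exists$-proof of a formula $F$ can be rewritten into a proof in which all uses of $\emme$ are pushed to the bottom, with purely intuitionistic subderivations above them. The natural strategy is a \emph{proof-normalization argument by induction}: I would define a measure on proofs that counts, roughly speaking, how far the $\emme$-inferences are from the root, and show that each permutation rule listed above strictly decreases this measure while preserving the conclusion and the set of open assumptions. Since proofs are finite, iterating the permutations must terminate, and a proof on which no permutation applies must have all $\emme$-inferences as its lowermost inferences — which is exactly the claimed form.

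\textbf{Key steps, in order.} First I would fix the syntactic setting: proofs in $\PA^\exists$ use the connectives $\land,\lor,\to,\exists$ plus the translated induction rule $(Ind)^\exists$, but \emph{no} rule for $\forall$ (this is the whole point of passing through the $\exists$-translation, as noted just before the statement). So the permutation rules given above — one for each of $\to$-I, $\to$-E (two cases), $\land$-I (two cases), $\land$-E, $\lor$-I, $\lor$-E, $\exists$-I, $\exists$-E (two cases), and $(Ind)^\exists$ (two cases) — cover \emph{every} inference rule that could sit immediately below an $\emme$-inference. Second, I would check that each permutation is \emph{sound}: the displayed rewrites clearly produce a proof of the same end-formula from the same open assumptions (one must be slightly careful that duplicating a subderivation $\mathcal D_i$ on both branches of $\emme$ does not capture or lose eigenvariables, but since $\PA^\exists$ has no $\forall$-I the only eigenvariable condition is on $\exists$-E, and the assumption-discharge bookkeeping in the displayed rules already respects it). Third, I would introduce the termination measure. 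A workable choice: to each occurrence of an $\emme$-inference in the proof tree, assign the number of inference-rule occurrences strictly below it (on the path to the root); let $\mu(\pi)$ be the multiset of these numbers, ordered by the multiset order. Each permutation step takes one $\emme$-inference that has some rule $r$ immediately below it and replaces it by one or two copies of $\emme$ each sitting immediately \emph{above} copies of $r$; every such copy has strictly fewer inferences below it than the original did, and no other $\emme$-inference gains any inference below it, so $\mu$ strictly decreases. By well-foundedness of the multiset order the rewriting terminates. Fourth, I would observe that a proof in normal form (no permutation applicable) can contain $\emme$-inferences only where nothing but further $\emme$-inferences lies below them — i.e. the bottom of the proof is a ``stack'' of $\emme$-applications whose premises are $\emme$-free proofs of $F$. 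That is precisely the schematic figure in the statement, so I would conclude by reading off $\mathcal D_1,\dots,\mathcal D_n$ as those $\emme$-free premises.

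\textbf{Anticipated main obstacle.} The delicate point is not the termination argument but verifying that the permutations genuinely cover all configurations and interact correctly with \emph{variable conditions and assumption discharge}. In particular, when an $\emme$-inference discharging $[A],[\neg A]$ is permuted below an $\exists$-E that discharges $[B]$ and binds an eigenvariable, one must ensure the eigenvariable does not occur free in $A$; this can be arranged by the usual renaming of bound/eigenvariables, but it needs to be stated. A second subtlety is the case analysis for the \emph{two} $(Ind)^\exists$ permutations: induction has two premises ($A(0)$ and the step), and an $\emme$ may sit below either one; both displayed rewrites are needed, and one must confirm that duplicating the \emph{other}, $\emme$-free premise across the two new branches is harmless — which it is, since that premise is closed under the relevant context and carries no eigenvariable obligation interacting with $A$. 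Once these bookkeeping points are dispatched, the argument is routine; the heart of it is simply that each permutation moves $\emme$ strictly downward in a well-founded sense.
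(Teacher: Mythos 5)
Your overall strategy --- push \emme downward through every non-\emme rule, relying on the absence of $\forall$-rules in $\PA^\exists$ so that the listed permutations cover all configurations --- is the same as the paper's, and your remarks about eigenvariables and discharge are correct. The genuine problem is the termination measure. The claim that ``no other \emme inference gains any inference below it'' fails for every permutation whose lower rule is \emph{binary}: those permutations duplicate the other premise's subderivation (the proof of $B$ in $\to$-elimination, the other conjunct's proof in $\land$-introduction, the minor premise in $\exists$-elimination, the other premise of $(Ind)^\exists$). If the duplicated subderivation itself contains occurrences of \emme, each such occurrence is replaced by \emph{two} copies, and each copy acquires one \emph{additional} inference on its path to the root, namely the newly created bottommost \emme. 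Concretely, in the first $\to$-elimination permutation an occurrence of \emme inside the proof of $B$ with $k$ inferences below it becomes two occurrences with $k+1$ inferences below each; in the multiset order this is a strict \emph{increase}, and the single decrease contributed by the permuted \emme does not compensate for it, since $k+1$ need not be smaller than any removed element. So the rewriting-plus-measure argument does not go through as stated.

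The gap is repairable, and the paper's proof shows one way: instead of rewriting in an arbitrary order controlled by a global measure, argue by structural induction on the proof, considering its last rule. One first applies the induction hypothesis to the immediate subproofs, turning each into a block of \emme inferences over purely intuitionistic derivations $\mathcal{D}_1,\dots,\mathcal{D}_n$ (resp.\ $\mathcal{E}_1,\dots,\mathcal{E}_m$), and only then performs the permutations, $n-1$ and $m-1$ times, to push the last rule above these blocks. With this ordering every subderivation that gets duplicated is either already free of \emme or is immediately renormalized by a further appeal to the induction hypothesis, so no well-founded measure on the whole proof is needed: the recursion is on the fixed, finite structure of the original derivation. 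If you prefer to keep a genuine termination argument for the unrestricted rewriting system, you would need either a fixed strategy (e.g.\ always permute an occurrence of \emme whose sibling premises are already \emme-free) or a substantially more refined measure than the multiset of depths you propose.
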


\begin{proof}
 Proceed by induction on the structure of the proof. The base case where the proof only containts axioms and a single rule is vacuous. Otherwise, assume there is at least one use of \emme (if not the thesis holds vacuosly) and consider the lowermost rule application:
\begin{itemize}
\item If it is \emme, then the induction hypothesis can be applied to the subtrees corresponding to the two premises of the rule, yelding the thesis.
\item As an example to the case of unary rules, consider $\exists$-introduction; then the proof has the shape 
\AxiomC{\vdots}
\noLine
\UnaryInfC{$F^\prime[m/\alpha]$}
\RightLabel{$\exists$-I}
\UnaryInfC{$\exists \alpha F^\prime$}
\DisplayProof

Applying the induction hypothesis to the subproof corresponding to the premise, by our assumption we get a proof of the form

\begin{prooftree}
  \AxiomC{$\mathcal{D}_1$}
  \noLine
  \UnaryInfC{$F^\prime[m/\alpha]$}
  \AxiomC{$\mathcal{D}_2$}
  \noLine
  \UnaryInfC{$F^\prime[m/\alpha]$}
  \AxiomC{\dots}
  \AxiomC{$\mathcal{D}_n$}
  \noLine
  \UnaryInfC{$F^\prime[m/\alpha]$}
  \doubleLine
  \RightLabel{\emme}
  \QuaternaryInfC{$F^\prime[m/\alpha]$}
\end{prooftree}

Substitute this in the original proof: by applying the permutation rule for $\exists$-introduction $n-1$ times, we move the exist introduction right below the intuitionistic part; the proof then becomes

\begin{prooftree}
  \AxiomC{$\mathcal{D}_1$}
    \noLine
  \UnaryInfC{$F^\prime[m/\alpha]$}
  \UnaryInfC{$\exists \alpha F^\prime$}
  \AxiomC{$\mathcal{D}_2$}
    \noLine
  \UnaryInfC{$F^\prime[m/\alpha]$}
  \UnaryInfC{$\exists \alpha F^\prime$}
  \AxiomC{\dots}
  \AxiomC{$\mathcal{D}_n$}
    \noLine
  \UnaryInfC{$F^\prime[m/\alpha]$}
  \UnaryInfC{$\exists \alpha F^\prime$}
  \doubleLine
  \RightLabel{\emme}
  \QuaternaryInfC{$\exists \alpha F^\prime$}
\end{prooftree}

Which satisfies the thesis.

The cases of the other unary rules are analogous.

\item As an example for the case of binary rules, consider $\to$-elimination; then the proof has the shape 
\AxiomC{\vdots}
\noLine
\UnaryInfC{$G \to F$}
\AxiomC{\vdots}
\noLine
\UnaryInfC{$G$}
\RightLabel{$\to$-E}
\BinaryInfC{$F$}
\DisplayProof

Applying the induction hypothesis to the subproofs corresponding to the premises, by our assumption we get two proofs where in at least one of the two the last used rule is \emme: we select one where this is the case.

Say we chose the proof of the first premise (the other case is symmetric), then from the induction hypothesis we have obtained a proof of the shape
\begin{prooftree}
  \AxiomC{$\mathcal{D}_1$}
  \noLine
  \UnaryInfC{$G \to F$}
  \AxiomC{$\mathcal{D}_2$}
  \noLine
  \UnaryInfC{$G \to F$}
  \AxiomC{\dots}
  \AxiomC{$\mathcal{D}_n$}
  \noLine
  \UnaryInfC{$G \to F$}
  \doubleLine
  \RightLabel{\emme}
  \QuaternaryInfC{$G \to F$}
\end{prooftree}

After substitutig this in the original proof, we can employ the permutation for $\to$-elimination $n-1$ times and obtain the proof

\begin{prooftree}
  \AxiomC{$\mathcal{D}_1$}
  \noLine
  \UnaryInfC{$G \to F$}
  \AxiomC{\vdots}
  \noLine
  \UnaryInfC{$G$}
  \BinaryInfC{$F$}
  \AxiomC{$\mathcal{D}_2$}
  \noLine
  \UnaryInfC{$G \to F$}
  \AxiomC{\vdots}
  \noLine
  \UnaryInfC{$G$}
  \BinaryInfC{$F$}
  \AxiomC{\dots}
  \AxiomC{$\mathcal{D}_n$}
  \noLine
  \UnaryInfC{$G \to F$}
  \AxiomC{\vdots}
  \noLine
  \UnaryInfC{$G$}
  \BinaryInfC{$F$}
  \doubleLine
  \RightLabel{\emme}
  \QuaternaryInfC{$F$}
\end{prooftree}

If the proof of $G$ is intuitionistic we have the thesis, so assume it is not. Just as before, we can use the induction hypothesis on it, and obtain:
\begin{prooftree}
  \AxiomC{$\mathcal{D}_1$}
  \noLine
  \UnaryInfC{$G \to F$}

  \AxiomC{$\mathcal{E}_1$}
  \noLine
  \UnaryInfC{$G$}
  \AxiomC{\dots}
  \AxiomC{$\mathcal{E}_m$}
  \noLine
  \UnaryInfC{$G$}
  \doubleLine
  \RightLabel{\emme}
  \TrinaryInfC{$G$}

  \BinaryInfC{$F$}

  \AxiomC{\dots}

  \AxiomC{$\mathcal{D}_n$}
  \noLine
  \UnaryInfC{$G \to F$}
  \AxiomC{$\mathcal{E}_1$}
  \noLine
  \UnaryInfC{$G$}
  \AxiomC{\dots}
  \AxiomC{$\mathcal{E}_m$}
  \noLine
  \UnaryInfC{$G$}
  \doubleLine
  \RightLabel{\emme}
  \TrinaryInfC{$G$}
  \BinaryInfC{$F$}
  \doubleLine
  \RightLabel{\emme}
  \TrinaryInfC{$F$}
\end{prooftree}

After applying $m-1$ times the second permutation for $\to$-elimination, we obtain

\begin{prooftree}
  \centerfloat
  \AxiomC{$\mathcal{D}_1$}
  \noLine
  \UnaryInfC{$G \to F$}
  \AxiomC{$\mathcal{E}_1$}
  \noLine
  \UnaryInfC{$G$}
  \BinaryInfC{$F$}

  \AxiomC{\dots}

  \AxiomC{$\mathcal{D}_1$}
  \noLine
  \UnaryInfC{$G \to F$}
  \AxiomC{$\mathcal{E}_m$}
  \noLine
  \UnaryInfC{$G$}
  \BinaryInfC{$F$}

  \doubleLine
  \RightLabel{\emme}
  \TrinaryInfC{$F$}

  \AxiomC{\dots}

  \AxiomC{$\mathcal{D}_n$}
  \noLine
  \UnaryInfC{$G \to F$}
  \AxiomC{$\mathcal{E}_1$}
  \noLine
  \UnaryInfC{$G$}
  \BinaryInfC{$F$}

  \AxiomC{\dots}

  \AxiomC{$\mathcal{D}_n$}
  \noLine
  \UnaryInfC{$G \to F$}
  \AxiomC{$\mathcal{E}_m$}
  \noLine
  \UnaryInfC{$G$}
  \BinaryInfC{$F$}

  \doubleLine
  \RightLabel{\emme}
  \TrinaryInfC{$F$}

  \doubleLine
  \RightLabel{\emme}
  \TrinaryInfC{$F$}
\end{prooftree}

Which satisfies the thesis. The cases of the other binary rules are analogous.

\end{itemize}
\end{proof}

After these transformations we are using the excluded middle only with the statement to prove as a conclusion. A similar result was obtained by Seldin \cite{Seldin89}, but with a rule for reduction ad absurdum in place of the excluded middle and without induction.

This means that if the statement we are proving is of a certain complexity, we do not need classical reasoning on formulas of higher complexity.

\begin{proposition}
  \label{lemma:pa-ha-exists}
  Every proof in $\PA^\exists$ of a $\Sigma_1^0$ statement can be transformed into a proof in $\HA^\exists+\EM_1^-$
\end{proposition}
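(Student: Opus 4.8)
The plan is to combine \cref{lemma:normal-pa-exists} with the two translations between restricted excluded middle and Markov's principle established in the two preceding sections.

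First I would apply \cref{lemma:normal-pa-exists} to the given $\PA^\exists$ proof of the $\Sigma_1^0$ statement $\exists\alpha\emp{}$. This yields a proof of the normal shape in which a block of consecutive \emme inferences sits at the bottom, above purely intuitionistic subproofs $\mathcal{D}_1,\ldots,\mathcal{D}_n$, each concluding $\exists\alpha\emp{}$ (the formulas $A_i$ and $\neg A_i$ discharged by the \emme block being among their open assumptions). Two observations then complete the straightforward part. Since none of the $\mathcal{D}_i$ uses \emme, each is built only from the intuitionistic rules and $(Ind)^\exists$, hence is a derivation of $\HA^\exists$; and since every \emme in the bottom block has the conclusion $\exists\alpha\emp{}$, which is $\Sigma_1^0$, each such inference is literally an \emmeno inference. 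Therefore the normalized proof is already a proof in $\HA^\exists + \EM^{-}$.

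It then remains to trade each \emmeno inference for an \emeno inference. For this I would replace an \emmeno node, with premises $\Gamma,A\vdash\exists\alpha\emp{}$ and $\Gamma,\neg A\vdash\exists\alpha\emp{}$, by the derivation obtained by composing the two constructions of \cref{sec:full-excluded-middle} and \cref{sec:real-mark-princ}: one first builds, as in \cref{sec:full-excluded-middle}, a derivation of $\exists\alpha\emp{}$ from $\Gamma$ using only intuitionistic rules, the decidability of the atom $\emp{}$, and the axiom \mrk; then one substitutes for that axiom the derivation of \mrk\ from \emeno given in \cref{sec:real-mark-princ}. Every universal quantification occurring in this composite derivation --- the one in \mrk, the one in the decidability instance $\emp{}\lor\emp{}^{\bot}$, and the left hypothesis $\forall\alpha\emp{}$ of \emeno --- ranges only over atomic matrices, so no $\forall$ over a compound formula is ever introduced or eliminated; equivalently, one works throughout with the $\exists$-translated forms of these formulas, \mrk\ becoming a first-form Markov instance $\neg\neg\exists\alpha\emp{}\to\exists\alpha\emp{}$. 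Hence the replacement stays inside $\HA^\exists + \EM_1^{-}$, and performing it at every \emmeno node produces the required proof.

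The step I expect to be the main obstacle is this last one: checking carefully that, once we are inside the $\exists$-translated setting --- whose formulas contain no genuine universal quantifier apart from the restricted hypotheses supplied by the $\EM_1^{-}$ rule and by decidability of atoms --- the chain of equivalences relating \emmeno, \mrk, and \emeno still goes through, i.e.\ that none of the sub-derivations imported from the previous sections secretly relies on $\forall$-introduction or $\forall$-elimination over a compound formula. Everything else is routine bookkeeping on the shape of the normalized proof.
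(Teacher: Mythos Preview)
Your proposal is correct and follows essentially the same route as the paper: apply \cref{lemma:normal-pa-exists}, observe that every remaining \emme\ is an \emmeno\ because the conclusion is $\Sigma_1^0$, and then replace \emmeno\ by \emeno\ via the equivalences with \mrk\ established earlier. The paper compresses the last step into a single sentence (``from \cref{sec:full-excluded-middle} we know that \emmeno\ is equivalent to \emeno''), whereas you spell out that one passes through \mrk\ using both \cref{sec:full-excluded-middle} and the derivation in Chapter~\ref{cha:markovs-principle-ha}; your version is the more accurate one, since \cref{sec:full-excluded-middle} alone only gives $\textsc{em}^{-}\leftrightarrow\textsc{mrk}$.

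Regarding the concern you flag as the main obstacle: it is a reasonable thing to check, but it dissolves once you notice that $\HA^\exists$ is simply $\HA$ with its non-logical axioms replaced by their $\exists$-translations; all the intuitionistic logical rules, including $\forall$-introduction and $\forall$-elimination, remain available. The point of the $\exists$-translation was solely to ensure that the \emph{input} proof contains no $\forall$-rules, so that the permutations of \cref{lemma:normal-pa-exists} go through. After the permutations are done, nothing forbids the auxiliary derivations (the equivalence of \emmeno, \mrk, and \emeno) from using $\forall$ over atomic matrices; the resulting proof is still a proof in $\HA^\exists+\EM_1^-$, and \cref{thm:exists-translation} then lets one return to $\HA+\EM_1^-$. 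So there is no need to recast \mrk\ in its $\neg\neg\exists$-form or otherwise avoid $\forall$ in that step.
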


\begin{proof}
  By the \cref{lemma:normal-pa-exists} we know we can transform any proof in $\PA^\exists$ of a statement $\exists \alpha \emp{}$ into a proof of the form

\begin{prooftree}
  \AxiomC{$\mathcal{D}_1$}
  \noLine
  \UnaryInfC{$\exists \alpha \emp{}$}
  \AxiomC{$\mathcal{D}_2$}
  \noLine
  \UnaryInfC{$\exists \alpha \emp{}$}
  \AxiomC{\dots}
  \AxiomC{$\mathcal{D}_n$}
  \noLine
  \UnaryInfC{$\exists \alpha \emp{}$}
  \doubleLine
  \RightLabel{\emme}
  \QuaternaryInfC{$\exists \alpha \emp{}$}
\end{prooftree}

Since every application of \emme happens on a simply existential statement, we can directly replace them with \emmeno. Moreover, from \cref{sec:full-excluded-middle} we know that \emmeno is equivalent to \emeno, and thus we obtain a proof in $\HA^\exists + \EM_1^-$ as desired.
\end{proof}

Finally, we can conclude the section with the main theorem

\begin{thm}
  If $\PA \vdash \exists x \ \emp{}$ with $\emp{}$ atomic, then $\HA + \EM_1^- \vdash \exists x \ \emp{}$
\end{thm}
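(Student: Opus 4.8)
Looking at this final theorem, it's the culmination of the chapter and essentially follows by chaining together the results already established. Let me sketch the proof plan.

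The plan is to chain together the three main tools developed in this chapter: the $\exists$-translation (\cref{thm:exists-translation} and \cref{prop:pa-to-pa-e}), the permutation-and-normalization result (\cref{lemma:normal-pa-exists}), and the embedding of $\PA^\exists$ proofs of $\Sigma_1^0$ statements into $\HA^\exists + \EM_1^-$ (\cref{lemma:pa-ha-exists}). The key observation is that the $\exists$-translation acts as the identity on $\Sigma_1^0$ formulas: if $\emp{}$ is atomic then $\emp{}^\exists = \emp{}$, and $(\exists x\, \emp{})^\exists = \exists x\, \emp{}^\exists = \exists x\, \emp{}$, so $\exists x\, \emp{}$ is literally a fixed point of the translation. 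This means moving back and forth through the translation costs us nothing on the formula we care about.

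First I would start from the hypothesis $\PA \vdash \exists x\, \emp{}$. By \cref{prop:pa-to-pa-e}, this gives $\PA^\exists \vdash (\exists x\, \emp{})^\exists$, which by the fixed-point remark is just $\PA^\exists \vdash \exists x\, \emp{}$. Next, since $\exists x\, \emp{}$ is a $\Sigma_1^0$ statement, \cref{lemma:pa-ha-exists} applies directly and yields a proof in $\HA^\exists + \EM_1^-$ of $\exists x\, \emp{}$. Finally, to get back to $\HA + \EM_1^-$, I would use \cref{thm:exists-translation}, which tells us $\HA \vdash \HA^\exists$, i.e. every axiom instance used by $\HA^\exists$ is provable in $\HA$; since the $\EM_1^-$ rule is the same in both systems, a proof in $\HA^\exists + \EM_1^-$ can be converted into one in $\HA + \EM_1^-$ by replacing each use of a $\HA^\exists$-axiom with its $\HA$-derivation. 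Hence $\HA + \EM_1^- \vdash \exists x\, \emp{}$.

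I do not expect any serious obstacle here: the theorem is deliberately assembled so that each implication has already been proven as a lemma or proposition earlier in the chapter. The only point requiring a moment's care is making the fixed-point observation explicit (so that the $\Sigma_1^0$ hypothesis is preserved exactly through the translation and we land in the scope of \cref{lemma:pa-ha-exists}), and being careful that the final step passes through \cref{thm:exists-translation} rather than claiming $\HA^\exists$ and $\HA$ are literally the same system — they are not, but $\HA$ proves all the translated axioms, which is all we need.

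\begin{proof}
  Since $\emp{}$ is atomic, we have $\emp{}^\exists = \emp{}$ and hence $(\exists x\, \emp{})^\exists = \exists x\, \emp{}$: the formula $\exists x\, \emp{}$ is a fixed point of the $\exists$-translation. Now suppose $\PA \vdash \exists x\, \emp{}$. By \cref{prop:pa-to-pa-e}, $\PA^\exists \vdash (\exists x\, \emp{})^\exists$, that is, $\PA^\exists \vdash \exists x\, \emp{}$. Since $\exists x\, \emp{}$ is a $\Sigma_1^0$ statement, \cref{lemma:pa-ha-exists} gives a proof of $\exists x\, \emp{}$ in $\HA^\exists + \EM_1^-$. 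Finally, by \cref{thm:exists-translation} every axiom instance of $\HA^\exists$ is provable in $\HA$, and the rule $\EM_1^-$ is common to both systems; replacing in the proof each leaf that is an instance of a translated axiom by its derivation in $\HA$, we obtain a proof of $\exists x\, \emp{}$ in $\HA + \EM_1^-$.
\end{proof}
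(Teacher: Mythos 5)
Your proof is correct and follows exactly the same route as the paper: apply \cref{prop:pa-to-pa-e} to pass to $\PA^\exists$, then \cref{lemma:pa-ha-exists} to land in $\HA^\exists+\EM_1^-$, then \cref{thm:exists-translation} to return to $\HA+\EM_1^-$. Your explicit remark that $\exists x\,\emp{}$ is a fixed point of the $\exists$-translation is a small but welcome clarification that the paper leaves implicit.
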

\begin{proof}
  Given a proof of $\exists x \ \emp{}$ in $\PA$, by \cref{prop:pa-to-pa-e} we can apply the $\exists$-translation and obtain a proof of $(\exists x \ \emp{})^\exists = \exists x \ \emp{}$ in $\PA^\exists$. Then, by \cref{lemma:pa-ha-exists}, we can transform this in a proof in $\HA^\exists+\EM_1^-$. Finally, thanks to \cref{thm:exists-translation}, we know that $\HA +\EM_1^- \vdash \exists x \ \emp{}$.
\end{proof}

%%% Local Variables:
%%% mode: latex
%%% TeX-master: "tesi.tex"
%%% End:

\chapter{Conclusions}
\label{cha:conclusions}
We have seen how the interpretation of Markov's principle in constructive settings has been an historically controversial matter. Kreisel showed by means of the modified realizability that it could not be validated by intuitionistic logic, while Kleene's realizability semantics, although successful in interpreting it, reduced it to a mere unbounded search and thus brought it back to non-constructivity. However, we noted how a much more refined interpretation of Markov's principle was already present in G\"odel's work; W.W. Tait, in a more recent analysis of G\"odel's position on intuitionism, notices how in more modern terms we could state that ``if one is looking for methods of proof which automatically yield algorithms for computing a witness for existential theorems, intuitionistic logic is too narrow'' \cite{Tait06}.

Following more recent lines of research, we introduced the logic $\HA + \EM_1$, a related term calculus following the propositions as types paradigm, and a realizability interpretation of the former into the latter. We saw how the term system of $\HA+\EM_1$ provides a tool to investigate the computational content of Markov's principle, and we interpreted the principle as learning program that gets a witness for the conclusion supposing the assumption does not hold and repeatedly testing it. Moreover, we introduced a restricted version of $\HA + \EM_1$ called $\HA + \EM_1^-$, where we showed that the new logical principle added to the logic is equivalent to Markov's principle. This new system inherits the Curry-Howard correspondence and the realizability interpretation from the one it derives from.

By means of $\HA+\EM_1^-$, we have obtained a new proof of constructivity of intuitionistic arithmetic extended with Markov's principle. Finally, we have generalized the obtained result and shown that Markov's principle is also equivalent to adding to intuitionistic arithmetic the principle $\mathsf{EM}^-$, that is a restricted form of the rule of excluded middle where we are only allowed to use it in disjunction eliminations if the conclusion is simply existential.

This final observation led us to the introduction of a new negative translation in the style of the classic ones by G\"odel and Friedman. Our new translation has the advantage of not changing $\exists$ and $\lor$ when compared to the usual ones, but needs a series of permutation rules to be applied on proofs in order to be useful. Combining these results, we obtained a way to transform classical proofs of $\Sigma_1^0$ statements into proofs in $\HA+\EM_1^-$, thus allowing us to extract programs from any of these proofs.

Our system $\HA+\EM_1^-$ is reminescent of the one of Herbelin \cite{Herbelin10}. Here, a deductive system for intuitionistic logic is extended with the two rules \textsc{throw} and \textsc{catch} and is equipped with a Curry-Howard correspondence:
$$\begin{array}{c} \Gamma \vdash_{\alpha: T, \Delta} p: T\\
 \hline \Gamma\vdash_{\Delta} \texttt{catch}_\alpha p : T
 \end{array}\ \textsc{catch}
 \ \ \ \
\begin{array}{c} \Gamma \vdash_{\Delta} p: T \ \ \ \ (\alpha : T) \in \Delta\\
 \hline \Gamma\vdash_{\Delta} \texttt{throw}_\alpha p : C
 \end{array}\ \textsc{throw}
$$

The reduction rules for the lambda terms \texttt{catch} and \texttt{throw} define a mechanism of delimited exceptions. Herbelin addresses pure first order logic, and obtains for Markov's principle the term

\[ \lambda a. \mathtt{catch}_\alpha \mathtt{efq} \ a (\lambda b. \mathtt{throw}_\alpha b) : \neg \neg T \to T \]
 where $T$ is a $\forall$, $\to$-free formula. The behaviour of the term is similar to the one presented in \cref{sec:real-mark-princ}. Thanks to this, Herbelin proves the constructivity of the logic by showing the disjunctive and existential properties.

However, in this work we have related Markov's principle to the other semi-classical principle $\EM_1^-$, and thus the logical part of the system results much clearer. In addition, we extended a system of arithmetic whereas Herbelin's work addressed pure intuitionistic logic; by presenting also a realizability interpretation, the extracted programs can be interpreted as a way to actually compute the witnesses for existential statements. 

Another related work is \cite{Aschieri14}. Here the authors extend modified realizability, and thus the work has the advantage of using a purely functional language. However, just like modified realizability, the realizability interpretation that is provided does not satisfy subject reduction and is therefore not suitable for the investigation of the logical properties of the system. Moreover, the realizer for Markov's principle is

\[ \lambda z^{(\Nat \to U) \to U}\langle \mathtt{quote} (z \mathtt{mtest}_{\lambda x. P}), \mathtt{if }\ P^\bot [ \mathtt{quote} (z \mathtt{mtest}_{\lambda x . P})/x] \ \mathtt{then} \ \mathtt{tt}0 \ \mathtt{else} \ z \ \mathtt{mtest}_{\lambda x.P} \rangle \]

\noindent
where $\mathtt{mtest}_{\lambda x. P} := \lambda x^\Nat.\mathtt{if}\ P \ \mathtt{then}\ \mathtt{tt}x \ \mathtt{else}\ \mathtt{tt}x$. This is much less clear than what we have seen so far, and relies on an internal comunication system based on the primitive type $U$, terms $\top_0 : U, \top_1 : U, \dots$ and $\bot_0 : U, \bot_1 : U, \dots$ and the reduction rules
\begin{align*}
   &\mathtt{tt} n \mapsto \top_n & \mathtt{ff}n \mapsto \bot_n \\
   &\mathtt{quote} \top_m \mapsto m & \mathtt{quote} \bot_m \mapsto m 
\end{align*}

As mentioned, this thesis has the advantages of a clearer explanation of Markov's principle and of presenting a system that enjoys subject reduction.

\subsubsection{Future Work}
As known for example from the field of proof mining \cite{Kohlenbach08} Markov's principle is fundamental for the aim of extracting constructive information from non purely constructive proofs. A similarly important principle is the \emph{double negation shift}, stated as
\[ \forall x \neg \neg A(x) \to \neg \neg \forall x A(x) \]

for $A$ atomic. In \cite{Ilik12}, Danko Ilik showed that an intuitionistic logic extended with this principle retains the disjunctive and existential properties, using techniques similar to those of Herbelin. In the same work, he mentions that Herbelin had also extended his calculus of delimited control operators to a system proving this principle. Given the relation between Herbelin's work on Markov's principle and our current work, it is interesting to see if one could develop a modified version of $\HA + \EM^-$ that is able to interpret the double negation shift. A candidate could be the system $\IL +\EM$ presented in \cite{Aschieri16}: we conjecture that a version of this system for arithmetic with restrictions similar to those presented in this thesis would be constructive; its relationship with other principles remains to be studied.

%Finally, the problem of finding a direct realizability interpretation of $\HA+\EM^-$ remains open, since this is also the case for $\mathsf{IL}+\EM$ and thus such an interpretation cannot be inherited together with the Curry-Howard system. As of now, one could use the proofs of equivalence we have stated as a translation from $\HA+\EM^-$ to $\HA + \EM_1^-$, but clearly a more direct interpretation would be of more interest.

%%% Local Variables:
%%% mode: latex
%%% TeX-master: "tesi.tex"
%%% End:

\backmatter
\nocite{*}
\printbibliography

\end{document}